\newtheorem{lemma}{Lemma}
\newtheorem{example}{Example}
\newsavebox{\choleskyboxA}
\newsavebox{\choleskyboxB}
\newsavebox{\choleskyboxC}
\newsavebox{\mvbox}
\newsavebox{\spmspvboxA}
\newsavebox{\spmspvboxB}
\newsavebox{\spmvcsrbox}
\newsavebox{\spmsvcsrbox}
\newsavebox{\spmvdiagbox}
\newsavebox{\spmsvdiagbox}
\def\BibTeX{{\rm B\kern-.05em{\sc i\kern-.025em b}\kern-.08em
    T\kern-.1667em\lower.7ex\hbox{E}\kern-.125emX}}
\begin{document}

\begin{lrbox}{\choleskyboxA}
	\begin{lstlisting}[mathescape]
${\tt for(i\texttt{=}0;i\texttt{<}n;i\texttt{++})\{}$
 ${\tt for(j\texttt{=}0;j\texttt{<}i;j\texttt{++})\{}$
  ${\tt for(k\texttt{=}0;k\texttt{<}j;k\texttt{++})\{}$
   ${\tt S_1:A[i][j]\texttt{-=}A[i][k]\times A[j][k];}$
  ${\tt \}}$
  ${\tt if(A[j][j]\texttt{!=}0)}$
   ${\tt S_2:A[i][j]\texttt{/=}A[j][j];}$
 ${\tt \}}$
 ${\tt for(l\texttt{=}0;l\texttt{<}i;l\texttt{++})\{}$
  ${\tt S_3:A[i][i]\texttt{-=}A[i][l]\times A[i][l];}$
 ${\tt\}}$
 ${\tt S_4:A[i][i]\texttt{=}sqrt(A[i][i]);}$
${\tt\}}$
    \end{lstlisting}
\end{lrbox}
\begin{lrbox}{\choleskyboxB}
\begin{lstlisting}[mathescape]
${\tt \texttt {for(int }i=0; i<3; i++)}$
$\ \ \ \ {\tt S_4:A_{val}[2*i+0]\texttt{=}sqrt(A_{val}[2*i+0]);}$
${\tt \texttt{if(}A_{val}[4]\texttt{!=}0)}$
$\ \ \ \ {\tt S_2:A_{val}[6]\texttt{/=}A_{val}[4];}$
${\tt S_3:A_{val}[7]\texttt{-=}A_{val}[6]\times A_{val}[6];}$
${\tt S_4:A_{val}[7]\texttt{=}sqrt(A_{val}[7]);}$
${\tt \texttt{if(}A_{val}[7]\texttt{!=}0)}$
$\ \ \ \ {\tt S_2:A_{val}[10]\texttt{/=}A_{val}[7];}$
${\tt S_3:A_{val}[11]\texttt{-=}A_{val}[10]\times A_{val}[10];}$
${\tt \hdots}$
\end{lstlisting} 
\end{lrbox}
\begin{lrbox}{\choleskyboxC}
\begin{lstlisting}[mathescape]
${\tt \texttt {for(int i=}0; i<15439; i++)}\{$
$\ \ \ \ {\tt S_4:valA[i]\texttt{=}sqrt(valA[i]);}$
$\}$
\end{lstlisting} 
\end{lrbox}
\begin{lrbox}{\mvbox}
\begin{lstlisting}[mathescape]
${\tt for(i\texttt{=}0;i\texttt{<}n;i\texttt{++})\{}$
 ${\tt for(j\texttt{=}0;j\texttt{<}n;j\texttt{++})}$
  ${\tt S:Y[i]\texttt{+=}A[i][j]\times X[j];}$
$\}$ 
\end{lstlisting}
\end{lrbox}
\begin{lrbox}{\spmvcsrbox}
\begin{lstlisting}[mathescape]
${\tt for(i\texttt{=}0;i\texttt{<}n;i\texttt{++})\{}$
 ${\tt for(j\texttt{=}ptr[i];j\texttt{<}ptr[i+1];j\texttt{++})}$
  ${\tt y[i]\texttt{+=}A_{val}[j]\times X[A_{col}[j]];}$
$\}$ 
\end{lstlisting}
\end{lrbox}
\begin{lrbox}{\spmsvcsrbox}
\begin{lstlisting}[mathescape]
${\tt for(i\texttt{=}0;i\texttt{<}5;i\texttt{++})\{}$
 ${\tt for(j\texttt{=}ptr[i];j\texttt{<}ptr[i+1];j\texttt{++})\{}$
  ${\tt if(X[A_{col}[j]] \texttt{!=} 0)}$
   ${\tt Y[i]\texttt{+=}A_{val}[j]\times X[A_{col}[j]];}$
$\}\}$
\end{lstlisting}
\end{lrbox}
\begin{lrbox}{\spmvdiagbox}
\begin{lstlisting}[mathescape]
${\tt for(i\texttt{=}0;i\texttt{<}n;i\texttt{++})\{}$
 ${\tt y[i]\texttt{=}A_{val}[i]\times X[i];}$
$\}$ 
\end{lstlisting}
\end{lrbox}
\begin{lrbox}{\spmsvdiagbox}
\begin{lstlisting}[mathescape]
${\tt for(i\texttt{=}0;i\le 2;i\texttt{++})\{}$
 ${\tt Y_{val}[i]\texttt{+=}A_{val}[i+2]\times X_{val}[0];}$
$\}$ 
${\tt Y_{val}[2]\texttt{+=}A_{val}[5]\times X_{val}[1];}$
\end{lstlisting}
\end{lrbox}

\title{SpComp: A Sparsity Structure-Specific Compilation of Matrix Operations}

\author{Barnali Basak}
\affiliation{
  \institution{Indian Institute of Technology Bombay}
  \country{India}}
\email{bbasak@cse.iitb.ac.in}
\author{Uday P. Khedker}
\affiliation{
  \institution{Indian Institute of Technology Bombay}
  \country{India}}
\email{uday@cse.iitb.ac.in}
\author{Supratim Biswas}
\affiliation{
  \institution{Indian Institute of Technology Bombay}
  \country{India}}
\email{sb@cse.iitb.ac.in}

\begin{abstract}
Sparse matrix operations involve a large number of zero operands which makes most of the operations redundant. 
The amount of redundancy magnifies when a matrix operation repeatedly executes on sparse data. Optimizing matrix 
operations for sparsity involves either reorganization of data or reorganization of computations, performed either 
at compile-time or run-time. Although compile-time techniques avert from introducing run-time overhead, their application either 
is limited to simple sparse matrix operations generating dense output and handling immutable sparse matrices or 
requires manual intervention to customize the technique to different matrix operations.  

We contribute a sparsity structure-specific compilation technique, called {\em SpComp}, that optimizes a sparse 
matrix operation by automatically customizing its computations to the positions of non-zero values of the data. 
Our approach neither incurs any run-time overhead nor requires any manual intervention. It is also applicable to complex matrix 
operations generating sparse output and handling mutable sparse matrices. We introduce a data-flow analysis, 
named \textit{Essential Indices Analysis}, that statically collects the symbolic information about the computations 
and helps the code generator to reorganize the computations. The generated code includes piecewise-regular loops, 
free from indirect references and amenable to further optimization.

We see a substantial performance gain by SpComp-generated Sparse Matrix-Sparse Vector Multiplication (SpMSpV) 
code when compared against the state-of-the-art {\em TACO} compiler and piecewise-regular code generator. On average, we achieve 
${\tt\approx 79\%}$ performance gain against TACO and ${\tt\approx 83\%}$ performance gain against 
the piecewise-regular code generator. When compared against the {\em CHOLMOD} library, SpComp generated 
sparse Cholesky decomposition code showcases ${\tt \approx 65\%}$ performance gain on average.
\end{abstract}

\maketitle

\section{Introduction}
Sparse matrix operations are ubiquitous in computational science areas like circuit simulation, power dynamics, image
processing, structure modeling, data science, etc. The presence of a significant amount of zero values in the sparse 
matrices makes a considerable amount of computations, involved in the matrix operation, redundant. Only the computations 
computing non-zero values remain useful or non-redundant.

In simulation-like scenarios, a matrix operation repeatedly executes on sparse matrices whose positions or indices 
of non-zero values, better known as sparsity structures, remain unchanged although the values in these positions may change. 
For example, Cholesky decomposition in circuit simulation repeatedly decomposes the input matrix in each iteration until 
the simulation converges. The input matrix represents the physical connections of the underlying circuit and thus, 
remains unchanged throughout the simulation, although the values associated with the connections may vary. In such a scenario, 
the redundant computations caused by computing zero values get multi-fold and therefore, it is prudent to claim substantial 
performance benefits by altering the execution based on the fixed sparsity structure.

\begin{figure}[htbp]
	\centering
	\scalebox{.85}{
	\begin{tabular}{ccc}
		\begin{tabular}{cc}
			\includegraphics[height=2.5cm,width=3.3cm]{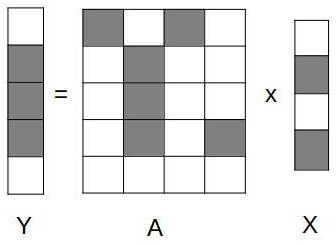}	 
			& 
			\includegraphics[height=2.5cm,width=2.8cm]{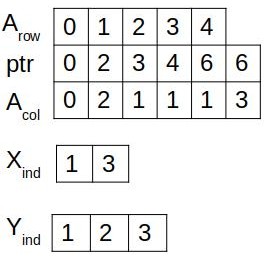}	
		\end{tabular}
		&
		\begin{tabular}{c}
			\usebox{\spmsvcsrbox}
		\end{tabular}
		&
		\begin{tabular}{c}
			\usebox{\spmsvdiagbox}
		\end{tabular} \\
		(a) & (b) & (c)
	\end{tabular}}
	\caption{(a) Sparse matrix operation SpMSpV; multiplication of sparse matrix ${\tt A}$ 
	of size ${\tt 5\times 4}$ and sparse vector ${\tt X}$ of size ${\tt 4\times 1}$, resulting 
	sparse vector ${\tt Y}$ of size ${\tt 5\times 1}$. Colored boxes denote non-zero elements. 
	Sparse matrix ${\tt A}$ stored in CSR format, sparse vectors ${\tt X}$ and ${\tt Y}$ stored 
	in COO format. ${\tt A_{val}}$, ${\tt X_{val}}$, ${\tt Y_{val}}$ hold non-zero values, not depicted here. 
	(b) SpMSpV operation operating on reorganized sparse matrix ${\tt A}$, stored using CSR format.
	(c) Reorganized computations of SpMSpV, customized to the sparsity structure of the output sparse vector ${\tt Y}$.}
	\label{fig:motiv}
\end{figure}

The state-of-the-art on avoiding redundant computations categorically employs either  
\begin{inparaenum}[(a)]
	\item reorganizing sparse data in a storage format such that the generic 
	computations operate only on the non-zero data or 
	\item reorganizing computations to restrict them to non-zero data without requiring any specific reorganization of the data.
\end{inparaenum}
Figure~\ref{fig:motiv} demonstrates the avoidance of redundant computations of Sparse Matrix-Sparse Vector 
Multiplication operation (SpMSpV) by reorganizing data and reorganizing computations. The operation multiplies 
sparse matrix ${\tt A}$ to sparse vector ${\tt X}$ and stores the result in the output sparse vector ${\tt Y}$. 
Figure~\ref{fig:motiv}(b) presents the computations on reorganized matrix ${\tt A}$ stored using Compressed 
Sparse Row (CSR) format. As a result, the non-zero values are accessed using indirect reference 
${\tt X[A_{col}[j]]}$. On the contrary, Figure~\ref{fig:motiv}(c) presents the reorganized SpMSpV computations 
customized to the positions of the non-zero elements of ${\tt Y}$. Clearly, reorganized computations result in 
direct references and a minimum number of computations. 

Approaches reorganizing computations avoid redundant computations by generating sparsity structure-specific execution. 
This can be done either at run-time or at compile-time. Run-time techniques like \textit{inspection-execution}~\cite{ins_exe} 
exploit the memory traces at run-time. The {\em executor} executes the optimized schedule generated by the {\em inspector} 
after analyzing the dependencies of the memory traces. Even with compiler-aided supports, the inspection-execution technique 
incurs considerable overhead at each instance of the execution and thus increases the overall runtime, instead of reducing 
it to the extent achieved by compile-time optimization approaches. 

Instead of reorganizing access to non-zero data through indirections or leaving its identification to 
runtime, it is desirable to symbolically identify the non-zero computations at compile-time and generate 
code that is aware of the sparsity structure. The state-of-the-art that employs a static approach can be 
divided into two broad categories:
\begin{itemize}
	\item A method could focus only on the sparsity structure of the input, thereby avoiding reading 
		the zero values in the input wherever possible. 
		This approach works only when the output is dense and the memory trace is  
		dominated by the sparsity structure of a single sparse data. Augustine et al.~\cite{piecewise2} and 
		Rodr{\'i}guez et al.~\cite{piecewise1} presented a trace-based technique to generate sparsity structure-specific code for 
		matrix operations resulting in dense data.
	\item A method could focus on the sparsity structure of the output, thereby statically computing the positions of 
		non-zero elements in the output from the sparsity structures of the input. This approach works when the output 
		is sparse or the memory trace is dominated by the sparsity structures of multiple sparse data. This can also 
		handle changes in the sparsity structure of the input, caused by the fill-in elements in mutable cases. 

		Such a method can involve a trace-based technique that simply unwinds a program and parses it based on the input to 
		determine the sparsity structure of the output. Although sounds simple, the complexity of this technique bounds 
		to the computations involved in the matrix operation and size of the output, making it a
		resource-consuming and practically intractable for complex matrix operations and large-sized inputs.
		
		Alternatively, a graph-based technique like {\em Symbolic analysis}~\cite{davis1} uses matrix operation-specific 
		graphs and graph algorithms to deduce the sparsity structure of the output. Cheshmi et 
		al.~\cite{sympiler,8665791,Cheshmi:2018:PIT:3291656.3291739} apply this analysis to collect symbolic information 
		and enables further optimization. The complexity of this technique is bound to the number of non-zero elements of 
		the output, instead of its size, making it significantly less compared to the compile-time trace-based technique. 
		However, the Symbolic analysis is matrix-operation specific, so the customization of the technique to different 
		matrix operations requires manual effort. 
\end{itemize}

We propose a {\em data-flow analysis}-based technique, named {\em Sparsity Structure Specific Compilation} (SpComp), that statically 
deduces the sparsity structure of the output from the sparsity structure of the input. Our method advances the state-of-the-art in 
the following ways. 
\begin{itemize}
	\item In comparison to the run-time approaches, our method does not depend on any run-time information, 
		making it a purely compile-time technique. 
	\item In comparison to the piecewise-regular code generator~\cite{piecewise1,piecewise2}, our method 
		handles matrix operations resulting in sparse output, including mutable cases.
	\item In comparison to the compile-time trace-based technique, the complexity of our method is bound to the number 
		of non-zero elements present in the output which is significantly less than the size of the output, making 
		it a tractable technique.  
	\item In comparison to the Symbolic analysis~\cite{davis1}, our method is generic to any matrix operation, without the 
		need for manual customization. 
\end{itemize}

SpComp takes a program performing matrix operation on dense data and sparsity structures of the input sparse data
to compute the sparsity structure of the output and derive the non-redundant computations. The approach avoids computing 
zero values in the output wherever possible, which automatically implies avoiding reading zero values in the input wherever possible. 
Since it is driven by discovering the sparsity structure of the output, it works for matrix operations producing sparse output 
and altering the sparsity structures of the input. From the derived symbolic information, SpComp generates the sparsity 
structure-specific code, containing piecewise-regular and indirect reference-free loops.

\begin{figure*}[htbp]
\begin{center}
\scalebox{.78}{
\begin{tabular}{cc}
	\begin{tabular}{cc}
		\usebox{\choleskyboxA}
		&
		\scalebox{.9}{
		\begin{tabular}{c}
			\includegraphics[width=4.5cm, height=4.5cm]{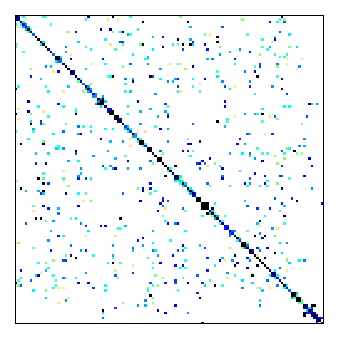}\\
			Size of ${\tt A}$ = ${\tt 494\times 494}$ \\
			NNZ of ${\tt A}$ = ${\tt 1666}$ \\
			Initial sparsity structure of ${\tt A}$ =\\
			{\tt \{(0,0),(0,491),(1,1),(1,491),(2,2),(2,3),}\\
			{\tt (3,2),(3,3),(3,4),(3,491),(4,3),(4,4),$\ldots$\}} \\
		\end{tabular}} \\
		(i) & (ii)
	\end{tabular}
	& 
	\begin{tabular}{c}
		Fill-in elements of ${\tt A}$ =\\
		{\tt \{(38,28),(38,29),(38,30),(38,33),}\\
		{\tt (38,34),(38,36),(38,37),(46,38),}\\
		{\tt (79,78),(82,67),(82,72),(82,73),$\ldots$\}} \\
		(i) \\
		\usebox{\choleskyboxB} \\
		(ii)
	\end{tabular}
	\\
	(a) & (b) \\
\end{tabular}}
\caption{(a) Input to SpComp: (i) Code for Cholesky decomposition operating on symmetric positive definite dense matrix ${\tt A}$,
	(ii) Initial sparsity structure of input matrix ${\tt A}$.
	(b) Output of SpComp: (i) Statically identified Fill-in elements,
	(ii) Snippet of Cholesky decomposition code customized to the sparsity structure of the output matrix ${\tt A}$, ${\tt A_{val}}$ is a one-dimensional 
	array storing the values of non-zero elements of ${\tt A}$.} 
\label{fig:motiv1}
\end{center}
\end{figure*}

\begin{example}{\rm
	Figure~\ref{fig:motiv1}(a) shows the inputs to the SpComp;
	\begin{inparaenum}[(i)]
	\item the code performing forward Cholesky decomposition of symmetric positive definite dense matrix ${\tt A}$ and 
	\item the initial sparsity structure of the input matrix ${\tt 494\_bus}$ selected from the
		\textit{Suitesparse Matrix Collection}~\cite{spCollection}. 
	\end{inparaenum}
	Figure ~\ref{fig:motiv1}(b) illustrates the output of SpComp;
	\begin{inparaenum}[(i)]
	\item fill-in elements of ${\tt 494\_bus}$ along with the initial sparsity structure generate the 
		sparsity structure of the output matrix. 
	\item Cholesky decomposition code, customized to ${\tt 494\_bus}$ sparse matrix. 
	\end{inparaenum}

	Note that, although there exists a read-after-write (true) dependency from statement ${\tt S_3}$ to statement 
	${\tt S_4}$ in the program present in Figure~\ref{fig:motiv1}(ai), 
	a few instances of ${\tt S_4}$ hoist above ${\tt S_3}$ in the execution due to 
	spurious dependencies produced by zero-value computations.  
}\framebox{}
\end{example}

The rest of the paper is organized as follows. Section~\ref{sec:overview} provides an overview of SpComp. 
Section~\ref{sec:technical} describes the first step of SpComp, which identifies the indices involving the 
computations leading to non-zero values through a novel data flow analysis called Essential Indices Analysis.
Section~\ref{sec:codegen} explains the second step which generates the code. Section~\ref{sec:result} presents 
the empirical results. Section~\ref{sec:relwork} describes the related work. Section~\ref{sec:conclusion} concludes 
the paper. 

\section{An Overview of SpComp}
\label{sec:overview}
\begin{figure}[htbp]
\begin{center}
\begin{tabular}{c}
	\includegraphics[width=7.5cm, height=1.8cm]{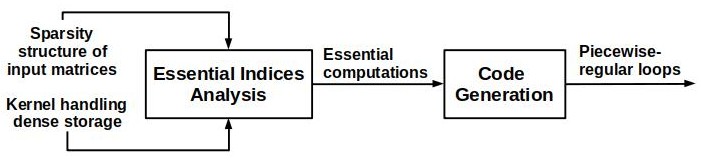}
	\end{tabular}
	\caption{Block diagram of SpComp.}
	\label{fig:block}
\end{center}
\end{figure}

As depicted in Figure~\ref{fig:block}, SpComp has two modules performing (a) \textit{essential indices analysis} 
and (b) \textit{code generation}. The essential indices analysis module constructs an \textit{Access Dependence Graph} (ADG) 
from the program and performs a data flow analysis, named \textit{Essential Indices Analysis}. The analysis effectively 
identifies the \textit{essential data indices} of the output matrix and \textit{essential iteration indices} of the 
iteration space. Essential data indices identify the indices of non-zero elements which construct the underlying sparse 
data storage beforehand, without requiring any modification during run-time. Essential iteration indices identify the 
iteration points in the iteration space that must execute to compute the values of the non-zero elements and facilitate 
the generation of piecewise-regular loops.

\begin{figure}[htbp]
	\begin{center}
		\scalebox{.9}{
		\begin{tabular}{ccccc}
			Fill-in & \multicolumn{4}{c}{Essential iteration indices of} \\
			elements of {\tt A} & ${\tt S_1}$ & ${\tt S_2}$ & ${\tt S_3}$ & ${\tt S_4}$ \\
			\hline 
			${\tt (38,27)}$ & ${\tt (9,8,7)}$ & ${\tt (3,2)}$ & ${\tt (3,2)}$ & ${\tt (0)}$ \\
			${\tt (38,28)}$ & ${\tt (38,28,27)}$ & ${\tt (4,3)}$ & ${\tt (4,3)}$ & ${\tt (1)}$ \\
			${\tt (38,29)}$ & ${\tt (38,29,28)}$ & ${\tt (5,4)}$ & ${\tt (5,4)}$ & ${\tt (2)}$ \\
			${\tt (38,32)}$ & ${\tt (38,30,29)}$ & ${\tt (6,5)}$ & ${\tt (8,7)}$ & ${\tt (3)}$ \\
			${\tt (38,33)}$ & ${\tt (38,33,32)}$ & ${\tt (8,7)}$ & ${\tt (9,6)}$ & ${\tt (4)}$ \\
			$\hdots$ & $\hdots$ & $\hdots$ & $\hdots$ & $\hdots$ 
		\end{tabular}}
		\caption{Essential data indices of sparse matrix {\tt A} and essential iteration indices of 
		statements ${\tt S_1}$, ${\tt S_2}$, ${\tt S_3}$, and ${\tt S_4}$ generated by the essential 
		analysis module for the example shown in Figure~\ref{fig:motiv1}. Fill-in elements with initial 
		non-zero elements of {\tt A} construct the set of essential data indices. }
	\label{fig:motiv2}
	\end{center}
\end{figure}

\begin{example}{\rm
For our motivating example in Figure~\ref{fig:motiv1}, the essential indices analysis generates the essential data 
indices and essential iteration indices for statements ${\tt S_1}$, ${\tt S_2}$, ${\tt S_3}$, and ${\tt S_4}$ as shown 
in Figure~\ref{fig:motiv2}. Note that this analysis is an abstract interpretation of the program with
abstract values, we do not compute the actual expressions with concrete values. The input sparse matrix {\tt 494\_bus} 
is represented by the array {\tt A} in the code, which is both the input and the output matrix. 

The essential data indices of the output matrix comprise the indices of non-zero elements of the input matrix and the 
\textit{fill-in} elements denoting the indices whose values are zero in the input but become non-zero in the output.
\footnote{The converse (i.e. a non-zero value of an index in the input becoming zero at the same index in the output) 
is generally not considered explicitly in such computations and are performed by default.}
Fill-in element ${\tt (38,27)}$ identifies ${\tt A[38][27]}$ whose value changes from zero to non-zero during the 
execution of the program. The essential iteration index ${\tt (0)}$ of statement ${\tt S_4}$ identifies 
${\tt A[0][0]=sqrt(A[0][0])}$ as a statement instance that should be executed to preserve the semantics of the program. 
}\framebox{}
\end{example}

At first, the code generation module finds the timestamps of the essential iteration indices and lexicographically orders them to 
construct the execution trace, without any support for the out-of-order execution. Then it simply finds the pieces of execution trace 
that can be folded back into regular loops. The module also constructs the memory access trace caused by the execution order and mines 
the access patterns for generating the subscript functions of the regular loops. Note that, the generated code 
keeps the {\em if} conditions to avoid division by zero during execution. 
\begin{example}{\rm
The execution trace $\langle {\tt S_4,0}\rangle$ $\to$ $\langle{\tt S_4,1}\rangle$ $\to$ $\langle{\tt S_4,2}\rangle$ 
$\to$ $\langle{\tt S_2,3,2}\rangle$ $\to$ $\langle{\tt S_3,3,2}\rangle$ $\to$ $\langle{\tt S_4,3}\rangle$ $\to$ $\langle{\tt S_2,4,3}\rangle$ 
$\to$ $\langle{\tt S_3,4,3}\rangle$ $\to$ $\ldots$ is generated by the lexicographic order of the timestamps where 
$\langle{\tt S_k,i,j}\rangle$ represents iteration index ${\tt (i,j)}$ of statement~${\tt S_k}$. It is evident that the 
piece of execution trace $\langle {\tt S_4,0}\rangle$ $\to$ $\langle{\tt S_4,1}\rangle$ $\to$ $\langle{\tt S_4,2}\rangle$ 
can be folded back in a loop. The corresponding memory access trace ${\tt A[0][0]}$ $\to$ ${\tt A[1][1]}$ $\to$ ${\tt A[2][2]}$ 
creates a one-dimensional subscript function ${\tt valA[2\times i+0]|0\le i \le 2}$. 
${\tt valA}$ represents the one-dimensional array storing the non-zero values of ${\tt A}$ and ${\tt 2\times\!i+0|0\le i\le 2}$ 
represents the positions of ${\tt A[0][0]}$, ${\tt A[1][1]}$, and ${\tt A[2][2]}$ in the sparse data storage. The generated 
code snippet is presented in Figure~\ref{fig:motiv1}(bii). 
}\framebox{}
\end{example}

\section{Essential Indices Analysis}
\label{sec:technical}
In sparse matrix operations, we assume that the default values of matrix elements are zero. The efficiency of a sparse matrix operation 
lies in avoiding the computations leading to zero or default values. We call such computations as default computations. Computations 
leading to non-zero values are non-default computations.

We refer to the data indices of all input and output matrices holding non-zero values as {\em essential data indices}. 
As mentioned before, we have devised a data flow analysis technique called {\em Essential Indices analysis} 
that statically computes the essential data indices of output matrices from the essential data indices of input matrices. 
We identify all the iteration indices of the loop computing non-default computations as {\em essential iteration indices}. 
Here we describe the analysis by defining {\em Access Dependence Graph} as the data flow analysis graph in Subsection~\ref{sec:adg}, 
the domain of data flow values in Subsection~\ref{sec:dfv}, and the transfer functions with the data 
flow equations in Subsection~\ref{sec:tf}. Finally, we prove the correctness of the analysis in Subsection~\ref{sec:correctness}.

\subsection{Access Dependence Graph}
\label{sec:adg}
Conventionally, data flow analysis uses the {\em Control Flow Graph} (CFG) of a program to compute data flow information at 
each program point. However, CFG is not suitable for our analysis because the set of information computed over CFG of a loop is an
over-approximation of the union of information generated in all iterations. Thus, information gets conflated across all
iterations, and no distinction exists between the fact that information generated in ${\tt i}$-th iteration cannot be used in iterations 
${\tt j}$ if ${\tt j\leq i}$.

SpComp accepts \textit{static control parts} (SCoP)~\cite{10.1007/978-3-642-11970-5_16,10.1007/978-3-540-24644-2_14} 
of a program which is a sequence of perfectly and imperfectly nested loops where loop bounds and subscript 
functions are affine functions of surrounding loop iterators and parameters. For essential indices analysis, 
we model SCoP in the form of an {\em Access Dependence Graph} (ADG). ADG captures 
\begin{inparaenum}[(a)]
\item data dependence, i.e., accesses of the same memory locations, and
\item data flow, i.e., the flow of a value from one location to a different location.
\end{inparaenum}
They are represented by recording flow, anti and output data dependencies, and the temporal order of read and write 
operations over distinct locations. This modeling of dependence is different from the modeling of dependence in a 
{\em Data Dependence Graph} (DDG)~\cite{10.5555/502981,10.5555/535430} that models data dependencies among loop 
statements which are at a coarser level of granularity.
	 
ADG captures dependencies among access operations which are at a finer level of granularity compared 
to loop statements. Access operations on concrete memory locations, i.e., the locations created at run-time, 
are abstracted by access operations on abstract memory locations that conflate concrete memory locations 
accessed by a particular array access expression. For example, the write access operations on concrete memory locations at statement 
${\tt S_1}$ in Figure~\ref{fig:motiv1}(a) are accessed by the access expression ${\{\tt A[i][j]|0\le i<n, 0\le j<i\}}$.
	
ADG handles the affine subscript function of the form ${\tt \sum\limits_{k=1}^n{a_k\times i_k + c}}$, assuming ${\tt a_k}$ and ${\tt c}$ be 
constants and ${\tt i_k}$ be an iteration index. A set of concrete memories read by an affine array expression 
${\{\tt A[f(i_1,\ldots,i_n)]}$ ${\tt |lb_l\le i_l<ub_l, 1\le l<n}\}$ at statement ${\tt S_k}$ is denoted by an access operation 
${\tt r^k_{A[f(i_1,\ldots,i_n)]}}$, where ${\tt lb_l}$ and ${\tt ub_l}$ denote lower and upper bounds of regular or irregular loops. 
Similarly, a set of concrete memories written by the same array expression at statement ${\tt S_l}$ is denoted by an access operation 
${\tt w^l_{A[f(i_1,\ldots,i_n)]}}$. Note that, the bounds on the iteration indices ${\tt i_1,\ldots,i_n}$ become implicit 
to the access operation. For code generation, we concretize an abstract location ${\tt A[i][j]}$ into concrete memory 
locations ${\tt A[1][1]}$, ${\tt A[1][2]}$ etc. using the result of essential indices analysis as explained in Section~\ref{sec:tf}.

ADG captures the temporal ordering of access operations using edges annotated with a dependence direction that models 
the types of dependencies. Dependence direction ${\tt <}$, ${\tt \leq}$ and ${\tt <}$ model flow, anti and output 
data dependencies, whereas dependence direction ${\tt =}$ captures the data flow between distinct memory locations. 
Note that dependence direction ${\tt >}$ is not valid as the source of a dependency can not be executed after the target. 

\begin{figure}[htbp]
	\begin{center}
	\scalebox{.9}{
	\begin{tabular}{cc}
		\begin{tabular}{c}
		\usebox{\mvbox}
		\end{tabular}
	 	&
		\scalebox{.75}{
	 	\begin{tabular}{c}
	 	\includegraphics[width=3.6cm, height=4cm]{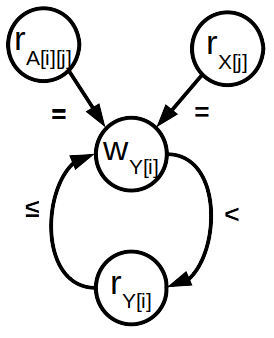}
		\end{tabular}} \\
	 	(a) & (b)
	\end{tabular}}
	\end{center}
	\caption{(a) An SCoP of Matrix-Vector Multiplication program operating on dense data and (b) The corresponding ADG.}
	\label{fig:adg1}
\end{figure}

\begin{example}{\rm
Consider statement \text{${\tt Y[i]\!=\!Y[i]\!+\!A[i][j]\!\times\! X[j]}$}
in Figure~\ref{fig:adg1}(a). It is evident that there is an anti dependency from the read access of ${\tt \{Y[i]|0\le i<n\}}$, 
denoted as ${\tt r_{Y[i]}}$, to write access of ${\tt\{Y[i]|0\le i<n\}}$, denoted as ${\tt w_{Y[i]}}$. 
The anti-dependency is represented by an edge \text{${\tt r_{Y[i]}} \xrightarrow{\leq} {\tt w_{Y[i]}}$} where 
the direction of the edge indicates the ordering and the edge label $\leq$ indicates that it is an anti-dependency.
	 
Similarly, there is a flow dependency from ${\tt w_{Y[i]}}$ to ${\tt r_{Y[i]}}$. This is represented by an 
edge \text{${\tt w_{Y[i]}} \xrightarrow{<} {\tt r_{Y[i]}}$} where direction of the edge indicates the ordering 
and the edge label ${\tt <}$ denotes the flow dependency.
	 
The data flow from ${\tt r_{A[i][j]}}$ to ${\tt w_{Y[i]}}$ and ${\tt r_{X[j]}}$ to ${\tt w_{Y[i]}}$ are denoted 
by the edges \text{${\tt r_{A[i][j]}} \xrightarrow{=} {\tt w_{Y[i]}}$} and \text{${\tt r_{X[j]}} \xrightarrow{=} {\tt w_{Y[i]}}$} 
respectively where the directions indicate the ordering and the edge labels $=$ indicate that these are data flow.
}\framebox{}
\end{example}

Each vertex ${\tt v}$ in the ADG ${\tt G=(V, E)}$ is called an access node, and the entry and exit points of each 
access node are access points. Distinctions between entry and exit access points are required for formulating 
transfer functions and data flow equations in Section~\ref{sec:tf}.

Formally, ADG captures the temporal and spatial properties of data flow. Considering every statement instance 
as atomic in terms of time, the edges in an ADG have associated temporal and spatial properties, as explained below. 
	 
Let ${\tt r_P}$ and ${\tt w_Q}$ respectively denote read and write access nodes. 
\begin{itemize}
	\item For edge \text{${\tt r_P}\xrightarrow{=}{\tt w_Q}$} where ${\tt mem(r_P)\cap mem(w_Q)=\emptyset}$, edge label ${\tt =}$ 
	implies that ${\tt w_Q}$ executes in the same statement instance as ${\tt r_P}$ and captures data flow.
	\item For edge \text{${\tt r_P}\xrightarrow{\leq}{\tt w_Q}$} where ${\tt mem(r_P)\cap mem(w_Q)\neq\emptyset}$, 
	edge label ${\tt \leq }$ implies that ${\tt w_Q}$ executes either in the same statement instance as ${\tt r_P}$ or 
	in a statement instance executed later and captures anti dependencies. 
	\item For edge \text{${\tt w_P}\xrightarrow{<}{\tt r_Q}$} where ${\tt mem(w_P)\cap mem(r_Q)\neq\emptyset}$, edge label ${\tt <}$ 
	implies that ${\tt r_Q}$ executes in a statement instance executed after 
	the execution of ${\tt w_P}$ and captures flow dependencies. 
	\item For edge \text{${\tt w_P}\xrightarrow{<}{\tt w_Q}$} where ${\tt mem(w_P)\cap mem(w_Q)\neq\emptyset}$, edge label ${\tt <}$ 
	implies that ${\tt w_Q}$ executes in a statement instance executed after 
	the execution of ${\tt w_P}$ and captures output dependencies. 
	\end{itemize}

\subsection{Domain of Data Flow Values}
\label{sec:dfv}
Let the set of data indices of an ${\tt n}$-dimensional matrix ${\tt A}$ of size ${\tt m_1\times\ldots\times m_n}$ be represented
as $\mathcal{D}{\tt_A^n}$ where ${\tt A}$ has data size ${\tt m_k}$ at dimension ${\tt k}$. Here 
$\mathcal{D}{\tt_A^n=\{}\vec{\tt d}{\tt |(0,\ldots,0)\le}\ \vec{\tt d}\ {\tt \le (m_1,\ldots,m_n)\}}$ 
where vector $\vec{\tt d}$ represents a data index of matrix ${\tt A}$.
If ${\tt A}$ is sparse in nature then $\vec{\tt d}$ is an essential data index if ${\tt A[}\vec{\tt d}{\tt ]\ne 0}$. 
The set of essential data indices of sparse matrix ${\tt A}$ is represented as $\mathbb{D}{\tt _A^n}$ such that 
$\mathbb{D}{\tt _A^n}\subseteq\mathcal{D}{\tt_A^n}$. For example, $\mathcal{D}{\tt _A^2}$ of a ${\tt two}$-dimensional 
matrix of size ${\tt 3\times 3}$ is ${\tt \{}$${\tt(0,0)}$, ${\tt(0,1)}$, ${\tt(0,2)}$, ${\tt(1,0)}$, ${\tt(1,1)}$, 
${\tt(1,2)}$, ${\tt(2,0)}$, ${\tt(2,1)}$, ${\tt(2,2)\}}$, the set of all data indices of ${\tt A}$. If ${\tt A}$ is 
sparse with non-zero elements at ${\tt A[0][0]}$, ${\tt A[1][1]}$, and ${\tt A[2][2]}$ then $\mathbb{D}{\tt _A^2}$ 
= ${\tt \{(0,0), (1,1), (2,2)\}}$. Thus $\mathbb{D}{\tt _A^2}\subseteq\mathcal{D}{\tt _A^2}$.
	 
In this analysis, we consider the union of all data indices of all input and output matrices as \textit{data space} ${\mathcal D}$.
The union of all essential data indices of all input and output matrices is considered as 
the {\em domain of data flow values} ${\mathbb D}$ such that $\mathbb{D}\subseteq{\mathcal D}$.
For our analysis each essential data index of ${\mathbb D}$ is annotated with the name of the origin matrix. For
an ${\tt n}$-dimensional matrix ${\tt A}$ each essential data index thus represents an ${\tt n+1}$ dimensional vector 
$\vec{\tt d}$ where the ${\tt 0}$-th position holds the name of the matrix, ${\tt A}$. For example, in the matrix-vector 
multiplication of Figure~\ref{fig:adg1}(a), let ${\mathbb D}{\tt _A^2=\{(0,0),(1,1),(2,2)\}}$, ${\mathbb D}{\tt _X^1=\{(1),(2)\}}$ 
and ${\mathbb D}{\tt _Y^1=\{(1),(2)\}}$. Thus, the domain of data flow values ${\mathbb D}$ is ${\tt=\{(A,0,0),}$ ${\tt(A,1,1),}$ 
${\tt(A,2,2),}$ ${\tt(X,1),}$ ${\tt(X,2),}$ ${\tt(Y,1),}$ ${\tt(Y,2)\}}$.
 
In the rest of the paper data index $\vec{\tt d}$ is denoted as ${\tt d}$ for convenience. The value at data 
index ${\tt d}$ is abstracted as either zero ({\tt Z}) or nonzero ({\tt NZ}) based on the concrete value at ${\tt d}$. Note that 
the domain of concrete values, ${\tt cval}$, at each data index ${\tt d}$ is a power set of $\mathbb{R}$, which is the set of real numbers. 
Our approach abstracts the concrete value at each ${\tt d}$ by ${\tt val(d)}$ defined as follows. 
	\begin{eqnarray}
		{\tt val(d)} = 
	\begin{cases}
		{\tt Z} &\quad\text{if }{\tt cval(d) = \{0\}} \\
		{\tt NZ} & \quad\text{otherwise}
	\end{cases}
	\end{eqnarray}
	 
The domain of values at each data index ${\tt d}$ forms a component lattice $\hat{\mathbb{L}}=\langle\{{\tt Z, NZ}\},\sqsubseteq\rangle$, 
where ${\tt NZ}\sqsubseteq{\tt Z}$ and $\sqsubseteq$ represents the partial order. 
	 
A data index ${\tt d}$ is called \textit{essential} if ${\tt val(d)=NZ}$. As the data flow value at any access point holds the set 
of essential data indices $\mathbb{D}^\prime$ such that $\mathbb{D}^\prime\subseteq 2^\mathcal{D}$, 
the data flow lattice is thus represented as $\langle 2^\mathcal{D},\supseteq\rangle$ where partial order is a superset relation.
	
\subsection{Transfer Functions}
\label{sec:tf}
This section formulates the transfer functions used to compute the data flow information of all data flow variables and presents 
the algorithm performing Essential Indices Analysis. For an ADG, ${\tt G=(V,E)}$, the data flow variable ${\tt Gen_n}$
captures the data flow information generated by each access node ${\tt n\in V}$, and the data flow variable ${\tt Out_n}$ 
captures the data flow information generated at the exit of each node ${\tt n\in V}$.

Let $\mathbb{D}{\tt ^0}$ be the initial set of essential data indices that identifies the indices of non-zero elements of input matrices. 
In Essential Indices Analysis, ${\tt Out_n}$ is defined as follows.
	\begin{eqnarray} 
	\label{eq:out}
		{\tt Out_n} = 
	\begin{cases}
		\quad\mathbb{D}{^0} & \quad\text{if }{\tt Pred_n}=\emptyset \\
		\bigcup\limits_{{\tt p\in Pred_n}}({\tt Out_p\cup Gen_n})& \quad \text{if }{\tt n}\text{ is write} \\
		\bigcup\limits_{{\tt p\in Pred_n}}{\tt Out_p} & \quad \text{otherwise} 
	\end{cases}
	\end{eqnarray}
	${\tt Pred_n}$ denotes the set of predecessors of each node ${\tt n}$ in the access dependence graph.
	 
Equation~\ref{eq:out} initializes ${\tt Out_n}$ to $\mathbb{D}{\tt ^0}$ for access node ${\tt n}$ that does not have any predecessor. 
Read access nodes do not generate any essential data indices; they only combine the information of their predecessors. 
A write access node typically computes the arithmetic expression associated with it and generates a set of essential data 
indices as ${\tt Gen_n}$. Finally, ${\tt Gen_n}$ is combined with the out information of its predecessors to compute 
${\tt Out_n}$ of write node ${\tt n}$. 
	 
Here we consider the statements associated with write access nodes and admissible in our analysis. They are of the form 
${\tt e_1:d=d^\prime}$, ${\tt e_2:d=op(d^\prime)}$ and ${\tt e_3:d=op(d^\prime,d^{\prime\prime})}$ where ${\tt e_1}$ is 
copy assignment, ${\tt e_2}$ uses unary operation and ${\tt e_3}$ uses binary operation. Here ${\tt d}$, ${\tt d^\prime}$, 
and ${\tt d^{\prime\prime}}$ are data indices. 
	 
Instead of concrete values, the operations execute on abstract values $\{{\tt Z,NZ}\}$. Below we present the evaluation of 
all valid expressions on abstract values. Note that the unary operations negation, square root, etc. return the same abstract 
value as input.\footnote{Unary operations such as floor, ceiling, round off, truncation, saturation, shift
etc. that may change the values are generally not used in sparse matrix operations.}
Thus evaluation effect of expressions {\tt $e_1$} and {\tt $e_2$} are combined into the following.
	\begin{equation}
		\label{eq:op1}
		{\tt val(d) = val(d^\prime)}
	\end{equation}
	 
We consider the binary operations addition, subtraction, multiplication, division, and modulus. Thus the 
evaluation of expression ${\tt e_3}$ is defined as follows for the aforementioned arithmetic operations.
	\begin{itemize}
		\item If ${\tt op}$ is addition or subtraction then 
	\begin{align}
		\label{eq:op2}
		{\tt val(d)} = 
	\begin{cases}
		{\tt NZ}& \text{if } {\tt val(d^\prime)}={\tt NZ}\vee {\tt val(d^{\prime\prime})}={\tt NZ} \\
	{\tt Z} & \text{otherwise}
	\end{cases}
	\end{align}
	\item If ${\tt op}$ is multiplication then 
	\begin{align}
		\label{eq:op3}
		{\tt val(d)} = 
	\begin{cases}
		{\tt NZ}&\text{if } {\tt val(d^\prime)}={\tt NZ}\wedge {\tt val(d^{\prime\prime})}={\tt NZ} \\
	{\tt Z}& \text{otherwise}
	\end{cases}
	\end{align}
	\item If ${\tt op}$ is division or modulus then 
	\begin{align}
		\label{eq:op4}
		{\tt val(d)} = {\tt val(d^\prime)}\quad\text{if } {\tt val(d^{\prime\prime})\neq Z}
	\end{align}
	\end{itemize}
	 
Note that, the addition and subtraction operations may result in zero values due to numeric cancellation while 
operating in the concrete domain. This means {\tt cval(d)} can be zero when ${\tt cval(d^\prime)}$ and ${\tt cval(d^{\prime\prime})}$
are non-zeroes. In this case, our abstraction over-approximates ${\tt cval(d)}$ as ${\tt NZ}$, which is a safe approximation.

Division or modulus by zero is an undefined operation in the concrete domain and is protected by the condition on the denominator value 
${\tt cval(d^{\prime\prime})\neq 0}$. We protect the same in the abstract domain by the condition ${\tt val(d^{\prime\prime})\neq Z}$, 
as presented in Equation~\ref{eq:op4}. Although handled, the conditions are still part of the generated code to preserve the semantic 
correctness of the program. For example, the {\em if} conditions in the sparsity structure-specific Cholesky decomposition code in 
Figure~\ref{fig:motiv1}(bii) preserve the semantic correctness of the program by prohibiting the divisions by zero values in the 
concrete domain. In this paper, we limit our analysis to simple arithmetic operations. However, similar abstractions could be defined 
for all operations by ensuring that no possible non-zero result is abstracted as zero.
	 
Now that we have defined the abstract value computations for different arithmetic expressions, 
below we define ${\tt Gen_n}$, which generates the set of essential data indices for write access node ${\tt n}$.
	\begin{eqnarray}
	\label{eq:gen}
		{\tt Gen_n} = 
	\begin{cases}
		\big\{{\tt d}\quad | & {\tt(d=d^\prime\vee d=op(d^\prime))}\\
		&\wedge{\tt(d}^\prime\in{\tt Out_p}, {\tt p\in Pred_n)}\\
		&\wedge {\tt(val(d)=NZ})\big\}\\
		\big\{{\tt d}\quad | & ({\tt d}={\tt op(d}^\prime,{\tt d}^{\prime\prime}))\\
		&\wedge({\tt d}^\prime\in{\tt Out_p}, {\tt p}\in {\tt Pred_n}\\
		&\vee{\tt d}^{\prime\prime}\in{\tt Out_q}, {\tt q}\in{\tt Pred_n})\\
		&\wedge {\tt(val(d)}={\tt NZ})\big\} \\
		\emptyset & \text{otherwise}
	\end{cases}
	\end{eqnarray}
In the case of a binary operation, predecessor node ${\tt p}$ may or may not be equal to predecessor node ${\tt q}$.
	
The objective is to compute the least fixed point of Equation~\ref{eq:out}. Thus the analysis must begin 
with the initial set of essential data indices $\mathbb{D}{\tt ^0}$. The data flow variables ${\tt Out_n}$ 
are set to initial values and the analysis iteratively computes the equations until the fixed point is reached. 
If we initialize with something else the result would be different and it would not be the least fixed point.
Once the solution is achieved the analysis converges. 
	 
Essential indices analysis operates on finite lattices and is monotonic as it only adds the generated information 
in each iteration. Thus, the analysis is bound to converge on the fixed point solution.
The existence of a fixed point solution is guaranteed by the finiteness of lattice and monotonicity of flow functions.
\begin{figure}[htbp]
\begin{center}
\scalebox{.85}{
\begin{tabular}{|c|c|c|c|c|}
\multicolumn{5}{l}{\text{Let, }$\mathbb{D}^{\tt 0}{\tt =\{(A,0,0),(A,0,2),(A,1,1),(A,2,1),(A,3,1),(A,3,3),(X,1),(X,3)\},}$}\\
\multicolumn{5}{l}{$\mathbb{D}^{\prime}{\tt =\{(Y,1),(Y,2),(Y,3)\}}\text{ and }\mathbb{D}^{\tt 1=}\mathbb{D}^{\tt 0}\cup\mathbb{D}^\prime$} \\
\multicolumn{5}{c}{}\\
\hline
\multicolumn{5}{|c|}{Essential indices analysis}\\
\hline
Data flow variable & Initialization & Iteration ${\tt 1}$ & Iteration ${\tt 2}$ & Iteration ${\tt 3}$ \\
\hline
\hline
${\tt Gen_{r_{A[i][j]}}}$ & ${\tt -}$ & $\emptyset$ & $\emptyset$ & $\emptyset$\\
${\tt Out_{r_{A[i][j]}}}$ & $\mathbb{D}^{\tt 0}$ & $\mathbb{D}^{\tt 0}$ & $\mathbb{D}^{\tt 0}$ & $\mathbb{D}^{\tt 0}$ \\
\hline
${\tt Gen_{r_{X[j]}}}$ & ${\tt -}$ & $\emptyset$ & $\emptyset$ & $\emptyset$\\
${\tt Out_{r_{X[j]}}}$ & $\mathbb{D}^{\tt 0}$ & $\mathbb{D}^{\tt 0}$ & $\mathbb{D}^{\tt 0}$ & $\mathbb{D}^{\tt 0}$ \\
\hline
${\tt Gen_{r_{Y[i]}}}$ & ${\tt -}$ & $\emptyset$ & $\emptyset$ & $\emptyset$\\
${\tt Out_{r_{Y[i]}}}$ & $\mathbb{D}^{\tt 0}$ & $\mathbb{D}^{\tt 0}$ & $\mathbb{D}^{\tt 1}$ & $\mathbb{D}^{\tt 1}$ \\
\hline
${\tt Gen_{w_{Y[i]}}}$ & ${\tt -}$ & $\mathbb{D}^\prime$ & $\emptyset$ & $\emptyset$\\
${\tt Out_{w_{Y[i]}}}$ & $\emptyset$ & $\mathbb{D}^{\tt 1}$ & $\mathbb{D}^{\tt 1}$ & $\mathbb{D}^{\tt 1}$ \\
\hline
\end{tabular}}
\end{center}
\caption{Essential indices analysis of sparse matrix-sparse vector multiplication ${\tt Y=A\times X}$ where ${\tt A}$, ${\tt X}$ 
and ${\tt Y}$ are sparse. Set of essential data indices of ${\tt A}$ and ${\tt X}$ are ${\tt \{(0,0),(0,2),(1,1),(2,1),(3,1),(3,3)\}}$ 
and ${\tt \{(1),(3)\}}$ respectively.}
\label{fig:analysis}
\end{figure}

\begin{example}{\rm
Figure~\ref{fig:analysis} demonstrates essential indices analysis for the sparse matrix-sparse vector multiplication operation. 
It takes the ADG from Figure~\ref{fig:adg1}(b) and performs the data flow analysis based on the sparsity structures of input 
matrix ${\tt A}$ and input vector ${\tt X}$ as depicted in Figure~\ref{fig:motiv}(a). 
Let, $\mathbb{D}^{\tt 0}$ be ${\tt\{(A,0,0)}$, ${\tt(A,0,2)}$, ${\tt(A,1,1)}$, ${\tt(A,2,1)}$, ${\tt(A,3,1)}$, ${\tt(A,3,3)}$, 
${\tt(X,1)}$, ${\tt(X,3)\}}$. The gen and out information of access nodes ${\tt r_{A[i][j]}}$,
${\tt r_{X[j]}}$, ${\tt r_{Y[i]}}$ and ${\tt w_{Y[i]}}$ are presented in tabular form for convenience. Out information of all read 
nodes are initialized to $\mathbb{D}^{\tt 0}$, whereas the out information of write node is initialized to $\emptyset$. 
	 
In iteration ${\tt 1}$, ${\tt Gen_{Y[i]}}$ is $\mathbb{D}^\prime$ where $\mathbb{D}^\prime{\tt =\{(Y,1),(Y,2),(Y,3)\}}$. Thus 
${\tt Out_{Y[i]}}$ becomes $\mathbb{D}^{\tt 1}$ such that ${\tt \mathbb{D}^1=\mathbb{D}^0\cup\mathbb{D}^\prime}$. In iteration 
${\tt 2}$ the out information of ${\tt w_{Y[i]}}$ propagates along with edge ${\tt w_{Y[i]}\to r_{Y[i]}}$ and sets 
${\tt Out_{r_{Y[i]}}}$ to $\mathbb{D}^{\tt 1}$. Finally at iteration ${\tt 3}$ the analysis reaches the fixed point 
solution and converges.
}\framebox{}
\end{example}
	 
Data flow variable ${\tt AGen_n}$ is introduced to accumulate ${\tt Gen_n}$ at each write node ${\tt n}$ required 
by a post-analysis step computing set of essential iteration indices from the set of essential data indices. 
	\begin{equation}
	\label{eq:agen}
		{\tt AGen_n} = {\tt AGen_n}\cup{\tt Gen_n} 
	\end{equation}
In the current example ${\tt AGen_{w_{Y[i]}}}$ is initialized to $\emptyset$. It accumulates 
${\tt Gen_{w_{Y[i]}}}$ generated at each iteration, finally resulting ${\tt AGen_{w_{Y[i]}}}$ as ${\tt \{(Y,1),(Y,2),(Y,3)\}}$.

${\tt AFill}$ denotes the fill-in elements of the output sparse matrix and is computed as follows. 
	\begin{equation}
	\label{eq:afill}
		{\tt AFill = \bigcup_{\forall n\in V}AGen_n \setminus \mathbb{D}^0} 
	\end{equation}

The initial essential data indices ${\tt \mathbb{D}^0}$ and the fill-in elements ${\tt AFill}$ together compute the final 
essential data indices ${\tt \mathbb{D}^f}$ that captures the sparsity structure of the output matrix.
	\begin{equation}
		\label{eq:sp}
		{\tt \mathbb{D}^f = AFill \cup \mathbb{D}^0}
	\end{equation}
In the current example, ${\tt AFill}$ is computed as ${\tt \{(Y,1),(Y,2),(Y,3)\}}$. As ${\tt \mathbb{D}^0}$ does not contain any initial 
essential data index of ${\tt Y}$, ${\tt\mathbb{D}^f}$ becomes same as ${\tt AFill}$.

The set of essential iteration indices is computed from the set of essential data indices. 
Let ${\mathcal I}$ of size ${\tt l_1\times\ldots\times l_p}$ be the iteration space
of dimension ${\tt p}$ of a loop having depth ${\tt p}$ where the loop at depth ${\tt k}$ has iteration size ${\tt l_k}$. Thus, 
$\mathcal{I}=\{\vec{\tt i_k}{\tt | 1\le k\le l_1\times\ldots\times l_p}\}$ where vector $\vec{\tt i_k}$
is an iteration index. For convenience, here on we identify $\vec{\tt i_k}$ as ${\tt i}$. The iteration index at which 
non-default computations are performed is called the essential iteration index. 
The set of all essential iteration indices is denoted as $\mathbb{I}$ such that $\mathbb{I}\subseteq\mathcal{I}$.
	 
For each essential data index ${\tt d\in AGen_n}$ there exists a set of essential iteration indices 
${\mathbb{I}^\prime}$ at which the corresponding non-default computations resulting ${\tt d}$ occur. 
We introduce ${\tt iter}:\mathbb{D}\to{\tt 2^\mathbb{I}}$ such that ${\tt iter(d)}$ results $\mathbb{I}^\prime$ 
where ${\tt d\in\mathbb{D}}$ and $\mathbb{I}^\prime\in{\tt 2^\mathbb{I}}$. Data flow variable ${\tt AInd_n}$ 
is introduced to capture the set of essential iteration indices corresponding to the data indices ${\tt d\in AGen_n}$.
Thus, 
	\begin{eqnarray}
	\label{eq:ind}
		{\tt AInd_n} = \bigcup_{\forall {\tt d}\in{\tt AGen_n}}{\tt iter(d)}
	\end{eqnarray}
In the current example ${\tt AInd_{w_{Y[i]}}}$ is computed as ${\tt \{(1,1),}$ ${(2,1),}$ ${(3,1),}$ ${(3,3)\}}$.

Finally, the set of all essential iteration indices ${\tt \mathbb{I}}$ is computed as
	\begin{eqnarray}
	\label{eq:ess_ind}
		{\tt \mathbb{I} = \bigcup_{\forall n\in V} AInd_n}
	\end{eqnarray}
	
	\begin{figure}[t]
	\begin{tabular}{c}
	\hline
	\noindent
	\scalebox{.9}{
	\begin{minipage}[t]{.9\linewidth}
	\begin{algorithm}[H]
	\caption{Algorithm for Essential Indices Analysis.}
	\label{algo:analysis}
	\SetAlgoLined
	\LinesNumbered
	\SetKwInOut{Input}{Input}
	\SetKwInOut{Output}{Output}
	\SetKwRepeat{Do}{do}{while}
	\Input{
		\begin{inparaenum}[(a)]
		\item ADG, ${\tt G(V,E)}$, of a matrix operation containing ${\tt V}$ access nodes and ${\tt E}$ edges,
		\item Intial set of essential data indices ${\tt D^0}$ of input matrices.
		\end{inparaenum}
		}
	\Output{
		\begin{inparaenum}[(a)]
		\item Set of essential data indices ${\tt \mathbb{D}^f}$ of output matrix,
		\item Set of essential iteration indices ${\tt \mathbb{I}}$.
		\end{inparaenum}
		}

		${\tt \forall n\in V}$, initialize ${\tt Out_n}$ to ${\tt D^0}$ where ${\tt Pred_n = \emptyset}$\\
		${\tt WorkList\leftarrow\bigcup\limits_{\forall n\in V}{n}}$ \\
		\Do{${\tt WorkList = \emptyset}$}{ 
		Pick and remove node ${\tt n}$ from ${\tt WorkList}$\\
		${\tt OldOut_n \leftarrow Out_n}$ \\
		Compute ${\tt Gen_n}$ and ${\tt AGen_n}$ using Equation~\ref{eq:gen} and Equation~\ref{eq:agen} respectively.\\
		Compute ${\tt Out_n}$ using Equation~\ref{eq:out}. \\
		\If{${\tt OldOut_n \neq Out_n}$}{${\tt WorkList \leftarrow WorkList \cup n}$}
		}
		Compute ${\tt \mathbb{D}^f}$ and ${\tt \mathbb{I}}$ using Equation~\ref{eq:sp} and Equation~\ref{eq:ess_ind} respectively\\
		\Return
	\end{algorithm}
	\end{minipage}} \\
	\hline
	\end{tabular}
	\end{figure}

Algorithm~\ref{algo:analysis} presents the algorithm for essential indices analysis. Line number ${\tt 1}$ initializes ${\tt Out_n}$. 
Line number ${\tt 4}$ sets the work list, ${\tt WorkList}$, to the nodes of the ADG. Lines ${\tt 3-11}$ perform the data flow 
analysis by iterating over the ADG until the analysis converges. At an iteration, each node is picked and removed from the work 
list and ${\tt Gen_n}$, ${\tt AGen_n}$, and ${\tt Out_n}$ are computed. If the newly computed ${\tt Out_n}$ differs from its 
old value, the node is pushed back to the work list. The process iterates until the work list becomes empty. 
Post convergence, ${\tt \mathbb{D}^f}$ and ${\tt \mathbb{I}}$ are computed in line number ${\tt 12}$ and the values are returned.

The complexity of the algorithm depends on the number of iterations and the amount of workload per iteration. 
The number of iterations is derived from the maximum depth ${\tt d(G)}$ of the ADG, i.e., the maximum number 
of back edges in any acyclic path derived from the reverse postorder traversal of the graph. 
Therefore, the total number of iterations is ${\tt 1+d(G)+1}$, where the first iteration computes the initial 
values of ${\tt Out_n}$ for all the nodes in the ADG, ${\tt d(G)}$ iterations backpropagate the values of ${\tt Out_n}$, 
and the last iteration verifies the convergence. In the current example, the reverse postorder traversal of the 
ADG produces the acyclic path ${\tt r_{A[i][j]}}\to$ ${\tt r_{X[j]}}\to$ ${\tt w_{Y[i]}}\to$ ${\tt r_{Y[i]}}$, 
containing a single back edge ${\tt r_{Y[i]}}\to$ ${\tt w_{Y[i]}}$. Therefore, the total number of iterations becomes ${\tt 3}$. 

The amount of workload per iteration is dominated by the computation of ${\tt Gen_n}$. In the case of a binary operation, the complexity 
of ${\tt Gen_n}$ is bound to ${\tt\mathcal{O}(d^\prime\times d^{\prime\prime})}$, where ${\tt Out_{p_1}=d^\prime}$, 
${\tt Out_{p_2}=d^{\prime\prime}}$, and ${\tt\{p_1, p_2\}\in Pred_n}$. In the case of assignment and unary operations, 
the complexity of ${\tt Gen_n}$ is bound to ${\tt\mathcal{O}(d^\prime)}$. 

\subsection{Correctness of Essential Indices Analysis}
\label{sec:correctness}
The following claims are sufficient to prove the correctness of our analysis.
\begin{itemize}
	\item\textit{Claim 1}: Every essential data index will always be considered essential.
	\item\textit{Claim 2}: A data index considered essential will not become non-essential later. 
\end{itemize}
	 
Before reasoning about the aforementioned claims we provide an orthogonal lemma to show the correctness of our abstraction.
	\begin{lemma}
	\label{lem:sound}
	Our abstraction function is sound.
	\end{lemma}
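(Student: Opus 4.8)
The plan is to make precise what ``sound'' means for the abstraction and then discharge it by a case analysis over the admissible statements. I would introduce a concretization $\gamma\colon\{{\tt Z},{\tt NZ}\}\to 2^{\mathbb{R}}$ with $\gamma({\tt Z})=\{0\}$ and $\gamma({\tt NZ})=\mathbb{R}$, and state soundness as: at every program point and for every data index ${\tt d}$, the set of concrete values ${\tt cval(d)}$ is contained in $\gamma({\tt val(d)})$ (equivalently, $\langle\alpha,\gamma\rangle$ with $\alpha({\tt cval(d)})={\tt val(d)}$ is a Galois connection and the abstract transfer functions respect it). Since $\gamma({\tt NZ})=\mathbb{R}$, the statement is vacuous when ${\tt val(d)}={\tt NZ}$, so the whole content reduces to the contrapositive: whenever the abstract evaluation of an expression yields ${\tt Z}$, the corresponding concrete value is $0$ in every execution; i.e.\ no index whose concrete value may be non-zero is ever abstracted as ${\tt Z}$. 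This is exactly the safety condition the paper invokes informally around Equations~\ref{eq:op2} and~\ref{eq:op4}, so the lemma is the formal backbone of those remarks.

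The proof I would carry out by structural induction on the computation, following the form of the admissible statements ${\tt e_1}$, ${\tt e_2}$, ${\tt e_3}$. For the base case, the defining equation of ${\tt val}$ together with the initial set $\mathbb{D}^0$ gives ${\tt val(d)}={\tt Z}$ iff ${\tt cval(d)}=\{0\}$, so the invariant ${\tt cval(d)}\subseteq\gamma({\tt val(d)})$ holds with room to spare. For the inductive step I split on the operator. For copy and unary operations (Equation~\ref{eq:op1}) I invoke the stated assumption that admissible unary operations send $0$ to $0$ and non-zero to non-zero, so ${\tt val(d)}={\tt val(d^\prime)}$ is an \emph{exact} abstraction of the concrete effect. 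For addition and subtraction (Equation~\ref{eq:op2}), if the abstract result is ${\tt Z}$ then both operands are abstractly ${\tt Z}$, hence by the induction hypothesis both are identically $0$, hence so is the result; the converse is allowed to fail because of numeric cancellation, and the abstraction deliberately returns ${\tt NZ}$ there, which is sound precisely because $\gamma({\tt NZ})=\mathbb{R}$. For multiplication (Equation~\ref{eq:op3}), an abstract result of ${\tt Z}$ forces at least one operand to be abstractly ${\tt Z}$, hence concretely $0$ in every execution, hence the product is $0$. For division and modulus (Equation~\ref{eq:op4}), the operation is reached only under the concrete guard on the denominator, so the divisor is non-zero wherever the statement executes; under that guard ${\tt val(d)}={\tt val(d^\prime)}$ faithfully abstracts the effect ($0$ divided by or reduced modulo a non-zero is $0$), and the abstract side condition ${\tt val(d^{\prime\prime})}\neq{\tt Z}$ is consistent with the concrete behaviour because the generated code retains the ${\tt if}$ guard verbatim. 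Collecting the cases, ${\tt cval(d)}\subseteq\gamma({\tt val(d)})$ is preserved by every admissible statement, which establishes the lemma.

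I expect the main obstacle to be the binary cases under the ``set of all possible concrete values over all runs'' reading of ${\tt cval}$: since the analysis tracks no correlation between operands, one cannot argue the forward direction --- two operands may each be non-zero in some run while their sum, difference, or even product is always zero (cancellation, or an operand that vanishes exactly when the other does not) --- so the whole argument must flow through the contrapositive ``abstract ${\tt Z}$ $\Rightarrow$ concrete $0$'', and one must be scrupulous that each abstract rule licenses only that implication and never the reverse. A secondary point requiring care is the division/modulus case, where soundness is not purely a property of the abstract transfer function but hinges on the code generator preserving the guard, so the statement is really about guarded execution. The base case and the copy/unary case are routine.
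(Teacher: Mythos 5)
Your proposal is correct and shares the paper's overall skeleton---a case analysis over the three admissible statement forms ${\tt e_1}$, ${\tt e_2}$, ${\tt e_3}$ against a local consistency condition---but it argues from the concretization side of the Galois connection, establishing ${\tt cval(d)}\subseteq\gamma({\tt val(d)})$, whereas the paper states the abstraction-side condition ${\tt f(\alpha(c))\sqsubseteq\alpha(cf(c))}$ of Equation~\ref{eq:cond} and then discharges Lemmas~\ref{lem:l1}--\ref{lem:l3} with the single table of Figure~\ref{fig:table2}, which examines only the corner where both operands are concretely non-zero and abstractly ${\tt NZ}$. Your key structural observation---that $\gamma({\tt NZ})=\mathbb{R}$ renders every ${\tt NZ}$ case vacuous, so the entire content of the lemma is the contrapositive ``abstract ${\tt Z}$ implies concrete $0$''---both explains why the paper's abbreviated check is morally sufficient and exposes what it omits: the cases with a zero operand, additive cancellation, multiplication by zero, and the guarded division/modulus. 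Your explicit treatment of those cases, and in particular your point that soundness of Equation~\ref{eq:op4} rests on the generated code retaining the {\em if} guard (a property of guarded execution, not of the transfer function in isolation), supplies steps the paper leaves implicit. In short, your route proves strictly more than the paper's written argument while arriving at the same conclusion, and it would be the preferable proof to include.
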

	\begin{proof}
	Our abstraction function $\alpha$ maps the concrete value domain of $2^\mathbb{R}$ to the abstract 
		value domain $\{{\tt Z,NZ}\}$. 
	$\{{\tt 0}\}$ in the concrete domain maps to ${\tt Z}$ in the abstract domain and all other elements map to ${\tt NZ}$. 
	Now to guarantee the soundness of $\alpha$ one needs to prove that the following condition~\cite{spa} holds. 
	\begin{equation}
	\label{eq:cond}
		{\tt f(\alpha(c))\sqsubseteq\alpha(cf(c))} 
	\end{equation}
		where ${\tt c}$ is an element in the concrete domain,
		${\tt f}$ is an auxiliary function in the abstract domain, and 
		${\tt cf}$ is the corresponding function in the concrete domain.
		This condition essentially states that the evaluation of function 
		in the abstract domain should overapproximate the evaluation of function in the concrete domain. We prove the above 
		condition for evaluation of each admissible statement in the following lemmas.
	\end{proof}
	\begin{lemma}
		\label{lem:l1}
	For copy assignment statement ${\tt d=d^\prime}$, 
	\begin{center}
		${\tt val(d^\prime)}\sqsubseteq {\tt \alpha(cval(d^\prime))}$.
		\end{center}
	\end{lemma}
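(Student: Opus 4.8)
The plan is to read this lemma as the base case of the case analysis to which Lemma~\ref{lem:sound} reduces soundness: the admissible statement forms ${\tt e_1}$, ${\tt e_2}$, ${\tt e_3}$ are each handled by a separate lemma, and the copy assignment ${\tt e_1}: {\tt d=d^\prime}$ is the simplest of them. The observation that makes it simple is that, for a copy, the abstract transfer function dictated by Equation~\ref{eq:op1} is literally the identity on the component lattice $\hat{\mathbb{L}}=\langle\{{\tt Z,NZ}\},\sqsubseteq\rangle$, and the concrete semantics of a copy is the identity on $2^{\mathbb{R}}$; so nothing is actually computed and the content of the lemma is purely a matter of matching two definitions.

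Concretely, I would proceed in three short steps. First, unfold ${\tt val(d^\prime)}$ using its definition (Equation~(1) in Section~\ref{sec:dfv}): ${\tt val(d^\prime)} = {\tt Z}$ exactly when ${\tt cval(d^\prime)} = \{{\tt 0}\}$, and ${\tt val(d^\prime)} = {\tt NZ}$ otherwise. Second, unfold $\alpha$ using the definition recalled in the proof of Lemma~\ref{lem:sound}: $\alpha(\{{\tt 0}\}) = {\tt Z}$ and $\alpha(S) = {\tt NZ}$ for every other $S\subseteq\mathbb{R}$. Matching the two case splits term by term gives ${\tt val(d^\prime)} = \alpha({\tt cval(d^\prime)})$ — that is, ${\tt val}$ is simply $\alpha$ precomposed with ${\tt cval}$. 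Third, since $\sqsubseteq$ is a partial order and hence reflexive, this equality at once yields ${\tt val(d^\prime)} \sqsubseteq {\tt \alpha(cval(d^\prime))}$, which is the claim; and, specialising the soundness condition of Equation~\ref{eq:cond} to the copy case, where both $f$ and $cf$ are the identity, the condition collapses to $\alpha(c)\sqsubseteq\alpha(c)$, so the copy case of the induction is discharged.

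I do not expect a genuine obstacle in this lemma — it is deliberately the trivial case, and for a pure copy the inequality is in fact an equality. The one point I would take care to state explicitly is the orientation of $\sqsubseteq$: the bound must point the way ${\tt NZ}\sqsubseteq{\tt Z}$, since the analysis is allowed to tag a concretely-zero index as ${\tt NZ}$ (the over-approximation already acknowledged for additive cancellation) but must never tag a concretely-nonzero index as ${\tt Z}$. Flagging this direction here costs nothing but keeps the argument aligned with the subsequent lemmas for ${\tt e_2}$ and ${\tt e_3}$, where the inequality in Equation~\ref{eq:cond} is the real content and is genuinely one-sided.
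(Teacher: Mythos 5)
Your proposal is correct and follows essentially the same route as the paper: the paper proves Lemmas~\ref{lem:l1}--\ref{lem:l3} together by simply unfolding the definitions of ${\tt val}$ and $\alpha$ and checking (via the table in Figure~\ref{fig:table2}) that the abstract and concrete evaluations agree, so that the safety condition of Equation~\ref{eq:cond} holds. If anything, your version is slightly more complete --- the paper's table only records the case where the concrete operands are non-zero, whereas your observation that ${\tt val}$ is definitionally $\alpha\circ{\tt cval}$ covers the ${\tt Z}$ case as well and makes explicit that for a pure copy the bound is in fact an equality.
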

	\begin{lemma}
		\label{lem:l2}
	For statement using unary operation ${\tt d=op(d^\prime)}$, 
	\begin{center}
		${\tt op(val(d^\prime))\sqsubseteq \alpha(op(cval(d^\prime)))}$.
		\end{center}
	\end{lemma}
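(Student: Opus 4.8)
The plan is to follow the same template established in Lemma~\ref{lem:l1}, reducing the claim to a straightforward case analysis on the abstract value ${\tt val(d^\prime)}$, and then exploiting the stipulated property of admissible unary operations --- namely that negation, square root, and the like return the same abstract value as their argument (i.e.\ they map ${\tt Z}$ to ${\tt Z}$ and ${\tt NZ}$ to ${\tt NZ}$, as formalized in Equation~\ref{eq:op1}). First I would unfold the definition of ${\tt op}$ on the abstract lattice $\hat{\mathbb{L}}=\langle\{{\tt Z,NZ}\},\sqsubseteq\rangle$: for the admissible unary operations ${\tt op(val(d^\prime))=val(d^\prime)}$. So the left-hand side of the inequality is just ${\tt val(d^\prime)}$, and by Lemma~\ref{lem:l1} (or directly by the definition of ${\tt val}$ as ${\tt val(d^\prime)=\alpha(cval(d^\prime))}$) we already have a handle on it.

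The key step is to split on whether ${\tt cval(d^\prime)=\{0\}}$ or not. \textbf{Case 1:} if ${\tt cval(d^\prime)=\{0\}}$, then ${\tt val(d^\prime)=Z}$, and in the concrete domain ${\tt op(cval(d^\prime))}$ applied to the zero set yields the zero set again (since $op(0)=0$ for negation, $\sqrt{0}=0$, etc.), so $\alpha(op(cval(d^\prime)))={\tt Z}$; hence ${\tt Z}\sqsubseteq{\tt Z}$ holds reflexively. \textbf{Case 2:} if ${\tt cval(d^\prime)\neq\{0\}}$, then ${\tt val(d^\prime)=NZ}$, so the left-hand side is ${\tt NZ}$, which is the bottom element of $\hat{\mathbb{L}}$ (recall ${\tt NZ}\sqsubseteq{\tt Z}$); therefore ${\tt NZ}\sqsubseteq\alpha(op(cval(d^\prime)))$ holds trivially regardless of what the right-hand side evaluates to. Combining the two cases yields ${\tt op(val(d^\prime))\sqsubseteq\alpha(op(cval(d^\prime)))}$, which is precisely Equation~\ref{eq:cond} instantiated for statement ${\tt e_2}$, establishing the lemma.

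The only subtlety --- and the one place the argument could be contested --- is the implicit restriction on which unary operations count as ``admissible.'' The proof leans on the footnote excluding operations like floor, ceiling, round-off, truncation, and saturation, which could map a non-zero concrete value to zero and thereby break Case~2's reasoning if we were forced to reason about $\alpha(op(cval(d^\prime)))={\tt Z}$ while ${\tt op(val(d^\prime))}$ was claimed to be ${\tt NZ}$; note that even then the inequality ${\tt NZ}\sqsubseteq{\tt Z}$ would still hold, so the soundness direction survives, but the equality ${\tt op(val(d^\prime))=val(d^\prime)}$ of Equation~\ref{eq:op1} would fail. I would therefore state explicitly at the top of the proof that ${\tt op}$ ranges only over the value-abstraction-preserving unary operations, so that Equation~\ref{eq:op1} is available, and then the case analysis above closes the argument cleanly. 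I do not anticipate any real obstacle here --- the lemma is essentially a one-line consequence of the abstract semantics plus the bottom-ness of ${\tt NZ}$ --- so the ``hard part'' is merely being careful about the scope of admissible operations.
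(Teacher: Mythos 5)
Your proof is correct, and at its core it takes the same route as the paper: verify the safety condition of Equation~\ref{eq:cond} by evaluating both sides of the inequality for the unary statement form. The difference is one of completeness. The paper proves Lemmas~\ref{lem:l1}--\ref{lem:l3} with a single shared argument that explicitly assumes ${\tt cval(d^\prime)}$ is a \emph{non-zero} concrete value and then tabulates (Figure~\ref{fig:table2}) that both the concrete and abstract evaluations yield ${\tt NZ}$; the zero case is never written down. You handle both cases: when ${\tt cval(d^\prime)=\{0\}}$ you show both sides are ${\tt Z}$ using the fact that the admissible unary operations fix zero, and when ${\tt cval(d^\prime)\neq\{0\}}$ you observe that the left-hand side is ${\tt NZ}$, the bottom of $\hat{\mathbb{L}}$, so the inequality is vacuous. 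That is a strictly more thorough argument and it correctly isolates where the claim has actual content. One small remark on your closing discussion: the case that genuinely depends on admissibility is your Case~1, not Case~2 --- a unary operation mapping zero to a non-zero value (e.g.\ ${\tt op(x)=x+1}$) would give ${\tt Z}$ on the left and ${\tt NZ}$ on the right, violating ${\tt Z\sqsubseteq NZ}$; whereas, as you yourself note, an operation collapsing a non-zero value to zero leaves the inequality intact (only precision is lost). Your explicit restriction to abstraction-preserving unary operations covers both, so the proof stands.
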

	\begin{lemma}
		\label{lem:l3}
		For statement using binary operation ${\tt d=op(d^\prime,d^{\prime\prime})}$, 
		\begin{center}
		${\tt op(val(d^\prime),val(d^{\prime\prime}))\sqsubseteq \alpha(op(cval(d^\prime),Cali(d^{\prime\prime})))}$.
		\end{center}
	\end{lemma}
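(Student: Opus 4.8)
The plan is to establish the soundness inequality of Equation~\ref{eq:cond} with $\mathtt{f}$ instantiated to each admissible binary operation, by a case analysis on $\mathtt{op}\in\{+,-,\times,/,\bmod\}$. First I would unpack what $\mathtt{op}(\mathtt{val}(d'),\mathtt{val}(d''))\sqsubseteq\alpha(\mathtt{op}(\mathtt{cval}(d'),\mathtt{cval}(d'')))$ actually demands. Since $\mathtt{val}=\alpha\circ\mathtt{cval}$, the left side is exactly the $\mathtt{f}(\alpha(c))$ of Equation~\ref{eq:cond}. Because the component lattice $\hat{\mathbb{L}}$ orders $\mathtt{NZ}\sqsubseteq\mathtt{Z}$ with $\mathtt{NZ}$ at the bottom, the inequality can fail only in the single case where the left side is $\mathtt{Z}$ and the right side is $\mathtt{NZ}$. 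Hence it suffices to prove: whenever the abstract rules of Equations~\ref{eq:op2}--\ref{eq:op4} evaluate $\mathtt{op}(\mathtt{val}(d'),\mathtt{val}(d''))$ to $\mathtt{Z}$, every concrete value producible by applying $\mathtt{op}$ to the value sets $\mathtt{cval}(d')$ and $\mathtt{cval}(d'')$ equals $0$, so that $\alpha$ of the concrete result is also $\mathtt{Z}$. I would also record the elementary observation that $\mathtt{val}(d)=\mathtt{Z}$ holds exactly when $\mathtt{cval}(d)=\{0\}$, i.e.\ the corresponding concrete operand is the single value $0$.

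Next I would discharge the three operation groups. For addition and subtraction (Equation~\ref{eq:op2}) the abstract result is $\mathtt{Z}$ only when $\mathtt{val}(d')=\mathtt{val}(d'')=\mathtt{Z}$, so both operand sets are $\{0\}$ and $0\pm 0=0$; in every other case the abstract result is $\mathtt{NZ}$, which is below everything in $\hat{\mathbb{L}}$, so the inequality holds with nothing to check---this is exactly where the abstraction is deliberately conservative, reporting $\mathtt{NZ}$ even when numeric cancellation would make the concrete result $0$. For multiplication (Equation~\ref{eq:op3}) the abstract result is $\mathtt{Z}$ as soon as one operand is $\mathtt{Z}$, i.e.\ one of $\mathtt{cval}(d'),\mathtt{cval}(d'')$ equals $\{0\}$, whence $0\times v=0$ for every admissible $v$ and $\alpha$ of the product is $\mathtt{Z}$; when both operands are $\mathtt{NZ}$ the abstract result is the bottom $\mathtt{NZ}$ and again nothing needs checking. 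For division and modulus (Equation~\ref{eq:op4}) the rule is guarded by $\mathtt{val}(d'')\neq\mathtt{Z}$, so the concrete denominator is nonzero and the operation is well defined; then $\mathtt{val}(d)=\mathtt{val}(d')$, and if this is $\mathtt{Z}$ then $\mathtt{cval}(d')=\{0\}$ with $0/v=0$ and $0\bmod v=0$, giving an $\alpha$-result of $\mathtt{Z}$, while $\mathtt{val}(d')=\mathtt{NZ}$ again places the left side at the bottom. This mirrors the structure of the proofs I would give for Lemmas~\ref{lem:l1} and~\ref{lem:l2}.

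I expect the only delicate point---and the reason the lemma deserves a separate statement rather than being folded into the informal ``just evaluate the operation'' intuition---to be the strictly conservative cases: numeric cancellation in $\pm$, and $a\bmod b=0$ with $a\neq 0$, where the abstract value is strictly more conservative than the exact concrete behaviour. The resolution is uniform: since $\mathtt{NZ}$ is the least element of $\hat{\mathbb{L}}$, abstracting a concrete $0$ as $\mathtt{NZ}$ can never violate $\sqsubseteq$, so over-approximation is always safe, which is precisely the ``no possible non-zero result is abstracted as zero'' guarantee used in the surrounding discussion. Once Lemmas~\ref{lem:l1}--\ref{lem:l3} are in hand, they feed back into Lemma~\ref{lem:sound} to close the soundness of $\alpha$, after which Claim~1 and Claim~2 can be argued over the data flow equations of Section~\ref{sec:tf}.
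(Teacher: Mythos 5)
Your proof is correct, and it is in fact more complete than the one the paper gives. The paper proves Lemmas~\ref{lem:l1}--\ref{lem:l3} jointly by fixing ${\tt cval(d^\prime)=r_1}$ and ${\tt cval(d^{\prime\prime})=r_2}$ with both non-zero and tabulating (Figure~\ref{fig:table2}) that the concrete-then-abstract and abstract evaluations both yield ${\tt NZ}$, then appealing to Equation~\ref{eq:cond}; the cases where one or both operands abstract to ${\tt Z}$ are not spelled out. You instead use the structure of the lattice $\hat{\mathbb{L}}$ (with ${\tt NZ}$ at the bottom) to reduce the whole obligation to the single implication ``abstract result ${\tt Z}$ implies concrete result $0$,'' and then discharge that by a case split on ${\tt op}$ via Equations~\ref{eq:op2}--\ref{eq:op4}. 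This buys you an explicit treatment of exactly the cases the paper leaves implicit, including the deliberately conservative ones (numeric cancellation in $\pm$, and ${\tt a\bmod b=0}$ with ${\tt a\neq 0}$), which you correctly dispose of by noting that reporting ${\tt NZ}$ can never violate $\sqsubseteq$. One minor imprecision: since ${\tt cval(d^{\prime\prime})}$ is a \emph{set} of reals, ${\tt val(d^{\prime\prime})\neq Z}$ only rules out ${\tt cval(d^{\prime\prime})=\{0\}}$, not membership of $0$ in the set, so ``the concrete denominator is nonzero'' is slightly too strong; this does not affect the soundness argument (the residual division-by-zero possibility is guarded at run time by the retained {\em if} conditions, as the paper notes after Equation~\ref{eq:op4}), but it is worth phrasing carefully. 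You also correctly read the statement's ``${\tt Cali(d^{\prime\prime})}$'' as a typo for ${\tt cval(d^{\prime\prime})}$.
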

	We prove lemmas \ref{lem:l1} to \ref{lem:l3} in the following.
	\begin{proof}
	Let ${\tt cval(d^\prime)=r_1}$ and ${\tt cval(d^{\prime\prime})=r_2}$ where ${\tt r_1}$ and ${\tt r_2}$ are non-zero 
	elements in the concrete domain and ${\tt val(d^\prime)}$ ${\tt=}$ ${\tt val(d^\prime)}$ ${\tt=}$ ${\tt NZ}$ where 
	${\tt NZ}$ represents abstract non-zero value. 
	\begin{figure*}[htbp]
	\begin{center}
		\scalebox{.9}{
	\begin{tabular}{|l|l|l|}
	\hline
	statement & concrete evaluation & abstract evaluation \\
	\hline
	\hline
		${\tt d=d^\prime}$ & ${\tt\alpha(cval(d^\prime))=NZ}$ & ${\tt val(d^\prime)=NZ}$\\
	\hline
		${\tt d=op(d^\prime)}$ & ${\tt \alpha(op(cval(d^\prime)))=NZ}$ & ${\tt op(val(d^\prime))=NZ}$ \\
	\hline
		${\tt d=op(d^\prime, d^{\prime\prime})}$ & ${\tt\alpha(op(cval(d^\prime)}$, ${\tt cval(d^{\prime\prime})))}$ = ${\tt NZ}$ & ${\tt op(val(d^\prime)}$, ${\tt val(d^{\prime\prime}))=NZ}$ \\
	\hline
	\end{tabular}}
	\caption{Concrete and abstract evaluations of statements.}
	\label{fig:table2}
	\end{center}
	\end{figure*}
	From Figure~\ref{fig:table2} we can state that the concrete and abstract evaluations of all statements satisfy the safety condition in 
	Equation~\ref{eq:cond}.
	\end{proof}
	Claim 1 primarily asserts that an essential data index will never be considered non-essential.
	We prove it using induction on the length of paths in the access dependence graph.
	\begin{proof}[Proof of Claim 1]
	Let $\mathbb{D}$ be the set of essential data indices computed at each point in ADG. 
	\begin{itemize}
		\item\textit{Base condition}: At path length ${\tt 0}$, $\mathbb{D}=\mathbb{D}^{\tt 0}$ where $\mathbb{D}^{\tt 0}$ 
	is the initial set of essential data indices of input sparse matrices.
	\item\textit{Inductive step}: Let us assume that at length ${\tt l}$ the set of essential data indices does not miss any 
	essential data index. As abstract computation of such data index is safe as per Lemma~\ref{lem:sound}, 
	we can conclude that no essential data index is missing from $\mathbb{D}$ computed at path length ${\tt l+1}$.
	\end{itemize} 
	Hence all essential data indices will always be considered as essential. 
	\end{proof}
	 
	Because of the monotonicity of transfer functions as the newly generated information is only added to the previously 
	computed information without removing any, we assert that once computed no essential data index will ever be considered 
	as non-essential as stated in Claim 2. 
	 
	For all statements admissible in our analysis the abstraction is optimal except for addition and subtraction operations 
	where numerical cancellation in concrete domain results into ${\tt NZ}$ in the abstract domain. 

\section{Code Generation}
\label{sec:codegen}
In this section, we present the generation of code, customized to the matrix operation and the sparsity structures of input. 
Essential data indices ${\tt \mathbb{D}^f}$ and essential iteration indices ${\tt \mathbb{I}}$ play a crucial role in code 
generation. The fill-in elements generated during the execution alter the structure of the underlying data storage and pose challenges 
in the dynamic alteration of the same. ${\tt\mathbb{D}^f}$ statically identifies the fill-in elements and sets the data storage 
without any requirement for further alteration.

The set of essential iteration indices ${\tt\mathbb{I}}$ identifies the statement instances that are critical for the 
semantic correctness of the operation. In the case of a multi-statement operation, it identifies the essential
statement instances of all the statements present in the loop. The lexicographic ordering of the iteration  
indices statically constructs the execution trace ${\tt E_{trace}}$ of a single statement operation. However, a multi-statement 
operation requires the lexicographic ordering of the timestamp vectors associated with the statement instances, where 
the timestamp vectors identify the order of loops and their nesting sequences. Assuming the ${\mathit{timestamp}}$ function 
computes the timestamp of each essential index and the ${\mathit{lexorder}}$ lexicographically orders the timestamp vectors 
to generate the execution trace ${\tt E_{trace}}$ as follows. 
        \begin{equation}
        \label{eq:etrace}
		{\tt E_{trace} = \mathit{lexorder}\bigl(\bigcup_{\forall e\in I} \mathit{timestamp}(e)\bigl)}
        \end{equation}

\begin{figure}[htbp]
	\begin{center}
		\scalebox{.8}{
		\begin{tabular}{l|l|l|l}
${\tt S_4 : {\langle 0,2\rangle}}$ & $\langle{\tt  S_4,0}\rangle$ & $\langle{\tt S_4}, \langle{\tt A,0,0}\rangle,\langle{\tt A,0,0}\rangle\rangle$ & $\langle{\tt S_4},\langle{\tt valA,0}\rangle,\langle{\tt valA,0}\rangle\rangle$ \\
${\tt S_4 : {\langle 1,2\rangle}}$ & $\langle{\tt S_4,1}\rangle$ & $\langle{\tt S_4}, \langle{\tt A,1,1}\rangle, \langle{\tt A,1,1}\rangle\rangle$ & $\langle{\tt S_4}, \langle{\tt valA,2}\rangle, \langle{\tt valA,2}\rangle\rangle$ \\
${\tt S_4 : {\langle 2,2\rangle}}$ & $\langle{\tt  S_4,2}\rangle$ & $\langle{\tt S_4}, \langle{\tt A,2,2}\rangle, \langle{\tt A,2,2}\rangle\rangle$ & $\langle{\tt S_4}, \langle{\tt valA,4}\rangle, \langle{\tt valA,4}\rangle\rangle$ \\
${\tt S_2 : {\langle 3,0,2,1\rangle}}$ & $\langle{\tt S_2,3,2}\rangle$ & $\langle{\tt S_2}, \langle{\tt A,3,2}\rangle, \langle{\tt A,3,2}\rangle, \langle{\tt A,2,2}\rangle\rangle$ & $\langle{\tt S_2}, \langle{\tt valA,6}\rangle, \langle{\tt valA,6}\rangle, \langle{\tt valA,4}\rangle\rangle$ \\
${\tt S_3 : {\langle 3,1,2\rangle}}$ & $\langle{\tt S_3,3,2}\rangle$ & $\langle{\tt S_3}, \langle{\tt A,3,3}\rangle, \langle{\tt A,3,3}\rangle, \langle{\tt A,3,2}\rangle, \langle{\tt A,3,2}\rangle\rangle$ & $\langle{\tt S_3}, \langle{\tt valA,7}\rangle, \langle{\tt valA,7}\rangle, \langle{\tt valA,6}\rangle, \langle{\tt valA,6}\rangle\rangle$\\
${\tt S_4 : {\langle 3,2\rangle}}$ & $\langle{\tt S_4,3}\rangle$ & $\langle{\tt S_4}, \langle{\tt A,3,3}\rangle, \langle{\tt A,3,3}\rangle\rangle$ & $\langle{\tt S_4}, \langle{\tt valA,7}\rangle, \langle{\tt valA,7}\rangle\rangle$\\
${\tt S_2 : {\langle 4,0,3,1\rangle}}$ & $\langle{\tt S_2,4,3}\rangle$ & $\langle{\tt S_2}, \langle{\tt A,4,3}\rangle, \langle{\tt A,4,3}\rangle, \langle{\tt A,3,3}\rangle\rangle$ & $\langle{\tt S_2}, \langle{\tt valA,10}\rangle, \langle{\tt valA,10}\rangle, \langle{\tt valA,7}\rangle\rangle$\\
${\tt S_3 : {\langle 4,1,3\rangle}}$ & $\langle{\tt S_3,4,3}\rangle$ & $\langle{\tt S_3}, \langle{\tt A,4,4}\rangle, \langle{\tt A,4,4}\rangle, \langle{\tt A,4,3}\rangle, \langle{\tt A,4,3}\rangle\rangle$ & $\langle{\tt S_3}, \langle{\tt valA,11}\rangle, \langle{\tt valA,11}\rangle, \langle{\tt valA,10}\rangle, \langle{\tt valA,10}\rangle\rangle$\\
$\ldots$ & $\ldots$ & $\ldots$ & $\ldots$ \\
\multicolumn{1}{c}{(a)} & \multicolumn{1}{c}{(b)} & \multicolumn{1}{c}{(c)} & \multicolumn{1}{c}{(d)}
		\end{tabular}} 
\caption{Generation of execution trace and memory access trace; (a) Lexicographic ordering of timestamp vectors associated with the statement instances, (b) Execution 
		trace, (c) Data access trace accessing dense storage, (d) Data access trace accessing sparse storage.}
		\label{fig:trace}
	\end{center}
\end{figure}
\begin{example}{\rm
Assuming the timestamp vectors as ${\tt\langle i,0,j,0,k\rangle}$, ${\tt\langle i,0,j,1\rangle}$, ${\tt\langle j,1,l\rangle}$, 
and ${\tt\langle i,2\rangle}$ for the statements ${\tt S_1}$, ${\tt S_2}$, ${\tt S_3}$, and ${\tt S_4}$ in Figure~\ref{fig:motiv1}(a), 
Figures~\ref{fig:trace}(a) and \ref{fig:trace}(b) present the snippets of lexicographic order of the timestamp vector instances 
and the generated execution trace respectively. Here execution instance ${\tt\langle S_k,i,j\rangle}$ denotes the instance of 
statement ${\tt S_k}$ at iteration index ${\tt(i,j)}$.
}\framebox{}
\end{example}

The problem of constructing piecewise regular loops from the execution trace is similar to the problem addressed by 
Rodr{\'i}guez et al.~\cite{piecewise1} and Augustine et al.~\cite{piecewise2}. Their work focuses on homogeneous execution 
traces originating from single statement loops where reordering statement instances is legitimate. They note that handling 
multi-statement loops is out of the scope of their work. They construct polyhedra from the reordered and equidistant execution 
instances and use CLooG~\cite{1342537} like algorithm to generate piecewise-regular loop-based code from the polyhedra. They support 
generating either one-dimensional or multi-dimensional loops. 

Our work targets generic loops including both single-statement and multi-statements, having loop-independent or
loop-dependent dependencies. In the case of multi-statement loops, the instances of different 
statements interleave, affecting the homogeneity of the execution trace. Such interleaving limits the size of 
the homogeneous sections of the trace that contribute to loop generation. Additionally, most 
loops showcase loop-dependent dependencies, and thus, reordering statement instances may affect the semantic 
correctness of the program. Taking these behaviors of programs into account, we use a generic approach to 
generate one-dimensional piecewise regular loops from the homogeneous and equidistant statement instances without 
altering their execution order. 

The execution trace ${\tt E_{trace}}$ prepares the memory access trace ${\tt M_{trace}}$, accessing the underlying 
storage constructed by the essential data indices ${\tt\mathbb{D}^f}$. Assuming $\mathit{memaccess}$ returns the 
data accessed by each iteration index ${\tt e}$ in the execution trace ${\tt E_{trace}}$, ${\tt M_{trace}}$ is computed 
as follows.
\begin{equation}
	\label{eq:memtrace}
	{\tt M_{trace} = \bigcup_{\forall e\in E_{trace}}\mathit{memaccess}(e,\mathbb{D}^f)}
\end{equation}

Instead of a single-dimensional data access trace, i.e., a memory access trace generated by a single operand accessing 
sparse data, our code generation technique considers a multi-dimensional data access trace, where the memory access trace 
is generated by multiple operands accesing sparse data. In the case of a loop statement 
${\tt A[i]} = f({\tt B[j]})$, $\{\ldots,\langle{\tt A,m}\rangle,\ldots,\langle{\tt A,n}\rangle,\ldots\}$ and 
$\{\ldots,\langle{\tt B,m^\prime}\rangle,\ldots,\langle{\tt B,n^\prime}\rangle,\ldots\}$ represent two single-dimensional 
data access traces generated by accessing arrays {\tt A} and {\tt B} respectively. Thus the multi-dimensional 
data access trace generated by the statement is $\{\ldots,\langle\langle{\tt A,m}\rangle, \langle{\tt B,m^\prime}\rangle\rangle$, $\ldots,
\langle\langle{\tt A,n}\rangle$, $\langle{\tt B,n^\prime}\rangle\rangle,\ldots\}$. Note that, if the underlying data storage changes 
the data access trace changes too. 

\begin{example}{\rm
	Figure~\ref{fig:trace}(c) and \ref{fig:trace}(d) represent the snippet of multi-dimensional data access 
	trace accessing dense and sparse storage respectively. Data access point $\langle S_4, \langle{\tt A,0,0}\rangle,
	\langle{\tt A,0,0}\rangle\rangle$ denotes accessing memory location ${\tt A[0][0]}$ of the dense storage by the 
	left-hand side and right-hand side operands of statement ${\tt S_4}$. Similarly, $\langle S_4, 
	\langle{\tt valA,0}\rangle,\langle{\tt valA,0}\rangle\rangle$ represents corresponding accesses to ${\tt valA[0]}$ of 
	the sparse storage. 
	}\framebox{}
\end{example}

The code generator parses the execution trace to identify the homogeneous sections and computes distance vectors between consecutive 
multi-dimensional data access points originated by the same homogeneous section. If data access points ${\tt m_{i-1}}$, ${\tt m_i}$, 
and ${\tt m_{i+1}}$ of ${\tt M_{trace}}$ are homogeneous and equidistant, then they form a partition which is later converted into 
a regular loop. The distance vector between data access points $\langle\langle{\tt A,m}\rangle,\langle{\tt B,m^\prime}\rangle\rangle$ 
and $\langle\langle{\tt A,n}\rangle,\langle{\tt B,n^\prime}\rangle\rangle$ is $\langle\langle{\tt A,n-m}\rangle,
\langle{\tt B,n^\prime-m^\prime}\rangle\rangle$. Homogeneous and equidistant data access points ${\tt\langle A,m\rangle}$, 
${\tt\langle A,m+d\rangle}$, $\ldots$, ${\tt\langle A,m+n\times\!d\rangle}$, with identical distance ${\tt d}$, form an affine, 
one-dimensional, indirect-reference free access function ${\tt A[m+d\times\!i]}$. Iteration index ${\tt i}$ forms a regular 
loop iterating from ${\tt 0}$ to ${\tt n}$. For example, the homogeneous and equidistant data access points 
$\langle{\tt S_4},\langle{\tt valA,0}\rangle,$ $\langle{\tt valA,0}\rangle\rangle$, $\langle{\tt S_4},\langle{\tt valA,2}\rangle,$ 
$\langle{\tt valA,2}\rangle\rangle$, and $\langle{\tt S_4},\langle{\tt valA,4}\rangle,$ $\langle{\tt valA,4}\rangle\rangle$ is 
$\langle\langle{\tt valA,2}\rangle,\langle{\tt valA,2}\rangle\rangle$ construct one dimensional, affine access function 
$\{{\tt valA[2i+0]|0\le\!i\le\!2\}}$.

In the absence of regularity, our technique generates small loops with iteration-size two. 
As this hurts performance because of instruction cache misses, Augustine et al.~\cite{piecewise2} proposed 
instruction prefetching for the program code. However, we deliberately avoid prefetching and reordering in our 
current work and limit the code generation to code that is free of indirect references, 
and contains one-dimensional and piecewise-regular loops for generic programs. 
        \begin{figure}[t]
        \begin{tabular}{c}
        \hline
        \noindent
        \scalebox{.9}{
       	\begin{minipage}[t]{.7\linewidth}
        \begin{algorithm}[H]
        \caption{Algorithm for code generation.}
        \label{algo:codegen}
        \SetAlgoLined
        \LinesNumbered
        \SetKwInOut{Input}{Input}
        \SetKwInOut{Output}{Output}
        \SetKwRepeat{Do}{do}{while}
        \Input{
                \begin{inparaenum}[(a)]
		\item Set of essential data indices ${\tt \mathbb{D}^f}$,
		\item Set of essential iteration indices ${\tt \mathbb{I}}$.
		\end{inparaenum}
	}
	\Output{
		Code ${\tt C}$ containing piecewise-regular loops.
	}
	Compute ${\tt E_{trace}}$ and ${\tt M_{trace}}$ using Equation~\ref{eq:etrace} and Equation~\ref{eq:memtrace} respectively. \\
	\For{${\tt m_i \in M_{trace}}$}{
		\If{${\tt m_{i-1}\in P}$ and ${\tt m_{i-1}, m_i}$ are homogeneous and equidistant}{
		${\tt P = P \cup \{m_i\}}$, ${\tt P}$ be a partition.
		}
		\Else{
		${\tt P^\prime = \{m_i\}}$, ${\tt P^\prime}$ be another partition.
		}
	}
	\For{\textrm{each partition} ${\tt P}$}{${\tt C=C+\mathit{loopgen}(P)}$, $\mathit{loopgen}$ generates affine access function and regular loop}
	\Return ${\tt C}$
	\end{algorithm}
        \end{minipage}} \\
        \hline
        \end{tabular}
        \end{figure}

Algorithm~\ref{algo:codegen} presents the algorithm for code generation. Line number {\tt 1} computes 
${\tt E_{trace}}$ and ${\tt M_{trace}}$. Lines {\tt 2-9} partition ${\tt M_{trace}}$ into multiple partitions, 
containing consecutive, homogeneous, and equidistant data access points. Lines {\tt 10-12} generate regular 
loop for each partition and accumulate them into the code. The complexity of the code generation algorithm 
is bound to the size of the essential iteration indices ${\tt\mathbb{I}}$.

\section{Empirical Evaluation}
\label{sec:result}
\subsection{Experimental Setup}
We have developed a working implementation of SpComp in C++ using STL libraries. It has two modules performing 
the essential indices analysis and piecewise-regular code generation. Our implementation is computation intensive that  
is addressed by parallelizing the high-intensity functions into multiple threads with a fixed workload per thread. 
For our experimentation, we have used Intel(R) Core(TM) i5-10310U CPU @ 1.70GHz octa-core processor with 8GB RAM size, 
4GB available memory size, 4KB memory page size, and L1, L2, and L3 caches of size 256KB, 1MB, and 6MB, respectively. 
The generated code is in {\tt .c} format and is compiled using GCC 9.4.0 with optimization level {\tt -O3} that automatically 
vectorizes the code. Our implementation successfully scales up for Cholesky decomposition to sparse matrix 
{\em Nasa/nasa2146} having ${\tt 7\times10^4}$ non-zero elements but limits the code generation due to the available memory. 

Here we use the PAPI tool~\cite{papi} to profile the dynamic behavior of a code. 
The profiling of a performance counter is performed thousand times, and the mean value is reported. 
The retired instructions, I1 misses, L1 misses, L2 misses, L2I misses, L3 misses, and TLB misses 
are measured using PAPI\_TOT\_INS, ICACHE\_64B : IFTAG\_MISS, MEM\_LOAD\_UOPS\_RETIRED : L1\_MISS, L2\_RQSTS:MISS,
L2\_RQSTS : CODE\_RD\_MISS, LONGEST\_LAT\_CACHE : MISS, and PAPI\_TLB\_DM events respectively. 

\subsection{Use Cases and Experimental Results}
The generated sparsity structure-specific code is usable as long as the sparsity structure remains unchanged. 
Once the structure changes, the structure-specific code no longer remains relevant. In this section, we have 
identified two matrix operations; (a) Sparse Matrix-Sparse Vector Multiplication and (b) Sparse Cholesky decomposition,
that have utility in applications where the sparsity structure-specific codes are reused. 

\subsubsection{Sparse Matrix-Sparse Vector Multiplication}
This sparse matrix operation has utility in applications like page ranking, deep Convolutional Neural Networks (CNN), 
numerical analysis, conjugate gradients computation, etc. Page ranking uses an iterative algorithm that assigns a numerical 
weighting to each vertex in a graph to measure its relative importance. It has a huge application in web page ranking. 
CNN is a neural network that is utilized for classification and computer vision. In the case of CNN training, the sparse 
inputs are filtered by different filters until the performance of CNN converges. 

SpMSpV multiplies a sparse matrix ${\tt A}$ to a sparse vector ${\tt X}$ and outputs a sparse vector ${\tt Y}$. 
It operates on two sparse inputs and generates a sparse output, without affecting the sparsity structure of the inputs. 
The corresponding code operating on dense data contains a perfectly nested loop having a single statement and loop-independent dependencies. 
We compare the performance of SpComp-generated SpMSpV code against the following. 
\begin{itemize}
\item The state-of-art Tensor Algebra Compiler (TACO)~\cite{taco1,taco2} automatically generates the sparse code 
	supporting any storage format. We have selected the storage format of the input matrix ${\tt A}$ as CSR 
	and the storage format of the input vector ${\tt X}$ as a sparse array. The TACO framework~\cite{taco2} 
	does not support sparse array as the output format, thus, we have selected dense array as the output storage format. 
\item The piecewise regular code generated by ~\cite{piecewise1, piecewise2}. We use their working implementation from \textit{PLDI 2019} 
	artifacts~\cite{artifact1} and treat it as a black box. Although, this implementation supports only Sparse Matrix-Vector 
	Multiplication (SpMV) operation, we use this work to showcase the improvement caused by SpComp for multiple sparse 
	input cases. By default, the instruction prefetching is enabled in this framework. However, instruction prefetching 
	raises a {\em NotImplementedError} error during compilation. Thus, we have disabled instruction prefetching for the 
	entire evaluation. 
\end{itemize}
We enable {\tt -O3} optimization level during the compilation of the code generated by TACO, piecewise-regular work, 
and SpComp. Each execution is performed thousand times and the mean is reported. 

The input sparse matrices are randomly selected from the Suitesparse Matrix Collection~\cite{spCollection}, as SpMSpV 
can be applied to any matrix. The input sparse vectors are synthesized from the number of columns of the input sparse 
matrices with sparsity fixed to ${\tt 90\%}$. The initial ${\tt 10\%}$ elements of the sparse vectors are non-zero, 
making the sparsity structured. Such regularity is intentionally maintained to ease the explanation of sparsity structures 
of the input sparse vectors.

\begin{table*}[htbp]
\begin{center}
\scalebox{.8}{
	\begin{tabular}{|cccccc|ccc|}
	\hline
	\multicolumn{6}{|c|}{Sparse matrix} & \multicolumn{3}{c|}{Sparse Vector} \\
	\hline
	Name & Group & Rows & Cols & Nonzeroes & Sparsity & Size & Nonzeroes & Sparsity\\
	\hline
	\hline
	lp\_maros & LPnetlib & ${\tt 846}$ & ${\tt 1966}$ & ${\tt 10137}$ & ${\tt 99.9\%}$ & ${\tt 1966}$ & ${\tt 196}$ & ${\tt 90\%}$\\
	pcb1000 & Meszaros & ${\tt 1565}$ & ${\tt 2820}$ & ${\tt 20463}$ & ${\tt 99.9\%}$ & ${\tt 2820}$ & ${\tt 282}$ & ${\tt 90\%}$\\
	cell1 & Lucifora & ${\tt 7055}$ & ${\tt 7055}$ & ${\tt 30082}$ & ${\tt 99.9\%}$ & ${7055}$ & ${\tt 705}$ & ${\tt 90\%}$\\
	n2c6-b6 & JGD\_Homology & ${\tt 5715}$ & ${\tt 4945}$ & ${\tt 40005}$ & ${\tt 99.9\%}$ & ${\tt 4945}$ & ${\tt 494}$ & ${\tt 90\%}$\\
	beacxc & HB & ${\tt 497}$ & ${\tt 506}$ & ${\tt 50409}$ & ${\tt 99.8\%}$ & ${\tt 506}$ & ${\tt 50}$ & ${\tt 90\%}$\\
	rdist3a & Zitney & ${\tt 2398}$ & ${\tt 2398}$ & ${\tt 61896}$ & ${\tt 99.9\%}$ & ${\tt 2398}$ & ${\tt 239}$ & ${\tt 90\%}$\\
	lp\_wood1p & LPnetlib & ${\tt 244}$ & ${\tt 2595}$ & ${\tt 70216}$ & ${\tt 99.9\%}$ & ${\tt 2595}$ & ${\tt 259}$ & ${\tt 90\%}$\\
	TF15 & JGD\_Forest & ${\tt 6334}$ & ${\tt 7742}$ & ${\tt 80057}$ & ${\tt 99.9\%}$ & ${\tt 7742}$ & ${\tt 774}$ & ${\tt 90\%}$\\
	air03 & Meszaros & ${\tt 124}$ & ${\tt 10757}$ & ${\tt 91028}$ & ${\tt 99.9\%}$ & ${\tt 10757}$ & ${\tt 1075}$ & ${\tt 90\%}$ \\
	Franz8 & JGD\_Franz & ${\tt 16728}$ & ${\tt 7176}$ & ${\tt 100368}$ & ${\tt 99.9\%}$ & ${\tt 7176}$ & ${\tt 717}$ & ${\tt 90\%}$\\
	\hline
	\end{tabular}}
	\caption{Statistics of selected sparse matrices and synthesized sparse vectors for SpMSpV matrix operation.}
	\label{tab:spmspv_stat}
	\end{center}
\end{table*}

The statistics of the selected sparse matrices and synthesized sparse vectors are presented in Table~\ref{tab:spmspv_stat}. 
Due to constraints on the available memory, we limit the number of non-zero elements of the selected sparse matrices between 
${\tt 10000}$ and ${\tt 100000}$. All of the matrices showcase ${\tt \approx 99.9\%}$ sparsity of unstructured nature. 
Only {\em cell1} and {\em rdist3a} sparse matrices are square and the rest of them are rectangular.  

\begin{table*}[htbp]
\begin{center}
\scalebox{.8}{
	\begin{tabular}{|c|cccc|cccc|cccc|}
	\hline
	& \multicolumn{4}{|c|}{TACO} & \multicolumn{4}{|c|}{Piecewise-regular} & \multicolumn{4}{|c|}{SpComp} \\
	\hline
	Name & Rtd & L1 instr & L2 instr & Exec & Rtd & L1 instr & L2 instr & Exec & Rtd & L1 instr & L2 instr & Exec \\
	& instr & miss(\%) & miss(\%) & time(usec) & instr & miss(\%) & miss(\%) & time(usec) & instr & miss(\%) & miss(\%) & time(usec) \\
	\hline
	\hline
lp\_maros & ${\tt 150271}$ & ${\tt 3.8}$ & ${\tt 1.9}$ & ${\tt 64.5}$ & ${\tt 32494}$ & ${\tt 9.9}$ & ${\tt 9.7}$ & ${\tt 33}$ & ${\tt 1134}$ & ${\tt 21.9}$ & ${\tt 15.3}$ & ${\tt 10}$ \\
pcb1000 & ${\tt 230921}$ & ${\tt 2.5}$ & ${\tt 1.3}$ & ${\tt 91}$ & ${\tt 65804}$ & ${\tt 10.4}$ & ${\tt 9.9}$ & ${\tt 80}$ & ${\tt 1447}$ & ${\tt 19.8}$ & ${\tt 14.5}$ & ${\tt 12}$ \\
cell1 & ${\tt 397466}$ & ${\tt 1.4}$ & ${\tt 0.8}$ & ${\tt 135}$ & ${\tt 99444}$ & ${\tt 11.4}$ & ${\tt 11.3}$ & ${\tt 154}$ & ${\tt 14807}$ & ${\tt 12.02}$ & ${\tt 11.4}$ & ${\tt 17}$ \\
n2c6-b6 & ${\tt 417436}$ & ${\tt 1.3}$ & ${\tt 0.7}$ & ${\tt 139}$ & ${\tt 130756}$ & ${\tt 10.8}$ & ${\tt 10.7}$ & ${\tt 188}$ & ${\tt 20970}$ & ${\tt 9.3}$ & ${\tt 8.6}$ & ${\tt 31}$ \\
beacxc & ${\tt 427377}$ & ${\tt 1.3}$ & ${\tt 0.7}$ & ${\tt 147}$ & ${\tt 164709}$ & ${\tt 9.2}$ & ${\tt 9}$ & ${\tt 183}$ & ${\tt 25532}$ & ${\tt 7.02}$ & ${\tt 6.7}$ & ${\tt 30}$ \\
rdist3a & ${\tt 530869}$ & ${\tt 1.1}$ & ${\tt 0.6}$ & ${\tt 165}$ & ${\tt 215732}$ & ${\tt 9.6}$ & ${\tt 9.5}$ & ${\tt 205}$ & ${\tt 40461}$ & ${\tt 1.6}$ & ${\tt 1.4}$ & ${\tt 20}$ \\
lp\_wood1p & ${\tt 562805}$ & ${\tt 1.01}$ & ${\tt 0.5}$ & ${\tt 176}$ & ${\tt 232317}$ & ${\tt 10.8}$ & ${\tt 10.7}$ & ${\tt 342}$ & ${\tt 30820}$ & ${\tt 9.5}$ & ${\tt 9.2}$ & ${\tt 74}$ \\
TF15 & ${\tt 705327}$ & ${\tt 0.8}$ & ${\tt 0.4}$ & ${\tt 233}$ & ${\tt 253965}$ & ${\tt 10.7}$ & ${\tt 10.6}$ & ${\tt 331}$ & ${\tt 23327}$ & ${\tt 11.7}$ & ${\tt 11.4}$ & ${\tt 44}$ \\
air03 & ${\tt 707140}$ & ${\tt 0.8}$ & ${\tt 0.4}$ & ${\tt 242}$ & ${\tt 282753}$ & ${\tt 9.1}$ & ${\tt 8.9}$ & ${\tt 362}$ & ${\tt 40195}$ & ${\tt 6.4}$ & ${\tt 6.2}$ & ${\tt 40}$ \\
Franz8 & ${\tt 972133}$ & ${\tt 0.6}$ & ${\tt 0.3}$ & ${\tt 278}$ & ${\tt 309937}$ & ${\tt 11.4}$ & ${\tt 11.3}$ & ${\tt 483}$ & ${\tt 33049}$ & ${\tt 11.8}$ & ${\tt 8.8}$ & ${\tt 80}$ \\
	\hline				
	\end{tabular}}
	\caption{Performance of the codes generated by TACO, Piecewise-regular, and SpComp for the sparse matrices shown in Table~\ref{tab:spmspv_stat} in terms of number of retired instructions, \% of L1 and L2 instruction misses compared to the retired instructions, and execution time in micro-second(usec). }
	\label{tab:spmspv_perf}
	\end{center}
\end{table*} 

Table~\ref{tab:spmspv_perf} presents the performance achieved by the SpMSpV codes generated by TACO, 
piecewise-regular, and SpComp for the sparse matrices and sparse vectors shown in Table~\ref{tab:spmspv_stat}. 
The performance is captured in terms of the number of retired instructions, \% of retired instructions missed 
by L1 and L2 instruction caches, and execution time in micro-second (usec). We observe significant execution 
time improvement by SpComp compared to both TACO and Piecewise-regular framework. Although SpComp incurs a 
significant amount of relative instruction misses, the major saving happens due to the reduced number of 
retired instructions by the sparsity structure-specific execution of the SpComp-generated code. 

The plot in Figure~\ref{fig:spmspv}(a) illustrates the performance of SpComp compared to TACO. The \% gain 
in execution time is inversely proportional to the \% increment in L1 and L2 instruction misses but is limited 
to the \% reduction of the retired instructions. The increments in relative instruction misses by L1 and L2 
caches occur due to the presence of piecewise-regular loops. Note that, the \% gain and \% reduction by SpComp 
are computed as ${\tt (perf_{taco} - perf_{spcomp}) / perf_{taco} * 100}$, where ${\tt perf_{taco}}$ and 
${\tt perf_{spcomp}}$ denote the performance by TACO and SpComp respectively. Similarly, the \% increment by 
SpComp is computed as ${\tt (perf_{spcomp} - perf_{taco}) / perf_{spcomp} * 100}$.

\begin{figure}[htbp]
	\begin{center}
		\scalebox{.85}{
			\begin{tabular}{cc}
			\includegraphics[width=8cm,height=4.5cm]{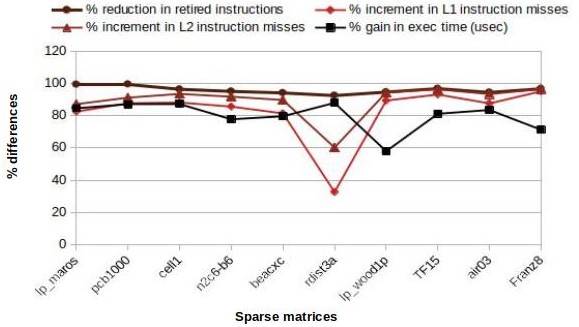}
				&
			\includegraphics[width=8cm,height=4.5cm]{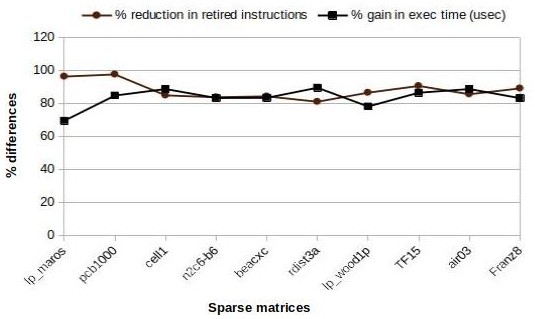} \\
				(a) & (b)
			\end{tabular}}
		\caption{Plots illustrating the performance of SpComp compared to (a) TACO and (b) Piecewise-regular framework.}
		\label{fig:spmspv}
	\end{center}
\end{figure}

As illustrated in the plot in Figure~\ref{fig:spmspv}(b), the \% gain in execution time by SpComp compared to piecewise-regular 
framework is primarily dominated by the \% reduction in retired instructions. This is quite obvious as, unlike the piecewise 
regular work, SpComp considers sparsity of both sparse matrix and sparse vector, making the code specific to both the sparsity 
structures. However, the increments in relative instruction miss by SpComp for {\em lp\_maros} and {\em pcb1000}
occur due to the irregularity present in the SpMSpV output, resulting in piecewise-regular loops of small size. On the contrary, 
SpComp showcases significantly fewer relative instruction misses for {\em rdist3a} as the SpMSpV output showcases high regularity, 
resulting in large-sized loops.

\subsubsection{Sparse Cholesky Decomposition} 
This matrix operation has utility in the circuit simulation domain, where the circuit is simulated until it converges. 
Here the sparsity structure models the physical connections of the circuit which remains unchanged throughout the simulation. 
In each iteration of the simulation the sparse matrix is factorized (Cholesky decomposed in the case of Hermitian 
positive-definite matrices) and the factorized matrix is used to solve the set of linear equations. In this reusable scenario, 
having a Cholesky decomposition customized to the underlying sparsity structure should benefit the overall application performance.

We consider the Cholesky decomposition $\tt chol(A)$, where ${\tt A = LL^*}$ is a factorization of Hermitian positive-definite 
matrix {\tt A} into the product of a lower triangular matrix {\tt L} and its conjugate transpose ${\tt L^*}$.
The operation is mutable, i.e., alters the sparsity structure of the input by introducing fill-in elements, and 
has multiple statements and nested loops with loop-carried and inter-statement dependencies. 

The SpComp-generated code is compared against \textit{CHOLMOD}~\cite{cholmod}, the high-performance library for 
sparse Cholesky decomposition. CHOLMOD applies different ordering methods like \textit{Approximate Minimum Degree}
(AMD)~\cite{10.1145/1024074.1024081}, \textit{Column Approximate Minimum Degree}(COLAMD)~\cite{10.1145/1024074.1024080} 
etc. to reduce the fill-in of the factorized sparse matrix and selects the best-ordered matrix. However, we configure 
both CHOLMOD and SpComp to use only AMD permutation. CHOLMOD offers {\em cholmod\_analyze} and {\em cholmod\_factorize} 
routines to perform symbolic and numeric factorization respectively. We profile the {\em cholmod\_factorize} function 
call for the evaluation. 

We select the sparse matrices from the Suitesparse Matrix Collection~\cite{spCollection}. As the Cholesky decomposition 
applies to symmetric positive definite matrices, it is challenging to identify such matrices from the collection. We have 
noticed that sparse matrices from the structural problem domain are primarily positive definite and thus can be Cholesky 
decomposed. In the collection, we have identified {\tt 200+} such Cholesky factorizable sparse matrices and selected {\tt 35+} 
matrices for our evaluation from the range of {\tt 1000} to {\tt 17000} numbers of nonzero elements. We see that a sparse 
matrix with more nonzeroes exhausts the available memory during code generation and thus is killed.

\begin{table*}
\begin{center}
\scalebox{.8}{
\begin{tabular}{|cccc|cc|cc|}
	\hline
	\multicolumn{4}{|c|}{Input sparse matrix} & \multicolumn{2}{c}{Output sparse matrix} & \multicolumn{2}{|c|}{Generated code} \\
	\hline
	Matrix & Size & Nonzeroes & Sparsity & Nonzeroes & fill-in & Amount of loop in & Avg loop \\
	& & & (\%) & +fill-in & (\%) & generated code(\%) & size \\
	\hline
	\hline
	{nos1} & ${\tt 237\times237}$ & ${\tt 1017}$ & ${\tt 98.19}$ & ${\tt 1094}$ & ${\tt 7.03}$ & ${\tt 37.72}$ & ${\tt 2.35}$ \\
	{mesh3e1} & ${\tt 289\times289}$ & ${\tt 1377}$ & ${\tt 98.35}$ & ${\tt 3045}$ & ${\tt 54.77}$ & ${\tt 85.57}$ & ${\tt 5.57}$ \\
	{bcsstm11} & ${\tt 1473\times1473}$ & ${\tt 1473}$ & ${\tt 99.93}$ & ${\tt 1473}$ & ${\tt 0}$ & ${\tt 100}$ & ${\tt 1473}$ \\
	{can\_229} & ${\tt 229\times229}$ & ${\tt 1777}$ & ${\tt 96.61}$ & ${\tt 3726}$ & ${\tt 52.31}$ & ${\tt 87.64}$ & ${\tt 5.96}$ \\
	{bcsstm26} & ${\tt 1922\times1922}$ & ${\tt 1922}$ & ${\tt 99.95}$ & ${\tt 1922}$ & ${\tt 0}$ & ${\tt 100}$ & ${\tt 1922}$ \\
	{mesh2e1} & ${\tt 306\times306}$ & ${\tt 2018}$ & ${\tt 97.84}$ & ${\tt 4036}$ & ${\tt 50}$ & ${\tt 85.74}$ & ${\tt 5.97}$ \\
	{bcsstk05} & ${\tt 153\times153}$ & ${\tt 2423}$ & ${\tt 89.65}$ & ${\tt 3495}$ & ${\tt 30.67}$ & ${\tt 89.14}$ & ${\tt 7.52}$ \\
	{lund\_b} & ${\tt 147\times147}$ & ${\tt 2441}$ & ${\tt 88.7}$ & ${\tt 3502}$ & ${\tt 30.29}$ & ${\tt 88.49}$ & ${\tt 7.89}$ \\
	{can\_292} & ${\tt 292\times292}$ & ${\tt 2540}$ & ${\tt 97.02}$ & ${\tt 3674}$ & ${\tt 30.86}$ & ${\tt 83.52}$ & ${\tt 4.77}$ \\
	{dwt\_193} & ${\tt 193\times193}$ & ${\tt 3493}$ & ${\tt 90.62}$ & ${\tt 6083}$ & ${\tt 42.57}$ & ${\tt 93.17}$ & ${\tt 9.59}$ \\
	{bcsstk04} & ${\tt 132\times132}$ & ${\tt 3648}$ & ${\tt 79.06}$ & ${\tt 4945}$ & ${\tt 26.22}$ & ${\tt 92.93}$ & ${\tt 9.25}$ \\
	{bcsstk19} & ${\tt 817\times817}$ & ${\tt 6853}$ & ${\tt 98.97}$ & ${\tt 10462}$ & ${\tt 34.49}$ & ${\tt 81.13}$ & ${\tt 4.49}$ \\
	{dwt\_918} & ${\tt 918\times918}$ & ${\tt 7384}$ & ${\tt 99.12}$ & ${\tt 16999}$ & ${\tt 56.56}$ & ${\tt 90.84}$ & ${\tt 8.65}$ \\
	{dwt\_1007} & ${\tt 1007\times1007}$ & ${\tt 8575}$ & ${\tt 99.15}$ & ${\tt 21140}$ & ${\tt 59.43}$ & ${\tt 91.25}$ & ${\tt 8.62}$ \\
	{dwt\_1242} & ${\tt 1242\times1242}$ & ${\tt 10426}$ & ${\tt 99.32}$ & ${\tt 25660}$ & ${\tt 59.37}$ & ${\tt 92.21}$ & ${\tt 9.7}$ \\
	{bcsstm25} & ${\tt 15439\times15439}$ & ${\tt 15439}$ & ${\tt 99.99}$ & ${\tt 15439}$ & ${\tt 0}$ & ${\tt 100}$ & ${\tt 15439}$ \\
	{dwt\_992} & ${\tt 992\times992}$ & ${\tt 16744}$ & ${\tt 98.29}$ & ${\tt 38578}$ & ${\tt 56.59}$ & ${\tt 95.95}$ & ${\tt 8.9}$ \\
	\hline
\end{tabular}}
\caption{Sparsity structures of input and output sparse matrices and statistics of piecewise-regular loops.}
\label{tab:cholmatrices}
\end{center}
\end{table*}

Table~\ref{tab:cholmatrices} presents the sparsity structure of input and output sparse matrices and the structure of 
the generated piecewise regular loops for a few sparse matrices. All the matrices in the table have sparsity within 
the range of {\tt 79\%} to {\tt 99\%} and almost all of them introduce a considerable amount of fill-in when Cholesky 
decomposed. The amount of fill-in(\%) is computed by ${\tt (elem_{out} - elem_{in}) / elem_{out} * 100}$, 
where ${\tt elem_{in}}$ and ${\tt elem_{out}}$ denote the number of non-zero elements before and after factorization. 
Sparse matrices {\em bcsstm11}, {\em bcsstm26}, and {\em bcsstm25} are diagonal, and thus no fill-in 
element is generated when factorized. For these diagonal sparse matrices, SpComp generates a single regular loop with an
average loop size of {\tt 1473}, {\tt 1922}, and {\tt 15439}, the number of non-zero elements. In these cases, 
{\tt 100\%} of the generated code is looped back. 

The rest of the sparse matrices in Table~\ref{tab:cholmatrices} showcase irregular sparsity structures and 
thus produce different amounts of fill-in elements and piecewise regular loops with different average loop sizes. As
an instance, sparse matrix {\tt nos1} with {\tt 98.19\%} sparsity generates {\tt 7.03\%} fill-in elements when Cholesky 
decomposed and {\tt 37.72\%} of generated code is piecewise-regular loops with an average loop size of ${\tt 2.35}$. 
Similarly, another irregular sparse matrix {\tt dwt\_992} with {\tt 98.29\%} sparsity produces {\tt 56.59\%} fill-in 
elements and {\tt 95.95\%} of generated code represents piecewise-regular loops with an average loop size of {\tt 8.9}.

\begin{table*}
	\begin{center}
		\scalebox{.8}{
			\begin{tabular}{|c|ccc|ccc|}
			\hline
			& \multicolumn{3}{|c|}{CHOLMOD} & \multicolumn{3}{c|}{SpComp} \\
			Matrix & Rtd & TLB & Exec & Rtd & TLB & Exec \\
			 & instr & miss & time & instr & miss & time \\
			\hline
			\hline
			nos1 & {\tt 151641} & {\tt 151641} & {\tt 34} &	{\tt 5312} & {\tt 12} & {\tt 7.9} \\
			mesh3e1	& {\tt 425460} & {\tt 425460} & {\tt 108} & {\tt 61960} & {\tt 23} & {\tt 37} \\
			bcsstm11 & {\tt 432790} & {\tt 432790} & {\tt 89} & {\tt 13458} & {\tt 13} & {\tt 6.8} \\
			can\_229 & {\tt 574458}	& {\tt 574460} & {\tt 128} & {\tt 79141} & {\tt 26} & {\tt 42} \\
			bcsstm26 & {\tt 561262} & {\tt 561262} & {\tt 113} & {\tt 17500} & {\tt	29} & {\tt 10} \\
			mesh2e1	& {\tt 585133} & {\tt 585134} & {\tt 144} & {\tt 88186}	& {\tt 27} & {\tt 47} \\
			bcsstk05 & {\tt 492787} & {\tt 492788} & {\tt 106} & {\tt 82730} & {\tt 24} & {\tt 39} \\
			lund\_b	& {\tt 498362} & {\tt 498363} & {\tt 103} & {\tt 80308} & {\tt 25} & {\tt 36} \\
			can\_292 & {\tt 474573} & {\tt 474574} & {\tt 111} & {\tt 59632} & {\tt	31} & {\tt 33} \\
    			dwt\_193 & {\tt 1129067} & {\tt 1129068} & {\tt 217} & {\tt 378388} & {\tt 50} & {\tt 142} \\
			bcsstk04 & {\tt 825630} & {\tt 825630} & {\tt 158} & {\tt 216535} & {\tt 38} & {\tt 87} \\
			bcsstk19 & {\tt 1342154} & {\tt 1342158} & {\tt	344} & {\tt 118177} & {\tt 60} & {\tt 135} \\
			dwt\_918 & {\tt 2843603} & {\tt	2843604} & {\tt	791} & {\tt 763262} & {\tt 96} & {\tt 327} \\
			dwt\_1007 & {\tt 3593429} & {\tt 3593430} & {\tt 888} & {\tt 1007852} & {\tt 139} & {\tt 487} \\
			dwt\_1242 & {\tt 5139767} & {\tt 5139773} & {\tt 1364} & {\tt 1467372} & {\tt 165} & {\tt 596} \\
			bcsstm25 & {\tt	4441657} & {\tt	4441681} & {\tt	1315} & {\tt 139182} & {\tt 83} & {\tt 81} \\
			dwt\_992 & {\tt 8419954} & {\tt 8419954} & {\tt	2123} & {\tt 2775950} & {\tt 276} & {\tt 1270} \\
			\hline
		\end{tabular}}
		\caption{Performance of the codes by CHOLMOD and SpComp for the sparse matrices shown in Table~\ref{tab:cholmatrices} in terms of number of retired 
		instructions, number of TLB miss, and execution time in micro-second (usec).}
		\label{tab:cholperf}
	\end{center}
\end{table*}

\begin{figure}[htbp]
\begin{center}
\scalebox{.9}{
\begin{tabular}{c}
	\includegraphics[width=8.3cm, height=5cm]{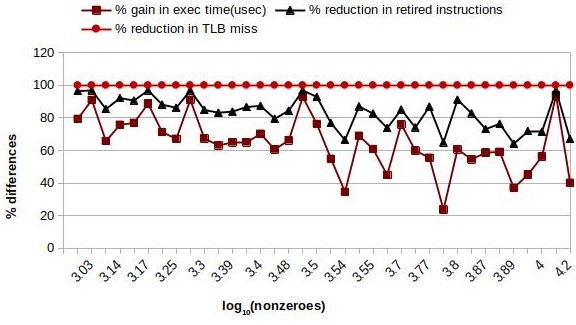}
\end{tabular}}
	\caption{Plot illustrating the performance of SpComp compared to CHOLMOD.}
	\label{fig:graph1}
\end{center}
\end{figure}

The graph in Figure~\ref{fig:graph1} illustrates the performance gained by SpComp against CHOLMOD. The number of 
nonzero elements of sparse matrices is plotted against the logarithmic scale on X-axis. Considering the performance in terms of the
number of retired instructions, the number of TLB miss, and execution time (usec) of CHOLMOD as the baseline, we plot the 
performance difference (in \%) by SpComp against Y-axis. The performance difference is computed as 
${\tt (perf_{cholmod} - perf_{spcomp}) / perf_{cholmod} * 100}$, where ${\tt perf_{cholmod}}$ and ${\tt perf_{spcomp}}$ 
denote the performance by CHOLMOD and SpComp respectively. 

We see a directly proportional relation between \% gain in execution time and \% reduction in the number of retired instructions.  
SpComp contributes to a lesser number of instructions and thus improves the execution time. 
We find $\approx$ {\tt 100\%} reduction in the instructions executed for use cases where the sparse matrices are diagonal, 
like {\em bcsstm25} and {\em bcsstm39}. Additionally, we see $\approx${\tt 100\%} improvement in TLB misses for all the selected 
use cases. This happens due to the static allocation of the fill-in elements that avert the need for dynamic modification of 
sparse data storage, thus improving the TLB miss. Table~\ref{tab:cholperf} presents the raw performance numbers for the sparse matrices.
We see an equal number of retired instructions and TLB misses by CHOLMOD, which implies dynamic memory allocation for all the 
nonzero elements including fill-in elements.

SpComp takes $\approx${\tt 4}sec to perform the analysis on sparse matrix {\tt nos1} and $\approx${\tt 20}min to perform the same 
on sparse matrix {\tt dwt\_992}. As expected, our approach generates large codes even for moderate-sized sparse matrices. In the case of
{\tt dwt\_992} with size ${\tt 992\times 992}$ and NNZ of {\tt 16744} the generated code size is $\approx$ {\tt 6.3} MB. 

\section{Related Work}
\label{sec:relwork}
Here we provide an overview of the work related to optimizing sparse matrix operations either by reorganizing 
data or by reorganizing computation. Over decades researchers have explored various optimization approaches and have established 
various techniques, either hand-crafted or compiler-aided. 
	
Researchers have developed various hand-crafted algorithms involving custom data structures like CSR, CSC, COO, CDS, etc., 
\cite{Eijkhout:1992:LWN:898702,Saad94sparskit:a} that contain the data indices and values of the non-zero data elements. 
Hand-crafted libraries like Cholmod~\cite{cholmod}, Klu~\cite{klu}, CSparse~\cite{csparse} etc. from SuiteSparse~\cite{sp}; 
C++ supported SparseLib++~\cite{spLib,spLib1}, Eigen~\cite{eigen}; Python supported Numpy~\cite{numpy}; 
Intel provided MKL~\cite{mkl}, Radios~\cite{pardiso1,pardiso2}; CUDA supported cuSparse~\cite{cusparse}; Java supported 
Parallel Colt~\cite{pc}; C, Fortran supported PaStiX~\cite{pastix,pastix1}, MUMPS~\cite{mumps}, SuperLU~\cite{superlu} etc. 
are widely used in current practice. Although these libraries offer high-performing sparse matrix operations, they typically require
human effort to build the libraries and port them to different architectures. Also, libraries are often difficult to be 
used in the application and composition of operations encapsulated within separate library functions may be challenging. 

Compiler-aided optimization technique includes run-time optimization approaches like inspection-execution\cite{ins_exe,ins_exe1,ins_exe2} 
where the inspector profiles the memory access information, inspects data dependencies during execution, 
and uses this information to generate an optimized schedule. The executor executes the optimized schedule. 
Such optimization can be even hardware-aware like performing run-time optimization 
for distributed memory architecture~\cite{ins_exe,10.1145/72935.72967,10.1145/1122971.1122990}, and 
shared memory architecture~\cite{RAUCHWERGER1998527,5260538,10.1007/978-3-319-07518-1_8,10.1007/978-3-319-17473-0_9} etc.
Compiler support has been developed to automatically reduce the time and space overhead of inspector-executor
~\cite{Mohammadi:2019:SCD:3314221.3314646,DBLP:journals/corr/abs-1807-10852,Venkat:2015:LDT:2737924.2738003,Venkat:2016:AWP:3014904.3014959,
8436444,Cheshmi:2018:PIT:3291656.3291739,10.1007/3-540-60321-2_3,pnandy}. Polyhedral transformation 
mechanisms~\cite{10.1145/2688500.2688515,10.1145/2581122.2544141,Venkat:2015:LDT:2737924.2738003,
Venkat:2016:AWP:3014904.3014959}, \textit{Sparse Polyhedral Framework} (SPF)~\cite{8436444,STROUT201632} etc. 
address the cost reduction of the inspection. Other run-time approaches~\cite{Kamin:2014:ORS:2775053.2658773,1039784,
Lee:2008:ART:1375527.1375558,10.1007/3-540-58184-7_111,Li2015} propose optimal data distributions during execution such that both 
computation and communication overhead is reduced. Other run-time technique like \textit{Eggs}~\cite{egg} dynamically intercepts the 
arithmetic operations and performs symbolic execution by piggybacking onto Eigen code to accelerate the execution. 
	 
In contrast to run-time mechanisms, compile-time optimization techniques do not incur any execution-time overhead. 
Given the sparse input and code handling dense matrix operation, the work done by \cite{10.1145/181181.181538,
10.1007/3-540-57659-2_4,485501,10.1145/290200.287636,10.5555/645672.665543,bik1993compilation} determine the best storage 
for sparse data and generate the code specific to the underlying storage but not specific to the sparsity structure of 
the input. They handle both single-statement and multi-statement loops and regular loop nests. The generated code contains 
indirect references. These approaches have been implemented in {\em MT1} compiler~\cite{bik1996sparse}, 
creating a sparse compiler to automatically convert a dense program into semantically equivalent sparse code.
Given the best storage for the sparse data, \cite{10.1145/263580.263630,10.5555/269106,10.5555/928197,10.1145/509593.509603,
10.1145/335231.335240,10.5555/866903} propose relational algebra-based techniques to generate efficient sparse matrix 
programs from dense matrix programs and specifications of the sparse input. Similar to \cite{10.1145/181181.181538,
10.1007/3-540-57659-2_4,485501,10.1145/290200.287636,10.5555/645672.665543}, they do not handle mutable cases and generate 
code with indirect references. However, unlike the aforementioned work they handle arbitrary loop nests. Other compile-time 
techniques like \textit{Tensor Algebra Compiler}(TACO)~\cite{taco1,taco2,taco3} automatically generate storage specific code 
for a given matrix operation. They provide a compiler-based technique to generate code for any combination of dense and 
sparse data. \textit{Bernoulli} compiler proposes restructuring compiler to transform a sequential, dense matrix 
Conjugate-Gradient method into a parallel, sparse matrix program~\cite{vladimir}. 

Compared to immutable kernels, compile-time optimization of mutable kernels is intrinsically 
challenging due to the run-time generation of fill-in elements~\cite{10.2307/2156057}. \textit{Symbolic analysis} \cite{davis1} is a 
sparsity structure-specific graph technique to determine the computation pattern of any matrix operation. The information generated by the 
Symbolic analysis guides the optimization of the numeric computation. \cite{sympiler,8665791,cheshmi2022vectorizing} generate 
vectorized and task level parallel codes by decoupling symbolic analysis from the compile-time optimizations. The generated code 
is specific to the sparsity structure of the sparse matrices and is free of indirect references. However, the customization of 
the analysis to handle different kernels requires manual effort. \cite{piecewise1, piecewise2} propose a fundamentally different approach 
where they construct polyhedra from the sparsity structure of the input matrix, and generate indirect reference free regular loops. 
The approach only applies to immutable kernels and the generated code supports out-of-order execution wherever applicable. Their 
work is the closest to our work available in the literature.

Alternate approaches include machine learning techniques and advanced search techniques to select the optimal storage format and 
suitable algorithms for different sparse matrix operations~\cite{Xie:2019:IIA:3330345.3330354,chen2018optimizing,Byun2012AutotuningSM,
10.1007/3-540-57659-2_4,485501}. Apart from generic run-time and compile-time optimization techniques, domain experts have also explored 
domain-specific sparse matrix optimization. As an instance, \cite{wu,Kapre09a,6747987,8252462,6164974,6588795,7160030,7073850,6861585,
7160041,7324551} propose FPGA accelerated sparse matrix operations required in circuit simulation domain. 

\section{Conclusions and Future Work}
\label{sec:conclusion}	 
SpComp is a fully automatic sparsity-structure specific compilation technique that uses data flow analysis to statically
generate piecewise-regular codes customized to the underlying sparsity structures. The generated code is free of indirect 
access and is amenable to SIMD vectorization. It is valid until the sparsity structure changes. 

We focus on the sparsity structure of the output matrices and not just that of the input matrices.
The generality of our method arises from the fact that we drive our analysis by the sparsity structure of the
output matrices which depend on the sparsity structure of the input matrices and hence is covered by the analysis. 
This generality arises from our use of abstract interpretation-based static analysis. Unlike the state-of-art methods, 
our method is fully automatic and does not require any manual effort to customize to different kernels.
	 
In the future, we would like to parallelize our implementation which suffers from significant computation 
overhead and memory limitations while handling large matrices and computation-intensive kernels. We would 
also like to explore the possibility of using GPU-accelerated architectures for our implementation.
	 
Currently, we generate SIMD parallelizable code specific to shared memory architectures.
In the future, we would like to explore the generation of multiple programs multiple data (MPMD)
parallelized codes specific to distributed architectures.
	 
Finally, the current implementation considers the data index of each non-zero element individually.
We would like to explore whether the polyhedra built from the sparsity structure
of the input sparse matrices can be used to construct the precise
sparsity structure of the output sparse matrices.

\bibliographystyle{ACM-Reference-Format}
\bibliography{bibfile}


\begin{thebibliography}{94}


\ifx \showCODEN    \undefined \def \showCODEN     #1{\unskip}     \fi
\ifx \showDOI      \undefined \def \showDOI       #1{#1}\fi
\ifx \showISBNx    \undefined \def \showISBNx     #1{\unskip}     \fi
\ifx \showISBNxiii \undefined \def \showISBNxiii  #1{\unskip}     \fi
\ifx \showISSN     \undefined \def \showISSN      #1{\unskip}     \fi
\ifx \showLCCN     \undefined \def \showLCCN      #1{\unskip}     \fi
\ifx \shownote     \undefined \def \shownote      #1{#1}          \fi
\ifx \showarticletitle \undefined \def \showarticletitle #1{#1}   \fi
\ifx \showURL      \undefined \def \showURL       {\relax}        \fi
\providecommand\bibfield[2]{#2}
\providecommand\bibinfo[2]{#2}
\providecommand\natexlab[1]{#1}
\providecommand\showeprint[2][]{arXiv:#2}

\bibitem[num(2015)]%
        {numpy}
 \bibinfo{year}{2015}\natexlab{}.
\newblock \bibinfo{booktitle}{\emph{Guide to NumPy 2nd}}.
\newblock \bibinfo{publisher}{CreateSpace Independent Publishing Platform},
  \bibinfo{address}{USA}. 364 pages.
\newblock
\showISBNx{151730007X 9781517300074}


\bibitem[Amestoy et~al\mbox{.}(1998)]%
        {mumps}
\bibfield{author}{\bibinfo{person}{P.R. Amestoy}, \bibinfo{person}{I.S. Duff},
  {and} \bibinfo{person}{J.-Y. L'Excellent}.} \bibinfo{year}{1998}\natexlab{}.
\newblock \showarticletitle{Multifrontal Parallel Distributed Symmetric and
  Unsymmetric Solvers}.
\newblock \bibinfo{journal}{\emph{Comput. Methods Appl. Mech. Eng}}
  \bibinfo{volume}{184} (\bibinfo{year}{1998}), \bibinfo{pages}{501--520}.
\newblock


\bibitem[Amestoy et~al\mbox{.}(2004)]%
        {10.1145/1024074.1024081}
\bibfield{author}{\bibinfo{person}{Patrick~R. Amestoy},
  \bibinfo{person}{Timothy~A. Davis}, {and} \bibinfo{person}{Iain~S. Duff}.}
  \bibinfo{year}{2004}\natexlab{}.
\newblock \showarticletitle{Algorithm 837: AMD, an Approximate Minimum Degree
  Ordering Algorithm}.
\newblock \bibinfo{journal}{\emph{ACM Trans. Math. Softw.}}
  \bibinfo{volume}{30}, \bibinfo{number}{3} (\bibinfo{date}{Sept.}
  \bibinfo{year}{2004}), \bibinfo{pages}{381–388}.
\newblock
\showISSN{0098-3500}
\urldef\tempurl%
\url{https://doi.org/10.1145/1024074.1024081}
\showDOI{\tempurl}


\bibitem[Augustine et~al\mbox{.}(2019)]%
        {piecewise2}
\bibfield{author}{\bibinfo{person}{Travis Augustine},
  \bibinfo{person}{Janarthanan Sarma}, \bibinfo{person}{Louis-No{\"e}l
  Pouchet}, {and} \bibinfo{person}{Gabriel Rodr{\'i}guez}.}
  \bibinfo{year}{June,2019}\natexlab{}.
\newblock \showarticletitle{Generating piecewise-regular code from irregular
  structures}. In \bibinfo{booktitle}{\emph{Proceedings of the 40th ACM SIGPLAN
  Conference on Programming Language Design and Implementation}}
  \emph{(\bibinfo{series}{PLDI '19})}. \bibinfo{pages}{625--639}.
\newblock
\urldef\tempurl%
\url{https://doi.org/10.1145/3314221.3314615}
\showURL{%
\tempurl}


\bibitem[Banerjee(1988)]%
        {10.5555/535430}
\bibfield{author}{\bibinfo{person}{Utpal~K. Banerjee}.}
  \bibinfo{year}{1988}\natexlab{}.
\newblock \bibinfo{booktitle}{\emph{Dependence Analysis for Supercomputing}}.
\newblock \bibinfo{publisher}{Kluwer Academic Publishers},
  \bibinfo{address}{USA}.
\newblock
\showISBNx{0898382890}


\bibitem[Bastoul(2004)]%
        {1342537}
\bibfield{author}{\bibinfo{person}{C. Bastoul}.}
  \bibinfo{year}{2004}\natexlab{}.
\newblock \showarticletitle{Code generation in the polyhedral model is easier
  than you think}. In \bibinfo{booktitle}{\emph{Proceedings. 13th International
  Conference on Parallel Architecture and Compilation Techniques, 2004. PACT
  2004.}} \bibinfo{pages}{7--16}.
\newblock
\urldef\tempurl%
\url{https://doi.org/10.1109/PACT.2004.1342537}
\showDOI{\tempurl}


\bibitem[Bastoul et~al\mbox{.}(2004)]%
        {10.1007/978-3-540-24644-2_14}
\bibfield{author}{\bibinfo{person}{C{\'e}dric Bastoul}, \bibinfo{person}{Albert
  Cohen}, \bibinfo{person}{Sylvain Girbal}, \bibinfo{person}{Saurabh Sharma},
  {and} \bibinfo{person}{Olivier Temam}.} \bibinfo{year}{2004}\natexlab{}.
\newblock \showarticletitle{Putting Polyhedral Loop Transformations to Work}.
  In \bibinfo{booktitle}{\emph{Languages and Compilers for Parallel
  Computing}}, \bibfield{editor}{\bibinfo{person}{Lawrence Rauchwerger}} (Ed.).
  \bibinfo{publisher}{Springer Berlin Heidelberg}, \bibinfo{address}{Berlin,
  Heidelberg}, \bibinfo{pages}{209--225}.
\newblock
\showISBNx{978-3-540-24644-2}


\bibitem[Basumallik and Eigenmann(2006)]%
        {10.1145/1122971.1122990}
\bibfield{author}{\bibinfo{person}{Ayon Basumallik} {and}
  \bibinfo{person}{Rudolf Eigenmann}.} \bibinfo{year}{2006}\natexlab{}.
\newblock \showarticletitle{Optimizing Irregular Shared-Memory Applications for
  Distributed-Memory Systems}. In \bibinfo{booktitle}{\emph{Proceedings of the
  Eleventh ACM SIGPLAN Symposium on Principles and Practice of Parallel
  Programming}} (New York, New York, USA) \emph{(\bibinfo{series}{PPoPP '06})}.
  \bibinfo{publisher}{Association for Computing Machinery},
  \bibinfo{address}{New York, NY, USA}, \bibinfo{pages}{119–128}.
\newblock
\showISBNx{1595931899}
\urldef\tempurl%
\url{https://doi.org/10.1145/1122971.1122990}
\showDOI{\tempurl}


\bibitem[Baxter et~al\mbox{.}(1989)]%
        {10.1145/72935.72967}
\bibfield{author}{\bibinfo{person}{D. Baxter}, \bibinfo{person}{R.
  Mirchandaney}, {and} \bibinfo{person}{J.~H. Saltz}.}
  \bibinfo{year}{1989}\natexlab{}.
\newblock \showarticletitle{Run-Time Parallelization and Scheduling of Loops}.
  In \bibinfo{booktitle}{\emph{Proceedings of the First Annual ACM Symposium on
  Parallel Algorithms and Architectures}} (Santa Fe, New Mexico, USA)
  \emph{(\bibinfo{series}{SPAA '89})}. \bibinfo{publisher}{Association for
  Computing Machinery}, \bibinfo{address}{New York, NY, USA},
  \bibinfo{pages}{303–312}.
\newblock
\showISBNx{089791323X}
\urldef\tempurl%
\url{https://doi.org/10.1145/72935.72967}
\showDOI{\tempurl}


\bibitem[Benabderrahmane et~al\mbox{.}(2010)]%
        {10.1007/978-3-642-11970-5_16}
\bibfield{author}{\bibinfo{person}{Mohamed-Walid Benabderrahmane},
  \bibinfo{person}{Louis-No{\"e}l Pouchet}, \bibinfo{person}{Albert Cohen},
  {and} \bibinfo{person}{C{\'e}dric Bastoul}.} \bibinfo{year}{2010}\natexlab{}.
\newblock \showarticletitle{The Polyhedral Model Is More Widely Applicable Than
  You Think}. In \bibinfo{booktitle}{\emph{Compiler Construction}},
  \bibfield{editor}{\bibinfo{person}{Rajiv Gupta}} (Ed.).
  \bibinfo{publisher}{Springer Berlin Heidelberg}, \bibinfo{address}{Berlin,
  Heidelberg}, \bibinfo{pages}{283--303}.
\newblock
\showISBNx{978-3-642-11970-5}


\bibitem[Bik et~al\mbox{.}(1996)]%
        {bik1996sparse}
\bibfield{author}{\bibinfo{person}{Aart~JC Bik}, \bibinfo{person}{Peter~JH
  Brinkhaus}, {and} \bibinfo{person}{HAG Wijshoff}.}
  \bibinfo{year}{1996}\natexlab{}.
\newblock \bibinfo{booktitle}{\emph{The Sparse Compiler MT1: A Reference
  Guide}}.
\newblock \bibinfo{publisher}{Citeseer}.
\newblock


\bibitem[Bik and Wijshoff(1993)]%
        {bik1993compilation}
\bibfield{author}{\bibinfo{person}{Aart~JC Bik} {and} \bibinfo{person}{Harry~AG
  Wijshoff}.} \bibinfo{year}{1993}\natexlab{}.
\newblock \showarticletitle{Compilation techniques for sparse matrix
  computations}. In \bibinfo{booktitle}{\emph{Proceedings of the 7th
  international conference on Supercomputing}}. \bibinfo{pages}{416--424}.
\newblock


\bibitem[Bik et~al\mbox{.}(1998)]%
        {10.1145/290200.287636}
\bibfield{author}{\bibinfo{person}{Aart J.~C. Bik}, \bibinfo{person}{Peter
  J.~H. Brinkhaus}, \bibinfo{person}{Peter M.~W. Knijnenburg}, {and}
  \bibinfo{person}{Harry A.~G. Wijshoff}.} \bibinfo{year}{1998}\natexlab{}.
\newblock \showarticletitle{The Automatic Generation of Sparse Primitives}.
\newblock \bibinfo{journal}{\emph{ACM Trans. Math. Softw.}}
  \bibinfo{volume}{24}, \bibinfo{number}{2} (\bibinfo{date}{June}
  \bibinfo{year}{1998}), \bibinfo{pages}{190–225}.
\newblock
\showISSN{0098-3500}
\urldef\tempurl%
\url{https://doi.org/10.1145/290200.287636}
\showDOI{\tempurl}


\bibitem[Bik et~al\mbox{.}(1994)]%
        {10.5555/645672.665543}
\bibfield{author}{\bibinfo{person}{Aart J.~C. Bik}, \bibinfo{person}{Peter
  M.~W. Knijenburg}, {and} \bibinfo{person}{Harry A.~G. Wijshoff}.}
  \bibinfo{year}{1994}\natexlab{}.
\newblock \showarticletitle{Reshaping Access Patterns for Generating Sparse
  Codes}. In \bibinfo{booktitle}{\emph{Proceedings of the 7th International
  Workshop on Languages and Compilers for Parallel Computing}}
  \emph{(\bibinfo{series}{LCPC '94})}. \bibinfo{publisher}{Springer-Verlag},
  \bibinfo{address}{Berlin, Heidelberg}, \bibinfo{pages}{406–420}.
\newblock
\showISBNx{354058868X}


\bibitem[Bik and Wijshoff(1994a)]%
        {10.1145/181181.181538}
\bibfield{author}{\bibinfo{person}{Aart J.~C. Bik} {and} \bibinfo{person}{Harry
  A.~G. Wijshoff}.} \bibinfo{year}{1994}\natexlab{a}.
\newblock \showarticletitle{Nonzero Structure Analysis}. In
  \bibinfo{booktitle}{\emph{Proceedings of the 8th International Conference on
  Supercomputing}} (Manchester, England) \emph{(\bibinfo{series}{ICS '94})}.
  \bibinfo{publisher}{Association for Computing Machinery},
  \bibinfo{address}{New York, NY, USA}, \bibinfo{pages}{226–235}.
\newblock
\showISBNx{0897916654}
\urldef\tempurl%
\url{https://doi.org/10.1145/181181.181538}
\showDOI{\tempurl}


\bibitem[{Bik} and {Wijshoff}(1996)]%
        {485501}
\bibfield{author}{\bibinfo{person}{A.~J.~C. {Bik}} {and}
  \bibinfo{person}{H.~A.~G. {Wijshoff}}.} \bibinfo{year}{1996}\natexlab{}.
\newblock \showarticletitle{Automatic data structure selection and
  transformation for sparse matrix computations}.
\newblock \bibinfo{journal}{\emph{IEEE Transactions on Parallel and Distributed
  Systems}} \bibinfo{volume}{7}, \bibinfo{number}{2} (\bibinfo{date}{Feb}
  \bibinfo{year}{1996}), \bibinfo{pages}{109--126}.
\newblock
\urldef\tempurl%
\url{https://doi.org/10.1109/71.485501}
\showDOI{\tempurl}


\bibitem[Bik and Wijshoff(1994b)]%
        {10.1007/3-540-57659-2_4}
\bibfield{author}{\bibinfo{person}{Aart J.~C. Bik} {and}
  \bibinfo{person}{Harry~G. Wijshoff}.} \bibinfo{year}{1994}\natexlab{b}.
\newblock \showarticletitle{On automatic data structure selection and code
  generation for sparse computations}. In \bibinfo{booktitle}{\emph{Languages
  and Compilers for Parallel Computing}},
  \bibfield{editor}{\bibinfo{person}{Utpal Banerjee}, \bibinfo{person}{David
  Gelernter}, \bibinfo{person}{Alex Nicolau}, {and} \bibinfo{person}{David
  Padua}} (Eds.). \bibinfo{publisher}{Springer Berlin Heidelberg},
  \bibinfo{address}{Berlin, Heidelberg}, \bibinfo{pages}{57--75}.
\newblock
\showISBNx{978-3-540-48308-3}


\bibitem[Byun et~al\mbox{.}(2012)]%
        {Byun2012AutotuningSM}
\bibfield{author}{\bibinfo{person}{Jong-Ho Byun}, \bibinfo{person}{Richard~Y.
  Lin}, \bibinfo{person}{Katherine~A. Yelick}, {and} \bibinfo{person}{James
  Demmel}.} \bibinfo{year}{2012}\natexlab{}.
\newblock \showarticletitle{Autotuning Sparse Matrix-Vector Multiplication for
  Multicore}.
\newblock


\bibitem[Chen et~al\mbox{.}(2019)]%
        {chen2018optimizing}
\bibfield{author}{\bibinfo{person}{Shizhao Chen}, \bibinfo{person}{Jianbin
  Fang}, \bibinfo{person}{Donglin Chen}, \bibinfo{person}{Chuanfu Xu}, {and}
  \bibinfo{person}{Zheng Wang}.} \bibinfo{year}{2019}\natexlab{}.
\newblock \bibinfo{title}{Optimizing Sparse Matrix-Vector Multiplication on
  Emerging Many-Core Architectures}.
\newblock
\newblock
\showeprint[arxiv]{1805.11938}~[cs.MS]


\bibitem[{Chen} et~al\mbox{.}(2012)]%
        {6164974}
\bibfield{author}{\bibinfo{person}{X. {Chen}}, \bibinfo{person}{Y. {Wang}},
  {and} \bibinfo{person}{H. {Yang}}.} \bibinfo{year}{2012}\natexlab{}.
\newblock \showarticletitle{An adaptive LU factorization algorithm for parallel
  circuit simulation}. In \bibinfo{booktitle}{\emph{17th Asia and South Pacific
  Design Automation Conference}}. \bibinfo{pages}{359--364}.
\newblock


\bibitem[Chen et~al\mbox{.}(2008)]%
        {cholmod}
\bibfield{author}{\bibinfo{person}{Y. Chen}, \bibinfo{person}{T.~A. Davis},
  \bibinfo{person}{W.~W. Hager}, {and} \bibinfo{person}{S. Rajamanickam}.}
  \bibinfo{year}{2008}\natexlab{}.
\newblock \showarticletitle{Algorithm 887: {CHOLMOD}, Supernodal Sparse
  {Cholesky} Factorization and Update/Downdate}.
\newblock \bibinfo{journal}{\emph{ACM Trans. Math. Software}}
  \bibinfo{volume}{35}, \bibinfo{number}{3} (\bibinfo{year}{2008}),
  \bibinfo{pages}{1--14}.
\newblock
\urldef\tempurl%
\url{http://dx.doi.org/10.1145/1391989.1391995}
\showURL{%
\tempurl}


\bibitem[Cheshmi et~al\mbox{.}(2022)]%
        {cheshmi2022vectorizing}
\bibfield{author}{\bibinfo{person}{Kazem Cheshmi}, \bibinfo{person}{Zachary
  Cetinic}, {and} \bibinfo{person}{Maryam~Mehri Dehnavi}.}
  \bibinfo{year}{2022}\natexlab{}.
\newblock \showarticletitle{Vectorizing sparse matrix computations with
  partially-strided codelets}. In \bibinfo{booktitle}{\emph{Proceedings of the
  International Conference on High Performance Computing, Networking, Storage
  and Analysis}}. \bibinfo{pages}{1--15}.
\newblock


\bibitem[Cheshmi et~al\mbox{.}(2017)]%
        {sympiler}
\bibfield{author}{\bibinfo{person}{Kazem Cheshmi}, \bibinfo{person}{Shoaib
  Kamil}, \bibinfo{person}{Michelle Strout}, {and} \bibinfo{person}{MM
  Dehnavi}.} \bibinfo{year}{2017}\natexlab{}.
\newblock \showarticletitle{Sympiler: Transforming Sparse Matrix Codes by
  Decoupling Symbolic Analysis}.
\newblock  (\bibinfo{date}{05} \bibinfo{year}{2017}).
\newblock


\bibitem[{Cheshmi} et~al\mbox{.}(2018)]%
        {8665791}
\bibfield{author}{\bibinfo{person}{K. {Cheshmi}}, \bibinfo{person}{S. {Kamil}},
  \bibinfo{person}{M.~M. {Strout}}, {and} \bibinfo{person}{M.~M. {Dehnavi}}.}
  \bibinfo{year}{2018}\natexlab{}.
\newblock \showarticletitle{ParSy: Inspection and Transformation of Sparse
  Matrix Computations for Parallelism}. In \bibinfo{booktitle}{\emph{SC18:
  International Conference for High Performance Computing, Networking, Storage
  and Analysis}}. \bibinfo{pages}{779--793}.
\newblock
\urldef\tempurl%
\url{https://doi.org/10.1109/SC.2018.00065}
\showDOI{\tempurl}


\bibitem[Cheshmi et~al\mbox{.}(2018)]%
        {Cheshmi:2018:PIT:3291656.3291739}
\bibfield{author}{\bibinfo{person}{Kazem Cheshmi}, \bibinfo{person}{Shoaib
  Kamil}, \bibinfo{person}{Michelle~Mills Strout}, {and}
  \bibinfo{person}{Maryam~Mehri Dehnavi}.} \bibinfo{year}{2018}\natexlab{}.
\newblock \showarticletitle{ParSy: Inspection and Transformation of Sparse
  Matrix Computations for Parallelism}. In
  \bibinfo{booktitle}{\emph{Proceedings of the International Conference for
  High Performance Computing, Networking, Storage, and Analysis}} (Dallas,
  Texas) \emph{(\bibinfo{series}{SC '18})}. \bibinfo{publisher}{IEEE Press},
  \bibinfo{address}{Piscataway, NJ, USA}, Article \bibinfo{articleno}{62},
  \bibinfo{numpages}{15}~pages.
\newblock
\urldef\tempurl%
\url{https://doi.org/10.1109/SC.2018.00065}
\showDOI{\tempurl}


\bibitem[{Ching-Hsien Hsu}(2002)]%
        {1039784}
\bibfield{author}{\bibinfo{person}{{Ching-Hsien Hsu}}.}
  \bibinfo{year}{2002}\natexlab{}.
\newblock \showarticletitle{Optimization of sparse matrix redistribution on
  multicomputers}. In \bibinfo{booktitle}{\emph{Proceedings. International
  Conference on Parallel Processing Workshop}}. \bibinfo{pages}{615--622}.
\newblock
\urldef\tempurl%
\url{https://doi.org/10.1109/ICPPW.2002.1039784}
\showDOI{\tempurl}


\bibitem[Davis(2023a)]%
        {spCollection}
\bibfield{author}{\bibinfo{person}{Tim Davis}.}
  \bibinfo{year}{2023}\natexlab{a}.
\newblock \bibinfo{title}{SuiteSparse Matrix Collection}.
\newblock \bibinfo{howpublished}{https://sparse.tamu.edu/}.
\newblock


\bibitem[Davis(2006)]%
        {csparse}
\bibfield{author}{\bibinfo{person}{Timothy~A. Davis}.}
  \bibinfo{year}{2006}\natexlab{}.
\newblock \bibinfo{booktitle}{\emph{Direct Methods for Sparse Linear Systems}}.
\newblock \bibinfo{publisher}{Society for Industrial and Applied Mathematics},
  \bibinfo{address}{Philadelphia, PA, USA}.
\newblock
\showISBNx{0898716136}


\bibitem[Davis(2023b)]%
        {sp}
\bibfield{author}{\bibinfo{person}{Timothy~A. Davis}.}
  \bibinfo{year}{2023}\natexlab{b}.
\newblock \bibinfo{title}{SuiteSparse : a suite of sparse matrix software}.
\newblock
  \bibinfo{howpublished}{\url{http://faculty.cse.tamu.edu/davis/suitesparse.html}}.
\newblock


\bibitem[Davis et~al\mbox{.}(2004)]%
        {10.1145/1024074.1024080}
\bibfield{author}{\bibinfo{person}{Timothy~A. Davis}, \bibinfo{person}{John~R.
  Gilbert}, \bibinfo{person}{Stefan~I. Larimore}, {and}
  \bibinfo{person}{Esmond~G. Ng}.} \bibinfo{year}{2004}\natexlab{}.
\newblock \showarticletitle{Algorithm 836: COLAMD, a Column Approximate Minimum
  Degree Ordering Algorithm}.
\newblock \bibinfo{journal}{\emph{ACM Trans. Math. Softw.}}
  \bibinfo{volume}{30}, \bibinfo{number}{3} (\bibinfo{date}{Sept.}
  \bibinfo{year}{2004}), \bibinfo{pages}{377–380}.
\newblock
\showISSN{0098-3500}
\urldef\tempurl%
\url{https://doi.org/10.1145/1024074.1024080}
\showDOI{\tempurl}


\bibitem[Davis and {Palamadai Natarajan}(2010)]%
        {klu}
\bibfield{author}{\bibinfo{person}{T.~A. Davis} {and} \bibinfo{person}{E.
  {Palamadai Natarajan}}.} \bibinfo{year}{2010}\natexlab{}.
\newblock \showarticletitle{Algorithm 907: {KLU}, A Direct Sparse Solver for
  Circuit Simulation Problems}.
\newblock \bibinfo{journal}{\emph{ACM Trans. Math. Software}}
  \bibinfo{volume}{37}, \bibinfo{number}{3} (\bibinfo{date}{Sept.}
  \bibinfo{year}{2010}), \bibinfo{pages}{36:1--36:17}.
\newblock
\urldef\tempurl%
\url{http://dx.doi.org/10.1145/1824801.1824814}
\showURL{%
\tempurl}


\bibitem[Davis et~al\mbox{.}(2016)]%
        {davis1}
\bibfield{author}{\bibinfo{person}{Timothy~A. Davis},
  \bibinfo{person}{Sivasankaran Rajamanickam}, {and} \bibinfo{person}{Wissam~M.
  Sid-Lakhdar}.} \bibinfo{year}{2016}\natexlab{}.
\newblock \showarticletitle{A survey of direct methods for sparse linear
  systems}.
\newblock \bibinfo{journal}{\emph{Acta Numerica}}  \bibinfo{volume}{25}
  (\bibinfo{year}{2016}), \bibinfo{pages}{383–566}.
\newblock
\urldef\tempurl%
\url{https://doi.org/10.1017/S0962492916000076}
\showDOI{\tempurl}


\bibitem[Demmel et~al\mbox{.}(1995)]%
        {superlu}
\bibfield{author}{\bibinfo{person}{James~W. Demmel},
  \bibinfo{person}{Stanley~C. Eisenstat}, \bibinfo{person}{John~R. Gilbert},
  \bibinfo{person}{Xiaoye~S. Li}, {and} \bibinfo{person}{Joseph~W.H. Liu}.}
  \bibinfo{year}{1995}\natexlab{}.
\newblock \bibinfo{booktitle}{\emph{A Supernodal Approach to Sparse Partial
  Pivoting}}.
\newblock \bibinfo{type}{{T}echnical {R}eport}. \bibinfo{address}{USA}.
\newblock


\bibitem[Dongarra et~al\mbox{.}(1997)]%
        {spLib1}
\bibfield{author}{\bibinfo{person}{Jack Dongarra}, \bibinfo{person}{Andrew
  Lumsdaine}, \bibinfo{person}{Xinhiu Niu}, \bibinfo{person}{Roldan Pozo},
  {and} \bibinfo{person}{Karin Remington}.} \bibinfo{year}{1997}\natexlab{}.
\newblock \showarticletitle{A Sparse Matrix Library in C++ for High Performance
  Architectures}.
\newblock \bibinfo{journal}{\emph{Proceedings of the Second Object Oriented
  Numerics Conference}} (\bibinfo{date}{05} \bibinfo{year}{1997}).
\newblock


\bibitem[Eijkhout(1992)]%
        {Eijkhout:1992:LWN:898702}
\bibfield{author}{\bibinfo{person}{Victor Eijkhout}.}
  \bibinfo{year}{1992}\natexlab{}.
\newblock \bibinfo{booktitle}{\emph{LAPACK Working Note 50: Distributed Sparse
  Data Structures for Linear Algebra Operations}}.
\newblock \bibinfo{type}{{T}echnical {R}eport}. \bibinfo{address}{Knoxville,
  TN, USA}.
\newblock


\bibitem[{Eljammaly} et~al\mbox{.}(2013)]%
        {6588795}
\bibfield{author}{\bibinfo{person}{M. {Eljammaly}}, \bibinfo{person}{Y.
  {Hanafy}}, \bibinfo{person}{A. {Wahdan}}, {and} \bibinfo{person}{A.
  {Bayoumi}}.} \bibinfo{year}{2013}\natexlab{}.
\newblock \showarticletitle{Hardware implementation of LU decomposition using
  dataflow architecture on FPGA}. In \bibinfo{booktitle}{\emph{2013 5th
  International Conference on Computer Science and Information Technology}}.
  \bibinfo{pages}{298--302}.
\newblock


\bibitem[{Fowers} et~al\mbox{.}(2014)]%
        {6861585}
\bibfield{author}{\bibinfo{person}{J. {Fowers}}, \bibinfo{person}{K.
  {Ovtcharov}}, \bibinfo{person}{K. {Strauss}}, \bibinfo{person}{E.~S.
  {Chung}}, {and} \bibinfo{person}{G. {Stitt}}.}
  \bibinfo{year}{2014}\natexlab{}.
\newblock \showarticletitle{A High Memory Bandwidth FPGA Accelerator for Sparse
  Matrix-Vector Multiplication}. In \bibinfo{booktitle}{\emph{2014 IEEE 22nd
  Annual International Symposium on Field-Programmable Custom Computing
  Machines}}. \bibinfo{pages}{36--43}.
\newblock


\bibitem[{Ge} et~al\mbox{.}(2017)]%
        {8252462}
\bibfield{author}{\bibinfo{person}{X. {Ge}}, \bibinfo{person}{H. {Zhu}},
  \bibinfo{person}{F. {Yang}}, \bibinfo{person}{L. {Wang}}, {and}
  \bibinfo{person}{X. {Zeng}}.} \bibinfo{year}{2017}\natexlab{}.
\newblock \showarticletitle{Parallel sparse LU decomposition using FPGA with an
  efficient cache architecture}. In \bibinfo{booktitle}{\emph{2017 IEEE 12th
  International Conference on ASIC (ASICON)}}. \bibinfo{pages}{259--262}.
\newblock
\urldef\tempurl%
\url{https://doi.org/10.1109/ASICON.2017.8252462}
\showDOI{\tempurl}


\bibitem[George and Liu(1975)]%
        {10.2307/2156057}
\bibfield{author}{\bibinfo{person}{Alan George} {and} \bibinfo{person}{Wai-Hung
  Liu}.} \bibinfo{year}{1975}\natexlab{}.
\newblock \showarticletitle{A Note on Fill for Sparse Matrices}.
\newblock \bibinfo{journal}{\emph{SIAM J. Numer. Anal.}} \bibinfo{volume}{12},
  \bibinfo{number}{3} (\bibinfo{year}{1975}), \bibinfo{pages}{452--455}.
\newblock
\showISSN{00361429}
\urldef\tempurl%
\url{http://www.jstor.org/stable/2156057}
\showURL{%
\tempurl}


\bibitem[{Grigoras} et~al\mbox{.}(2015)]%
        {7160041}
\bibfield{author}{\bibinfo{person}{P. {Grigoras}}, \bibinfo{person}{P.
  {Burovskiy}}, \bibinfo{person}{E. {Hung}}, {and} \bibinfo{person}{W. {Luk}}.}
  \bibinfo{year}{2015}\natexlab{}.
\newblock \showarticletitle{Accelerating SpMV on FPGAs by Compressing Nonzero
  Values}. In \bibinfo{booktitle}{\emph{2015 IEEE 23rd Annual International
  Symposium on Field-Programmable Custom Computing Machines}}.
  \bibinfo{pages}{64--67}.
\newblock


\bibitem[Guennebaud et~al\mbox{.}(2023)]%
        {eigen}
\bibfield{author}{\bibinfo{person}{Ga\"{e}l Guennebaud},
  \bibinfo{person}{Beno\^{i}t Jacob}, {et~al\mbox{.}}}
  \bibinfo{year}{2023}\natexlab{}.
\newblock \bibinfo{title}{Eigen v3}.
\newblock \bibinfo{howpublished}{http://eigen.tuxfamily.org}.
\newblock


\bibitem[{Hassan} et~al\mbox{.}(2015)]%
        {7160030}
\bibfield{author}{\bibinfo{person}{M.~W. {Hassan}}, \bibinfo{person}{A.~E.
  {Helal}}, {and} \bibinfo{person}{Y.~Y. {Hanafy}}.}
  \bibinfo{year}{2015}\natexlab{}.
\newblock \showarticletitle{High Performance Sparse LU Solver FPGA Accelerator
  Using a Static Synchronous Data Flow Model}. In
  \bibinfo{booktitle}{\emph{2015 IEEE 23rd Annual International Symposium on
  Field-Programmable Custom Computing Machines}}. \bibinfo{pages}{29--29}.
\newblock


\bibitem[Henry(2020)]%
        {taco3}
\bibfield{author}{\bibinfo{person}{Rawn~Tristan Henry}.}
  \bibinfo{year}{2020}\natexlab{}.
\newblock \emph{\bibinfo{title}{A framework for computing on sparse tensors
  based on operator properties}}.
\newblock \bibinfo{thesistype}{Ph.\,D. Dissertation}. \bibinfo{address}{USA}.
\newblock


\bibitem[Hénon et~al\mbox{.}(2002)]%
        {pastix1}
\bibfield{author}{\bibinfo{person}{Pascal Hénon}, \bibinfo{person}{Pierre
  Ramet}, {and} \bibinfo{person}{Jean Roman}.} \bibinfo{year}{2002}\natexlab{}.
\newblock \showarticletitle{PaStiX: A High-Performance Parallel Direct Solver
  for Sparse Symmetric Definite Systems}.
\newblock \bibinfo{journal}{\emph{Parallel Comput.}}  \bibinfo{volume}{28}
  (\bibinfo{date}{02} \bibinfo{year}{2002}), \bibinfo{pages}{301--321}.
\newblock
\urldef\tempurl%
\url{https://doi.org/10.1016/S0167-8191(01)00141-7}
\showDOI{\tempurl}


\bibitem[Intel(2023)]%
        {mkl}
\bibfield{author}{\bibinfo{person}{Intel}.} \bibinfo{year}{2023}\natexlab{}.
\newblock \bibinfo{title}{Intel Math Kernel Library}.
\newblock \bibinfo{howpublished}{https://software.intel.com/en-us/mkl}.
\newblock


\bibitem[{Jheng} et~al\mbox{.}(2011)]%
        {7073850}
\bibfield{author}{\bibinfo{person}{H.~Y. {Jheng}}, \bibinfo{person}{C.~C.
  {Sun}}, \bibinfo{person}{S.~J. {Ruan}}, {and} \bibinfo{person}{J. {Goetze}}.}
  \bibinfo{year}{2011}\natexlab{}.
\newblock \showarticletitle{FPGA acceleration of Sparse Matrix-Vector
  Multiplication based on Network-on-Chip}. In \bibinfo{booktitle}{\emph{2011
  19th European Signal Processing Conference}}. \bibinfo{pages}{744--748}.
\newblock


\bibitem[K. et~al\mbox{.}(1996)]%
        {vladimir}
\bibfield{author}{\bibinfo{person}{Vladimir K.}, \bibinfo{person}{Keshav P.},
  {and} \bibinfo{person}{Paul S.}} \bibinfo{year}{1996}\natexlab{}.
\newblock \showarticletitle{Automatic parallelization of the conjugate gradient
  algorithm}. In \bibinfo{booktitle}{\emph{Languages and Compilers for Parallel
  Computing}}. \bibinfo{publisher}{Springer Berlin Heidelberg},
  \bibinfo{address}{Berlin, Heidelberg}, \bibinfo{pages}{480--499}.
\newblock
\showISBNx{978-3-540-49446-1}


\bibitem[Kamin et~al\mbox{.}(2014)]%
        {Kamin:2014:ORS:2775053.2658773}
\bibfield{author}{\bibinfo{person}{Sam Kamin},
  \bibinfo{person}{Mar\'{\i}a~Jes\'{u}s Garzar\'{a}n},
  \bibinfo{person}{Bar\i\c{s} Aktemur}, \bibinfo{person}{Danqing Xu},
  \bibinfo{person}{Buse Y{\i}lmaz}, {and} \bibinfo{person}{Zhongbo Chen}.}
  \bibinfo{year}{2014}\natexlab{}.
\newblock \showarticletitle{Optimization by Runtime Specialization for Sparse
  Matrix-vector Multiplication}.
\newblock \bibinfo{journal}{\emph{SIGPLAN Not.}} \bibinfo{volume}{50},
  \bibinfo{number}{3} (\bibinfo{date}{Sept.} \bibinfo{year}{2014}),
  \bibinfo{pages}{93--102}.
\newblock
\showISSN{0362-1340}
\urldef\tempurl%
\url{https://doi.org/10.1145/2775053.2658773}
\showDOI{\tempurl}


\bibitem[Kapre and Dehon(2009)]%
        {Kapre09a}
\bibfield{author}{\bibinfo{person}{Nachiket Kapre} {and} \bibinfo{person}{Andre
  Dehon}.} \bibinfo{year}{2009}\natexlab{}.
\newblock \showarticletitle{Parallelizing sparse Matrix Solve for SPICE circuit
  simulation using FPGAs}. In \bibinfo{booktitle}{\emph{In: Proc.
  Field-Programmable Technology}}. \bibinfo{pages}{190--198}.
\newblock


\bibitem[Kennedy and Allen(2001)]%
        {10.5555/502981}
\bibfield{author}{\bibinfo{person}{Ken Kennedy} {and} \bibinfo{person}{John~R.
  Allen}.} \bibinfo{year}{2001}\natexlab{}.
\newblock \bibinfo{booktitle}{\emph{Optimizing Compilers for Modern
  Architectures: A Dependence-Based Approach}}.
\newblock \bibinfo{publisher}{Morgan Kaufmann Publishers Inc.},
  \bibinfo{address}{San Francisco, CA, USA}.
\newblock
\showISBNx{1558602860}


\bibitem[Kjolstad and Amarasinghe(2023)]%
        {taco2}
\bibfield{author}{\bibinfo{person}{Fredrik Kjolstad} {and}
  \bibinfo{person}{Saman Amarasinghe}.} \bibinfo{year}{2023}\natexlab{}.
\newblock \bibinfo{title}{TACO: The Tensor Algebra Compiler}.
\newblock \bibinfo{howpublished}{\url{http://tensor-compiler.org/}}.
\newblock


\bibitem[Kjolstad et~al\mbox{.}(2017)]%
        {taco1}
\bibfield{author}{\bibinfo{person}{Fredrik Kjolstad}, \bibinfo{person}{Stephen
  Chou}, \bibinfo{person}{David Lugato}, \bibinfo{person}{Shoaib Kamil}, {and}
  \bibinfo{person}{Saman Amarasinghe}.} \bibinfo{year}{2017}\natexlab{}.
\newblock \showarticletitle{Taco: A tool to generate tensor algebra kernels}.
  In \bibinfo{booktitle}{\emph{2017 32nd IEEE/ACM International Conference on
  Automated Software Engineering (ASE)}}. \bibinfo{pages}{943--948}.
\newblock
\urldef\tempurl%
\url{https://doi.org/10.1109/ASE.2017.8115709}
\showDOI{\tempurl}


\bibitem[Kotlyar(1999)]%
        {10.5555/928197}
\bibfield{author}{\bibinfo{person}{Vladimir Kotlyar}.}
  \bibinfo{year}{1999}\natexlab{}.
\newblock \emph{\bibinfo{title}{Relational Algebraic Techniques for the
  Synthesis of Sparse Matrix Programs}}.
\newblock \bibinfo{thesistype}{Ph.\,D. Dissertation}. \bibinfo{address}{USA}.
\newblock Advisor(s) Pingali, Keshav.
\newblock
\showISBNx{059908572X}
\newblock
\shownote{AAI9910244}.


\bibitem[Kotlyar and Pingali(1997)]%
        {10.1145/263580.263630}
\bibfield{author}{\bibinfo{person}{Vladimir Kotlyar} {and}
  \bibinfo{person}{Keshav Pingali}.} \bibinfo{year}{1997}\natexlab{}.
\newblock \showarticletitle{Sparse Code Generation for Imperfectly Nested Loops
  with Dependences}. In \bibinfo{booktitle}{\emph{Proceedings of the 11th
  International Conference on Supercomputing}} (Vienna, Austria)
  \emph{(\bibinfo{series}{ICS '97})}. \bibinfo{publisher}{Association for
  Computing Machinery}, \bibinfo{address}{New York, NY, USA},
  \bibinfo{pages}{188–195}.
\newblock
\showISBNx{0897919025}
\urldef\tempurl%
\url{https://doi.org/10.1145/263580.263630}
\showDOI{\tempurl}


\bibitem[Kotlyar et~al\mbox{.}(1997a)]%
        {10.1145/509593.509603}
\bibfield{author}{\bibinfo{person}{Vladimir Kotlyar}, \bibinfo{person}{Keshav
  Pingali}, {and} \bibinfo{person}{Paul Stodghill}.}
  \bibinfo{year}{1997}\natexlab{a}.
\newblock \showarticletitle{Compiling Parallel Code for Sparse Matrix
  Applications}. In \bibinfo{booktitle}{\emph{Proceedings of the 1997 ACM/IEEE
  Conference on Supercomputing}} (San Jose, CA) \emph{(\bibinfo{series}{SC
  '97})}. \bibinfo{publisher}{Association for Computing Machinery},
  \bibinfo{address}{New York, NY, USA}, \bibinfo{pages}{1–18}.
\newblock
\showISBNx{0897919858}
\urldef\tempurl%
\url{https://doi.org/10.1145/509593.509603}
\showDOI{\tempurl}


\bibitem[Kotlyar et~al\mbox{.}(1997b)]%
        {10.5555/866903}
\bibfield{author}{\bibinfo{person}{Vladimir Kotlyar}, \bibinfo{person}{Keshav
  Pingali}, {and} \bibinfo{person}{Paul Stodghill}.}
  \bibinfo{year}{1997}\natexlab{b}.
\newblock \bibinfo{booktitle}{\emph{A Relational Approach to the Compilation of
  Sparse Matrix Programs}}.
\newblock \bibinfo{type}{{T}echnical {R}eport}. \bibinfo{address}{USA}.
\newblock


\bibitem[Lee and Eigenmann(2008)]%
        {Lee:2008:ART:1375527.1375558}
\bibfield{author}{\bibinfo{person}{Seyong Lee} {and} \bibinfo{person}{Rudolf
  Eigenmann}.} \bibinfo{year}{2008}\natexlab{}.
\newblock \showarticletitle{Adaptive Runtime Tuning of Parallel Sparse
  Matrix-vector Multiplication on Distributed Memory Systems}. In
  \bibinfo{booktitle}{\emph{Proceedings of the 22Nd Annual International
  Conference on Supercomputing}} (Island of Kos, Greece)
  \emph{(\bibinfo{series}{ICS '08})}. \bibinfo{publisher}{ACM},
  \bibinfo{address}{New York, NY, USA}, \bibinfo{pages}{195--204}.
\newblock
\showISBNx{978-1-60558-158-3}
\urldef\tempurl%
\url{https://doi.org/10.1145/1375527.1375558}
\showDOI{\tempurl}


\bibitem[Li et~al\mbox{.}(2015)]%
        {Li2015}
\bibfield{author}{\bibinfo{person}{ShiGang Li}, \bibinfo{person}{ChangJun Hu},
  \bibinfo{person}{JunChao Zhang}, {and} \bibinfo{person}{YunQuan Zhang}.}
  \bibinfo{year}{2015}\natexlab{}.
\newblock \showarticletitle{Automatic tuning of sparse matrix-vector
  multiplication on multicore clusters}.
\newblock \bibinfo{journal}{\emph{Science China Information Sciences}}
  \bibinfo{volume}{58}, \bibinfo{number}{9} (\bibinfo{date}{01 Sep}
  \bibinfo{year}{2015}), \bibinfo{pages}{1--14}.
\newblock
\showISSN{1869-1919}
\urldef\tempurl%
\url{https://doi.org/10.1007/s11432-014-5254-x}
\showDOI{\tempurl}


\bibitem[Mateev et~al\mbox{.}(2000)]%
        {10.1145/335231.335240}
\bibfield{author}{\bibinfo{person}{Nikolay Mateev}, \bibinfo{person}{Keshav
  Pingali}, \bibinfo{person}{Paul Stodghill}, {and} \bibinfo{person}{Vladimir
  Kotlyar}.} \bibinfo{year}{2000}\natexlab{}.
\newblock \showarticletitle{Next-Generation Generic Programming and Its
  Application to Sparse Matrix Computations}. In
  \bibinfo{booktitle}{\emph{Proceedings of the 14th International Conference on
  Supercomputing}} (Santa Fe, New Mexico, USA) \emph{(\bibinfo{series}{ICS
  '00})}. \bibinfo{publisher}{Association for Computing Machinery},
  \bibinfo{address}{New York, NY, USA}, \bibinfo{pages}{88–99}.
\newblock
\showISBNx{1581132700}
\urldef\tempurl%
\url{https://doi.org/10.1145/335231.335240}
\showDOI{\tempurl}


\bibitem[Mirchandaney et~al\mbox{.}(1988)]%
        {ins_exe}
\bibfield{author}{\bibinfo{person}{R. Mirchandaney}, \bibinfo{person}{J.
  Saltz}, \bibinfo{person}{R.M. Smith}, \bibinfo{person}{D.M. Nicol}, {and}
  \bibinfo{person}{Kay Crowley}.} \bibinfo{year}{1988}\natexlab{}.
\newblock \showarticletitle{Principles of run-time support for parallel
  processors}.
\newblock \bibinfo{journal}{\emph{Proceedings of the 1988 ACM International
  Conference on Supercomputing}} (\bibinfo{date}{July} \bibinfo{year}{1988}),
  \bibinfo{pages}{140--152}.
\newblock


\bibitem[Mohammadi et~al\mbox{.}(2018)]%
        {DBLP:journals/corr/abs-1807-10852}
\bibfield{author}{\bibinfo{person}{Mahdi~Soltan Mohammadi},
  \bibinfo{person}{Kazem Cheshmi}, \bibinfo{person}{Ganesh Gopalakrishnan},
  \bibinfo{person}{Mary~W. Hall}, \bibinfo{person}{Maryam~Mehri Dehnavi},
  \bibinfo{person}{Anand Venkat}, \bibinfo{person}{Tomofumi Yuki}, {and}
  \bibinfo{person}{Michelle~Mills Strout}.} \bibinfo{year}{2018}\natexlab{}.
\newblock \showarticletitle{Sparse Matrix Code Dependence Analysis
  Simplification at Compile Time}.
\newblock \bibinfo{journal}{\emph{CoRR}}  \bibinfo{volume}{abs/1807.10852}
  (\bibinfo{year}{2018}).
\newblock
\showeprint[arxiv]{1807.10852}
\urldef\tempurl%
\url{http://arxiv.org/abs/1807.10852}
\showURL{%
\tempurl}


\bibitem[Mohammadi et~al\mbox{.}(2019)]%
        {Mohammadi:2019:SCD:3314221.3314646}
\bibfield{author}{\bibinfo{person}{Mahdi~Soltan Mohammadi},
  \bibinfo{person}{Tomofumi Yuki}, \bibinfo{person}{Kazem Cheshmi},
  \bibinfo{person}{Eddie~C. Davis}, \bibinfo{person}{Mary Hall},
  \bibinfo{person}{Maryam~Mehri Dehnavi}, \bibinfo{person}{Payal Nandy},
  \bibinfo{person}{Catherine Olschanowsky}, \bibinfo{person}{Anand Venkat},
  {and} \bibinfo{person}{Michelle~Mills Strout}.}
  \bibinfo{year}{2019}\natexlab{}.
\newblock \showarticletitle{Sparse Computation Data Dependence Simplification
  for Efficient Compiler-generated Inspectors}. In
  \bibinfo{booktitle}{\emph{Proceedings of the 40th ACM SIGPLAN Conference on
  Programming Language Design and Implementation}} (Phoenix, AZ, USA)
  \emph{(\bibinfo{series}{PLDI 2019})}. \bibinfo{publisher}{ACM},
  \bibinfo{address}{New York, NY, USA}, \bibinfo{pages}{594--609}.
\newblock
\showISBNx{978-1-4503-6712-7}
\urldef\tempurl%
\url{https://doi.org/10.1145/3314221.3314646}
\showDOI{\tempurl}


\bibitem[M{\o}ller and Schwartzbach(2015)]%
        {spa}
\bibfield{author}{\bibinfo{person}{Andres M{\o}ller} {and}
  \bibinfo{person}{Michael~I. Schwartzbach}.} \bibinfo{year}{2015}\natexlab{}.
\newblock \bibinfo{booktitle}{\emph{Static Program Analysis}}.
\newblock \bibinfo{address}{Department of Computer Science, Aarhus University}.
\newblock


\bibitem[Nandy et~al\mbox{.}(2018)]%
        {pnandy}
\bibfield{author}{\bibinfo{person}{Payal Nandy}, \bibinfo{person}{Eddie~C.
  Davis}, {and} \bibinfo{person}{Mahdi~Soltan Mohammadi}.}
  \bibinfo{year}{2018}\natexlab{}.
\newblock \showarticletitle{Abstractions for specifying sparse matrix data
  transformations}. In \bibinfo{booktitle}{\emph{Proc. 8th Int. Workshop
  Polyhedral Compilation Techn. (IMPACT)}}. \bibinfo{pages}{1--10}.
\newblock


\bibitem[{Nechma} and {Zwolinski}(2015)]%
        {6747987}
\bibfield{author}{\bibinfo{person}{T. {Nechma}} {and} \bibinfo{person}{M.
  {Zwolinski}}.} \bibinfo{year}{2015}\natexlab{}.
\newblock \showarticletitle{Parallel Sparse Matrix Solution for Circuit
  Simulation on FPGAs}.
\newblock \bibinfo{journal}{\emph{IEEE Trans. Comput.}} \bibinfo{volume}{64},
  \bibinfo{number}{4} (\bibinfo{year}{2015}), \bibinfo{pages}{1090--1103}.
\newblock


\bibitem[Norrish and Strout(2015)]%
        {10.1007/978-3-319-17473-0_9}
\bibfield{author}{\bibinfo{person}{Michael Norrish} {and}
  \bibinfo{person}{Michelle~Mills Strout}.} \bibinfo{year}{2015}\natexlab{}.
\newblock \showarticletitle{An Approach for Proving the Correctness of
  Inspector/Executor Transformations}. In \bibinfo{booktitle}{\emph{Languages
  and Compilers for Parallel Computing}},
  \bibfield{editor}{\bibinfo{person}{James Brodman} {and} \bibinfo{person}{Peng
  Tu}} (Eds.). \bibinfo{publisher}{Springer International Publishing},
  \bibinfo{address}{Cham}, \bibinfo{pages}{131--145}.
\newblock


\bibitem[{Nurvitadhi} et~al\mbox{.}(2015)]%
        {7324551}
\bibfield{author}{\bibinfo{person}{E. {Nurvitadhi}}, \bibinfo{person}{A.
  {Mishra}}, {and} \bibinfo{person}{D. {Marr}}.}
  \bibinfo{year}{2015}\natexlab{}.
\newblock \showarticletitle{A sparse matrix vector multiply accelerator for
  support vector machine}. In \bibinfo{booktitle}{\emph{2015 International
  Conference on Compilers, Architecture and Synthesis for Embedded Systems
  (CASES)}}. \bibinfo{pages}{109--116}.
\newblock


\bibitem[NVIDIA(2023)]%
        {cusparse}
\bibfield{author}{\bibinfo{person}{NVIDIA}.} \bibinfo{year}{2023}\natexlab{}.
\newblock \bibinfo{title}{CUSPARSE}.
\newblock \bibinfo{howpublished}{https://developer.nvidia.com/cusparse}.
\newblock


\bibitem[Park et~al\mbox{.}(2014)]%
        {10.1007/978-3-319-07518-1_8}
\bibfield{author}{\bibinfo{person}{Jongsoo Park}, \bibinfo{person}{Mikhail
  Smelyanskiy}, \bibinfo{person}{Narayanan Sundaram}, {and}
  \bibinfo{person}{Pradeep Dubey}.} \bibinfo{year}{2014}\natexlab{}.
\newblock \showarticletitle{Sparsifying Synchronization for High-Performance
  Shared-Memory Sparse Triangular Solver}. In
  \bibinfo{booktitle}{\emph{Supercomputing}},
  \bibfield{editor}{\bibinfo{person}{Julian~Martin Kunkel},
  \bibinfo{person}{Thomas Ludwig}, {and} \bibinfo{person}{Hans~Werner Meuer}}
  (Eds.). \bibinfo{publisher}{Springer International Publishing},
  \bibinfo{address}{Cham}, \bibinfo{pages}{124--140}.
\newblock


\bibitem[PaStiX(2023)]%
        {pastix}
\bibfield{author}{\bibinfo{person}{PaStiX}.} \bibinfo{year}{2023}\natexlab{}.
\newblock \bibinfo{title}{PaStiX}.
\newblock
  \bibinfo{howpublished}{http://pastix.gforge.inria.fr/files/README-txt.html}.
\newblock


\bibitem[Ponnusamy et~al\mbox{.}(1993)]%
        {ins_exe2}
\bibfield{author}{\bibinfo{person}{R. Ponnusamy}, \bibinfo{person}{J. Saltz},
  {and} \bibinfo{person}{A. Choudhary}.} \bibinfo{year}{1993}\natexlab{}.
\newblock \showarticletitle{Runtime Compilation Techniques for Data
  Partitioning and Communication Schedule Reuse}. In
  \bibinfo{booktitle}{\emph{Proceedings of the 1993 ACM/IEEE Conference on
  Supercomputing}} (Portland, Oregon, USA)
  \emph{(\bibinfo{series}{Supercomputing '93})}.
  \bibinfo{publisher}{Association for Computing Machinery},
  \bibinfo{address}{New York, NY, USA}, \bibinfo{pages}{361–370}.
\newblock
\showISBNx{0818643404}
\urldef\tempurl%
\url{https://doi.org/10.1145/169627.169752}
\showDOI{\tempurl}


\bibitem[Pouchet and Rodr{\'i}guez(2018)]%
        {piecewise1}
\bibfield{author}{\bibinfo{person}{Louis-No{\"e}l Pouchet} {and}
  \bibinfo{person}{Gabriel Rodr{\'i}guez}.} \bibinfo{year}{2018}\natexlab{}.
\newblock \showarticletitle{Polyhedral modeling of immutable sparse matrices}.
\newblock  (\bibinfo{year}{2018}).
\newblock


\bibitem[Pozo et~al\mbox{.}(2023)]%
        {spLib}
\bibfield{author}{\bibinfo{person}{Roldan Pozo}, \bibinfo{person}{Karin
  Remington}, {and} \bibinfo{person}{Andrew Lumsdaine}.}
  \bibinfo{year}{2023}\natexlab{}.
\newblock \bibinfo{title}{SparseLib++ Sparse Matrix Class Library}.
\newblock \bibinfo{howpublished}{\url{https://math.nist.gov/sparselib++/}}.
\newblock


\bibitem[Rauchwerger(1998)]%
        {RAUCHWERGER1998527}
\bibfield{author}{\bibinfo{person}{Lawrence Rauchwerger}.}
  \bibinfo{year}{1998}\natexlab{}.
\newblock \showarticletitle{Run-time parallelization: Its time has come}.
\newblock \bibinfo{journal}{\emph{Parallel Comput.}} \bibinfo{volume}{24},
  \bibinfo{number}{3} (\bibinfo{year}{1998}), \bibinfo{pages}{527--556}.
\newblock
\showISSN{0167-8191}
\urldef\tempurl%
\url{https://doi.org/10.1016/S0167-8191(98)00024-6}
\showDOI{\tempurl}


\bibitem[Ravishankar et~al\mbox{.}(2015)]%
        {10.1145/2688500.2688515}
\bibfield{author}{\bibinfo{person}{Mahesh Ravishankar}, \bibinfo{person}{Roshan
  Dathathri}, \bibinfo{person}{Venmugil Elango},
  \bibinfo{person}{Louis-No\"{e}l Pouchet}, \bibinfo{person}{J. Ramanujam},
  \bibinfo{person}{Atanas Rountev}, {and} \bibinfo{person}{P. Sadayappan}.}
  \bibinfo{year}{2015}\natexlab{}.
\newblock \showarticletitle{Distributed Memory Code Generation for Mixed
  Irregular/Regular Computations}. In \bibinfo{booktitle}{\emph{Proceedings of
  the 20th ACM SIGPLAN Symposium on Principles and Practice of Parallel
  Programming}} (San Francisco, CA, USA) \emph{(\bibinfo{series}{PPoPP 2015})}.
  \bibinfo{publisher}{Association for Computing Machinery},
  \bibinfo{address}{New York, NY, USA}, \bibinfo{pages}{65–75}.
\newblock
\showISBNx{9781450332057}
\urldef\tempurl%
\url{https://doi.org/10.1145/2688500.2688515}
\showDOI{\tempurl}


\bibitem[Rodr{\'i}guez(2022)]%
        {artifact1}
\bibfield{author}{\bibinfo{person}{Gabriel Rodr{\'i}guez}.}
  \bibinfo{year}{2022}\natexlab{}.
\newblock \bibinfo{title}{poly-spmv}.
\newblock \bibinfo{howpublished}{https://gitlab.com/grodriguez.udc/poly-spmv}.
\newblock


\bibitem[Saad(1994)]%
        {Saad94sparskit:a}
\bibfield{author}{\bibinfo{person}{Youcef Saad}.}
  \bibinfo{year}{1994}\natexlab{}.
\newblock \bibinfo{title}{SPARSKIT: a basic tool kit for sparse matrix
  computations - Version 2}.
\newblock
\newblock


\bibitem[Saltz and Mirchandaney(1991)]%
        {ins_exe1}
\bibfield{author}{\bibinfo{person}{J. Saltz} {and} \bibinfo{person}{R.
  Mirchandaney}.} \bibinfo{year}{1991}\natexlab{}.
\newblock \showarticletitle{The preprocessed doacross loop}.
\newblock \bibinfo{journal}{\emph{Proceedings of the Int. Conf. Parallel
  Process(ICPP)}}  \bibinfo{volume}{2} (\bibinfo{date}{August}
  \bibinfo{year}{1991}), \bibinfo{pages}{174--179}.
\newblock


\bibitem[Schenk and G\"{a}rtner(2004)]%
        {pardiso1}
\bibfield{author}{\bibinfo{person}{Olaf Schenk} {and} \bibinfo{person}{Klaus
  G\"{a}rtner}.} \bibinfo{year}{2004}\natexlab{}.
\newblock \showarticletitle{Solving Unsymmetric Sparse Systems of Linear
  Equations with PARDISO}.
\newblock \bibinfo{journal}{\emph{Future Gener. Comput. Syst.}}
  \bibinfo{volume}{20}, \bibinfo{number}{3} (\bibinfo{date}{April}
  \bibinfo{year}{2004}), \bibinfo{pages}{475–487}.
\newblock
\showISSN{0167-739X}
\urldef\tempurl%
\url{https://doi.org/10.1016/j.future.2003.07.011}
\showDOI{\tempurl}


\bibitem[Schenk et~al\mbox{.}(2000)]%
        {pardiso2}
\bibfield{author}{\bibinfo{person}{O. Schenk}, \bibinfo{person}{K.
  G{\"a}rtner}, {and} \bibinfo{person}{W. Fichtner}.}
  \bibinfo{year}{2000}\natexlab{}.
\newblock \showarticletitle{Efficient Sparse LU Factorization with Left-Right
  Looking Strategy on Shared Memory Multiprocessors}.
\newblock \bibinfo{journal}{\emph{BIT Numerical Mathematics}}
  \bibinfo{volume}{40} (\bibinfo{year}{2000}), \bibinfo{pages}{158--176}.
\newblock


\bibitem[Stodghill(1997)]%
        {10.5555/269106}
\bibfield{author}{\bibinfo{person}{Paul~Vinson Stodghill}.}
  \bibinfo{year}{1997}\natexlab{}.
\newblock \emph{\bibinfo{title}{A Relational Approach to the Automatic
  Generation of Sequential Sparse Matrix Codes}}.
\newblock \bibinfo{thesistype}{Ph.\,D. Dissertation}. \bibinfo{address}{USA}.
\newblock


\bibitem[{Strout} et~al\mbox{.}(2018)]%
        {8436444}
\bibfield{author}{\bibinfo{person}{M.~M. {Strout}}, \bibinfo{person}{M.
  {Hall}}, {and} \bibinfo{person}{C. {Olschanowsky}}.}
  \bibinfo{year}{2018}\natexlab{}.
\newblock \showarticletitle{The Sparse Polyhedral Framework: Composing
  Compiler-Generated Inspector-Executor Code}.
\newblock \bibinfo{journal}{\emph{Proc. IEEE}}  \bibinfo{volume}{106}
  (\bibinfo{date}{Nov} \bibinfo{year}{2018}), \bibinfo{pages}{1921--1934}.
\newblock
\urldef\tempurl%
\url{https://doi.org/10.1109/JPROC.2018.2857721}
\showDOI{\tempurl}


\bibitem[Strout et~al\mbox{.}(2016)]%
        {STROUT201632}
\bibfield{author}{\bibinfo{person}{Michelle~Mills Strout},
  \bibinfo{person}{Alan LaMielle}, \bibinfo{person}{Larry Carter},
  \bibinfo{person}{Jeanne Ferrante}, \bibinfo{person}{Barbara Kreaseck}, {and}
  \bibinfo{person}{Catherine Olschanowsky}.} \bibinfo{year}{2016}\natexlab{}.
\newblock \showarticletitle{An approach for code generation in the Sparse
  Polyhedral Framework}.
\newblock \bibinfo{journal}{\emph{Parallel Comput.}}  \bibinfo{volume}{53}
  (\bibinfo{year}{2016}), \bibinfo{pages}{32--57}.
\newblock
\showISSN{0167-8191}
\urldef\tempurl%
\url{https://doi.org/10.1016/j.parco.2016.02.004}
\showDOI{\tempurl}


\bibitem[Tang et~al\mbox{.}(2020)]%
        {egg}
\bibfield{author}{\bibinfo{person}{Xuan Tang}, \bibinfo{person}{Teseo
  Schneider}, \bibinfo{person}{Shoaib Kamil}, \bibinfo{person}{Aurojit Panda},
  \bibinfo{person}{Jinyang Li}, {and} \bibinfo{person}{Daniele Panozzo}.}
  \bibinfo{year}{2020}\natexlab{}.
\newblock \showarticletitle{EGGS: Sparsity‐Specific Code Generation}.
\newblock \bibinfo{journal}{\emph{Computer Graphics Forum}}
  \bibinfo{volume}{39} (\bibinfo{date}{08} \bibinfo{year}{2020}),
  \bibinfo{pages}{209--219}.
\newblock
\urldef\tempurl%
\url{https://doi.org/10.1111/cgf.14080}
\showDOI{\tempurl}


\bibitem[Terpstra et~al\mbox{.}(2022)]%
        {papi}
\bibfield{author}{\bibinfo{person}{Dan Terpstra}, \bibinfo{person}{Heike
  Jagode}, \bibinfo{person}{Haihang You}, {and} \bibinfo{person}{Jack
  Dongarra}.} \bibinfo{year}{2022}\natexlab{}.
\newblock \bibinfo{title}{Collecting Performance Data with PAPI-C}.
\newblock \bibinfo{howpublished}{https://icl.utk.edu/papi/}.
\newblock


\bibitem[Ujaldon et~al\mbox{.}(1995)]%
        {10.1007/3-540-60321-2_3}
\bibfield{author}{\bibinfo{person}{M. Ujaldon}, \bibinfo{person}{S.~D. Sharma},
  \bibinfo{person}{J. Saltz}, {and} \bibinfo{person}{E.~L. Zapata}.}
  \bibinfo{year}{1995}\natexlab{}.
\newblock \showarticletitle{Run-time techniques for parallelizing sparse matrix
  problems}. In \bibinfo{booktitle}{\emph{Parallel Algorithms for Irregularly
  Structured Problems}}, \bibfield{editor}{\bibinfo{person}{Afonso Ferreira}
  {and} \bibinfo{person}{Jos{\'e} Rolim}} (Eds.). \bibinfo{publisher}{Springer
  Berlin Heidelberg}, \bibinfo{address}{Berlin, Heidelberg},
  \bibinfo{pages}{43--57}.
\newblock
\showISBNx{978-3-540-44915-7}


\bibitem[Venkat et~al\mbox{.}(2015)]%
        {Venkat:2015:LDT:2737924.2738003}
\bibfield{author}{\bibinfo{person}{Anand Venkat}, \bibinfo{person}{Mary Hall},
  {and} \bibinfo{person}{Michelle Strout}.} \bibinfo{year}{2015}\natexlab{}.
\newblock \showarticletitle{Loop and Data Transformations for Sparse Matrix
  Code}. In \bibinfo{booktitle}{\emph{Proceedings of the 36th ACM SIGPLAN
  Conference on Programming Language Design and Implementation}} (Portland, OR,
  USA) \emph{(\bibinfo{series}{PLDI '15})}. \bibinfo{publisher}{ACM},
  \bibinfo{address}{New York, NY, USA}, \bibinfo{pages}{521--532}.
\newblock
\showISBNx{978-1-4503-3468-6}
\urldef\tempurl%
\url{https://doi.org/10.1145/2737924.2738003}
\showDOI{\tempurl}


\bibitem[Venkat et~al\mbox{.}(2016)]%
        {Venkat:2016:AWP:3014904.3014959}
\bibfield{author}{\bibinfo{person}{Anand Venkat}, \bibinfo{person}{Mahdi~Soltan
  Mohammadi}, \bibinfo{person}{Jongsoo Park}, \bibinfo{person}{Hongbo Rong},
  \bibinfo{person}{Rajkishore Barik}, \bibinfo{person}{Michelle~Mills Strout},
  {and} \bibinfo{person}{Mary Hall}.} \bibinfo{year}{2016}\natexlab{}.
\newblock \showarticletitle{Automating Wavefront Parallelization for Sparse
  Matrix Computations}. In \bibinfo{booktitle}{\emph{Proceedings of the
  International Conference for High Performance Computing, Networking, Storage
  and Analysis}} (Salt Lake City, Utah) \emph{(\bibinfo{series}{SC '16})}.
  \bibinfo{publisher}{IEEE Press}, \bibinfo{address}{Piscataway, NJ, USA},
  Article \bibinfo{articleno}{41}, \bibinfo{numpages}{12}~pages.
\newblock
\showISBNx{978-1-4673-8815-3}
\urldef\tempurl%
\url{http://dl.acm.org/citation.cfm?id=3014904.3014959}
\showURL{%
\tempurl}


\bibitem[Venkat et~al\mbox{.}(2014)]%
        {10.1145/2581122.2544141}
\bibfield{author}{\bibinfo{person}{Anand Venkat}, \bibinfo{person}{Manu
  Shantharam}, \bibinfo{person}{Mary Hall}, {and}
  \bibinfo{person}{Michelle~Mills Strout}.} \bibinfo{year}{2014}\natexlab{}.
\newblock \showarticletitle{Non-Affine Extensions to Polyhedral Code
  Generation}. In \bibinfo{booktitle}{\emph{Proceedings of Annual IEEE/ACM
  International Symposium on Code Generation and Optimization}} (Orlando, FL,
  USA) \emph{(\bibinfo{series}{CGO '14})}. \bibinfo{publisher}{Association for
  Computing Machinery}, \bibinfo{address}{New York, NY, USA},
  \bibinfo{pages}{185–194}.
\newblock
\showISBNx{9781450326704}
\urldef\tempurl%
\url{https://doi.org/10.1145/2581122.2544141}
\showDOI{\tempurl}


\bibitem[Wendykier and Nagy(2010)]%
        {pc}
\bibfield{author}{\bibinfo{person}{Piotr Wendykier} {and}
  \bibinfo{person}{James~G. Nagy}.} \bibinfo{year}{2010}\natexlab{}.
\newblock \showarticletitle{Parallel Colt: A High-Performance Java Library for
  Scientific Computing and Image Processing}.
\newblock \bibinfo{journal}{\emph{TOMS}}  \bibinfo{volume}{37}
  (\bibinfo{date}{September} \bibinfo{year}{2010}).
\newblock
\urldef\tempurl%
\url{https://doi.org/10.1145/1824801.1824809}
\showDOI{\tempurl}


\bibitem[Wu et~al\mbox{.}(2011)]%
        {wu}
\bibfield{author}{\bibinfo{person}{Wei Wu}, \bibinfo{person}{Yi Shan},
  \bibinfo{person}{Xiaoming Chen}, \bibinfo{person}{Yu Wang}, {and}
  \bibinfo{person}{Huazhong Yang}.} \bibinfo{year}{2011}\natexlab{}.
\newblock \showarticletitle{FPGA Accelerated Parallel Sparse Matrix
  Factorization for Circuit Simulations}. In
  \bibinfo{booktitle}{\emph{Reconfigurable Computing: Architectures, Tools and
  Applications}}, \bibfield{editor}{\bibinfo{person}{Andreas Koch},
  \bibinfo{person}{Ram Krishnamurthy}, \bibinfo{person}{John McAllister},
  \bibinfo{person}{Roger Woods}, {and} \bibinfo{person}{Tarek El-Ghazawi}}
  (Eds.). \bibinfo{publisher}{Springer Berlin Heidelberg},
  \bibinfo{address}{Berlin, Heidelberg}, \bibinfo{pages}{302--315}.
\newblock
\showISBNx{978-3-642-19475-7}


\bibitem[Xie et~al\mbox{.}(2019)]%
        {Xie:2019:IIA:3330345.3330354}
\bibfield{author}{\bibinfo{person}{Zhen Xie}, \bibinfo{person}{Guangming Tan},
  \bibinfo{person}{Weifeng Liu}, {and} \bibinfo{person}{Ninghui Sun}.}
  \bibinfo{year}{2019}\natexlab{}.
\newblock \showarticletitle{IA-SpGEMM: An Input-aware Auto-tuning Framework for
  Parallel Sparse Matrix-matrix Multiplication}. In
  \bibinfo{booktitle}{\emph{Proceedings of the ACM International Conference on
  Supercomputing}} (Phoenix, Arizona) \emph{(\bibinfo{series}{ICS '19})}.
  \bibinfo{publisher}{ACM}, \bibinfo{address}{New York, NY, USA},
  \bibinfo{pages}{94--105}.
\newblock
\showISBNx{978-1-4503-6079-1}
\urldef\tempurl%
\url{https://doi.org/10.1145/3330345.3330354}
\showDOI{\tempurl}


\bibitem[{Zhuang} et~al\mbox{.}(2009)]%
        {5260538}
\bibfield{author}{\bibinfo{person}{X. {Zhuang}}, \bibinfo{person}{A.~E.
  {Eichenberger}}, \bibinfo{person}{Y. {Luo}}, \bibinfo{person}{K. {O'Brien}},
  {and} \bibinfo{person}{K. {O'Brien}}.} \bibinfo{year}{2009}\natexlab{}.
\newblock \showarticletitle{Exploiting Parallelism with Dependence-Aware
  Scheduling}. In \bibinfo{booktitle}{\emph{2009 18th International Conference
  on Parallel Architectures and Compilation Techniques}}.
  \bibinfo{pages}{193--202}.
\newblock
\urldef\tempurl%
\url{https://doi.org/10.1109/PACT.2009.10}
\showDOI{\tempurl}


\bibitem[Ziantz et~al\mbox{.}(1994)]%
        {10.1007/3-540-58184-7_111}
\bibfield{author}{\bibinfo{person}{Louis~H. Ziantz}, \bibinfo{person}{Can~C.
  {\"O}zturan}, {and} \bibinfo{person}{Boleslaw~K. Szymanski}.}
  \bibinfo{year}{1994}\natexlab{}.
\newblock \showarticletitle{Run-time optimization of sparse matrix-vector
  multiplication on SIMD machines}. In \bibinfo{booktitle}{\emph{PARLE'94
  Parallel Architectures and Languages Europe}},
  \bibfield{editor}{\bibinfo{person}{Costas Halatsis},
  \bibinfo{person}{Dimitrios Maritsas}, \bibinfo{person}{George Philokyprou},
  {and} \bibinfo{person}{Sergios Theodoridis}} (Eds.).
  \bibinfo{publisher}{Springer Berlin Heidelberg}, \bibinfo{address}{Berlin,
  Heidelberg}, \bibinfo{pages}{313--322}.
\newblock
\showISBNx{978-3-540-48477-6}


\end{thebibliography}

\end{document}